\newtheorem{lemma}{Lemma}
\newcommand{\ket}[1]{\left\vert#1\right\rangle}
\newcommand{\bra}[1]{\left\langle#1\right\vert}
\def\JJ#1#2{{\cal J}_{#1}^{#2}}
\def\bra#1{\langle #1|}
\def\ket#1{\left|#1 \right>}
\def\Tr{\mbox{Tr}}
\def\Lengr{{\cal L}_{Br}}
\def\Ddiss{{\cal D}_{\rm diss}}
\newcommand{\comm}[1]{}
\begin{document}
\title{Reversing Lindblad Dynamics via Continuous Petz Recovery Map}
\author{Hyukjoon Kwon}
\affiliation{QOLS, Blackett Laboratory, Imperial College London, London SW7 2AZ, United Kingdom}
\affiliation{Korea Institute for Advanced Study, Seoul 02455, South Korea}
\author{Rick Mukherjee}
\affiliation{QOLS, Blackett Laboratory, Imperial College London, London SW7 2AZ, United Kingdom}
\author{M. S. Kim}
\affiliation{QOLS, Blackett Laboratory, Imperial College London, London SW7 2AZ, United Kingdom}
\affiliation{Korea Institute for Advanced Study, Seoul 02455, South Korea}
\begin{abstract}
An important issue in developing quantum technology is that quantum states are so sensitive to noise. We propose a protocol that introduces reverse dynamics, in order to precisely control quantum systems against noise described by the Lindblad master equation. The reverse dynamics can be obtained by constructing the Petz recovery map in continuous time. By providing the exact form of the Hamiltonian and jump operators for the reverse dynamics, we explore the potential of utilizing the near-optimal recovery of the Petz map in controlling noisy quantum dynamics. While time-dependent dissipation engineering enables us to fully recover a single quantum trajectory, we also design a time-independent recovery protocol to protect encoded quantum information against decoherence. Our protocol can efficiently suppress only the noise part of dynamics thereby providing an effective unitary evolution of the quantum system.
\end{abstract}
\pacs{}
\maketitle

The dynamics of an open quantum system is defined by the Hamiltonian of the system and its interaction with the environment. On tracing out the environment, the system undergoes a non-unitary evolution which can easily wash out coherence. For the realization of quantum technologies, it is crucial to protect the system from leaking quantum information due to its interaction with the environment. A considerable amount of effort has been made to achieve this task through developing protocols to minimize added noise \cite{LaHaye04, Wilson-Rae04, Gigan06, Chang10}, finding noise-free zones \cite{Lidar98, Viola99, Facchi04}, and correcting \cite{Shor95, Bennett96, Laflamme96, Gottesman97, Barnum02, Kitaev06, Ng10, Fowler12, Cafaro14, NielsenAndChuang} or mitigating \cite{Temme17, Li17, Endo18} errors. Especially in quantum error-correction (QEC), a universal recovery operation, the so-called Petz recovery map \cite{Petz86} has served as a useful mathematical tool to study the recovery of quantum information \cite{Barnum02, Wilde15, Sutter16, Junge18} and state discrimination protocols \cite{Holevo79, Hausladen94}. Based on the near-optimal recovery property of the map \cite{Barnum02}, approximate QEC \cite{Barnum02, Ng10, Cafaro14} has been developed. However, due to its complexity, the Petz recovery map remains in the mathematical realm, while an approach to realize the discrete version of the map was proposed very recently \cite{Gilyen20}.

In this Letter, we construct a quantum master equation which realizes the Petz recovery map in continuous time. While the Petz recovery map recovers a given quantum state after following noisy dynamics, a physical protocol to achieve the recovery map was not previously known. Our master equation identifies the reverse Hamiltonian and the jump operators that can fully reverse a quantum trajectory. We extend this to design a time-independent recovery protocol and use it to protect quantum information against decoherence. The efficient noise cancellation leads to a noiseless unitary dynamics of the encoded system. The recovery dynamics can be implemented by interacting the system with a strongly decaying ancilla.

{\it Reversing quantum master equation dynamics.---}
We focus on a Markovian open quantum dynamics described by the Lindblad equation \cite{Lindblad76} ($\hbar = 1$):
\begin{equation}
\frac{d\rho}{dt} = {\cal L} (\rho) = - i [H, \rho] +  \sum_{\mu} {\cal D}[L_\mu]  (\rho),
\end{equation}
where ${\cal D}[L_\mu] (\rho) = L_\mu \rho L_\mu^\dagger - \frac{1}{2} \{ L_\mu^\dagger L_\mu, \rho\}$. Here, $[A,B] = AB - BA$ and $\{ A,B \} = AB + BA$. A quantum state $\gamma_0$ at time $t=0$ then evolves to $\gamma_\tau = {\cal T} \left[ \exp\left( \int_0^\tau {\cal L} dt \right) \right] (\gamma_0)$ after some time $t=\tau$, where ${\cal T}$ is the time-ordering operator. We ask whether it is possible to recover the quantum state at each time from the final state $\gamma_\tau$.

\begin{figure}[t]
\includegraphics[width=1.00\linewidth]{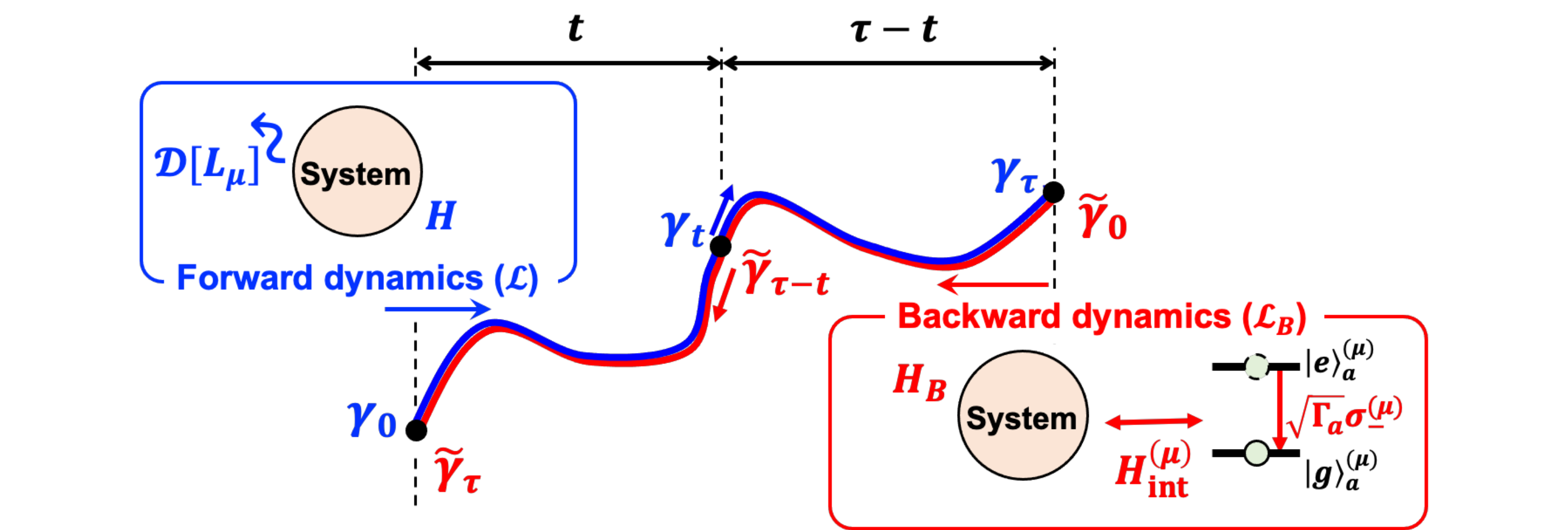}
\caption{
Forward (reverse) dynamics ${\cal L}$ (${\cal L}_B$) described by the Hamiltonian $H$ ($H_B$) and jump operators $L_\mu$ ($L_{B,\mu}$).
The reverse jump operators can be implemented by interaction Hamiltonian $H_{\rm int}^{(\mu)}$ between the system and a strongly decaying ancilla state. See the main text for details.}
\label{fig:1}
\end{figure}

To reconstruct the initial state $\gamma_0$ from the final state $\gamma_\tau$ one can adopt the Petz recovery map \cite{Petz86}, also known as the transpose channel. For a quantum channel ${\cal N}$ and a reference state $\rho$, the Petz recovery map defined as ${\cal R}_{\rho, {\cal N}} (\cdot) = \rho^{\frac{1}{2}} {\cal N}^\dagger ({\cal N}(\rho)^{-\frac{1}{2}}  (\cdot) {\cal N}(\rho)^{-\frac{1}{2}}) \rho^{\frac{1}{2}} $ recovers $\rho$ from ${\cal N}(\rho)$, i.e., 
\begin{equation}
{\cal R}_{\rho, {\cal N}} ({\cal N}(\rho)) = \rho.
\end{equation}
Such a property of the recovery map has been studied as a generalized time-reversal in various contexts, including quantum thermodynamics \cite{Faist18, Aberg18,Kwon19} and Bayesian retrodiction of quantum processes \cite{Buscemi21}. By taking a quantum channel ${\cal N}= {\cal T}\left[ \exp \left( \int_0^\tau {\cal L} dt' \right) \right]$ as the forward dynamics after time $\tau$ and the reference state $\gamma_0$, the Petz recovery map recovers $\gamma_0$ from $\gamma_\tau$, i.e., ${\cal R}_{\gamma_0, {\cal N}}(\gamma_\tau) = \gamma_0$.

Such a construction of the recovery map can be extended for any time $t \in [0, \tau]$,
based on the dynamical semigroup property of the Lindblad equation. The Petz recovery map 
in the limit of infinitesimal time interval can be described by the following Lindblad equation \cite{Kwon19}:
$$
\frac{d \rho}{d\tilde t} = {\cal L}_B ( \rho) = -i [H_B(\tilde t), \rho] + \sum_\mu {\cal D}[L_{B,\mu}(\tilde t)]( \rho).
$$
In this work, we show that the reverse 
dynamics can be expressed by the separate contributions of the forward Hamiltonian $H$ and jump operators $L_\mu$ as
\begin{equation}
\begin{aligned}
\label{eq:RevL}
H_B(\tilde t) &= -H + \sum_\mu H_C(\gamma_{\tau - \tilde t}, L_\mu)\\
L_{B,\mu}(\tilde t) &= \gamma_{\tau - \tilde t}^{\frac{1}{2}} L_{\mu}^\dagger \gamma_{\tau - \tilde t}^{-\frac{1}{2}},
\end{aligned}
\end{equation}
where the tilde indicates the backward direction and
$$
H_C(\gamma, L_\mu) = - \frac{i}{2}\sum_{\lambda, \lambda'} \left( \frac{\sqrt{\lambda}- \sqrt{\lambda'}}{\sqrt{\lambda} + \sqrt{\lambda'}} \right)\bra{\lambda} M_\mu(\gamma) \ket{ \lambda'} \ket{\lambda} \bra{\lambda'},
$$
using the eigenvalue decomposition $\gamma = \sum_{\lambda} \lambda \ket{\lambda}\bra{\lambda}$ and defining $M_\mu(\gamma) = L_{\mu}^\dagger L_{\mu} + \gamma^{-\frac{1}{2}} L_{\mu} \gamma L_{\mu}^\dagger \gamma^{-\frac{1}{2}}$.
If $\gamma$ contains zero-eigenvalues, pseudo-inverse on its support can be taken  \cite{Suppl}.
For a 
dissipation-free dynamics, the reverse dynamics takes the form ${\cal L}_B(\rho) = -i[H_B, \rho]$ with $H_B = -H$, while $L_{B,\mu}$ and $H_C$ contribute to reversing the dissipation ${\cal D}[L_\mu]$. 
The reverse dynamics fully recovers the quantum trajectory (see Fig. \ref{fig:1}), i.e.,
\begin{equation}
\tilde\gamma_{\tilde t = \tau - t} = {\cal T} \left[ e^{ \int_0^{\tau-t} {\cal L}_B d \tilde t'} \right] (\gamma_\tau) = \gamma_t, \quad \forall t \in [0, \tau],
\label{eq:rev_traj}
\end{equation}
as it satisfies ${\cal L}_B (\tilde\gamma_{\tilde t = \tau - t})  = -{\cal L} (\gamma_t)$. We note that the forward trajectory information $\gamma_{t}$ is required to construct the reverse dynamics. This can, in principle, be calculated from the initial state $\gamma_0$ and the forward dynamics ${\cal L}$, without performing state tomography. The explicit form of the reverse dynamics in Eq.~\eqref{eq:RevL} brings the abstract mathematical expression of the Petz recovery map into a physically achievable form, by identifying the Hamiltonian and jump operators.

As an illustrative example, we consider a two-level system whose dynamics is given by a Hamiltonian $H = \boldsymbol{h} \cdot \boldsymbol{\sigma}$ and a single jump operator $L = \boldsymbol{l} \cdot \boldsymbol{\sigma}$, where $\boldsymbol{\sigma} = (\sigma_x, \sigma_y, \sigma_z)$ are the Pauli operators, and $\boldsymbol{h}$ and $\boldsymbol{l}$ are real and complex vectors, respectively. Figure~\ref{fig:qubit} shows that the reverse dynamics with $H_B = \boldsymbol{h}_B \cdot \boldsymbol{\sigma}$ and $L_{B} = \boldsymbol{l}_{B} \cdot \boldsymbol{\sigma}$ obtained from Eq.~\eqref{eq:RevL} fully reverses the quantum trajectory for the initial state $\ket{0}$ such that $\sigma_z \ket{0} = \ket{0}$. This requires a temporal control of $\boldsymbol{h}_B$ and $\boldsymbol{l}_B$ with three and six independent parameters, respectively, where their closed forms and the implicit implementation of the jump operator can be found in the Supplemental Material \cite{Suppl}.

\begin{figure}[t]
\includegraphics[width=0.95\linewidth]{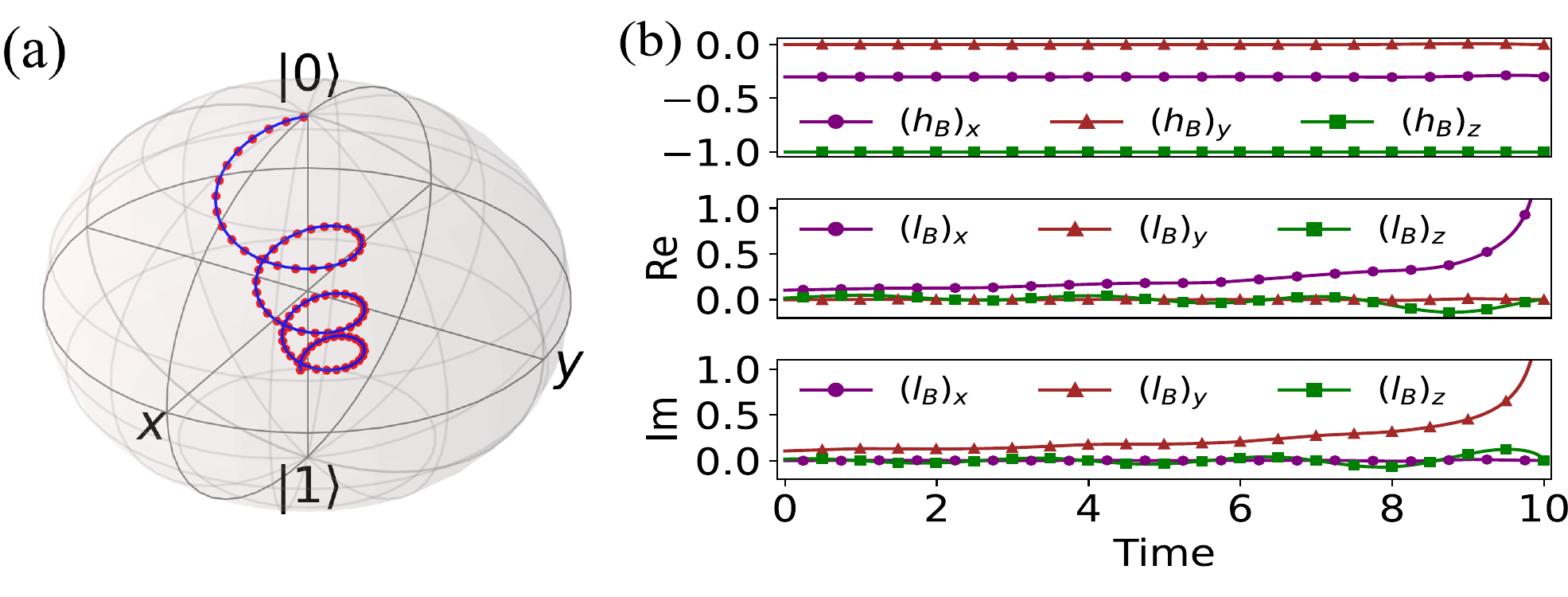}
\caption{(a) The forward (solid) and backward (dotted) trajectories of the qubit dynamics of $H = 0.3 \sigma_x + \sigma_z$, and $L = 0.4 \sigma_-$ with $\tau = 10$ and $\gamma_0 = \ket{0}\bra{0}$. (b) The Hamiltonian and jump operators for the reverse dynamics. }
\label{fig:qubit}
\end{figure}

{\it Continuous-time recovery with time-independent control.---}
To avoid the temporal control of multiple parameters which can be technically challenging, we extend the recovery protocol to have \textit{time-independent} Hamiltonian and jump operators. To this end, we consider the scenario where the forward and recovery dynamics simultaneously act on the system as,
\begin{equation}
{\cal L}_S(\rho) = \left( {\cal L} + {\cal L}_B \right) (\rho) = {\cal L}(\rho) -i [H_B, \rho] + \sum_\mu {\cal D}[ L_{B,\mu} ] (\rho),
\label{eq:rev_static}
\end{equation}
where $H_B = -H + \sum_\mu H_C(\gamma,L_\mu)$ and $L_{B,\mu} = \gamma^{\frac{1}{2}} L_{\mu}^\dagger \gamma^{-\frac{1}{2}}$ for a full-rank reference state $\gamma$. We note that $\gamma$ becomes a stationary state satisfying $\dot \gamma = {\cal L}_S(\gamma) = 0$. This can be understood as the infinitesimal time recovery ${\cal L}_B$ cancels out the effect of noise ${\cal L}$, hence trapping a fixed reference state $\gamma$ instead of reversing a trajectory $\gamma_t$.  As the time-independent formalism is less resource-intensive in its implementation, henceforth, we focus on the dynamics described by Eq.~\eqref{eq:rev_static} for the applications of our recovery protocol.


{\it Recovery of encoded quantum information.---}
While the Petz recovery map perfectly recovers the reference state, the map also enjoys the universal recovery property such that a wider spectrum of quantum states encoded in a higher-dimensional Hilbert space can be recovered close to the optimal rate \cite{Barnum02}.
In this manner, our continuous-time recovery protocol not only keeps the full-rank reference state static but also well protects any encoded quantum states, regardless of their ranks. This can be done by constructing the code space ${\cal C}$ spanned by the degenerate ground states of a Hermitian operator $Q$ and applying the reverse dynamics in Eq.~\eqref{eq:rev_static} with the following form of the reference state:
\begin{equation}
\gamma = \frac{e^{-\beta Q}}{{\rm Tr}\left[ e^{-\beta Q} \right ]}.
\label{eq:gamma}
\end{equation}
This form guarantees that $\gamma$ is full-ranked and becomes proportional to the projector onto the code space $P_{\cal C} = \sum_{\ket\psi \in {\cal C}} \ket\psi \bra \psi$ when $\beta \gg 1$, which was shown to be the reference state that efficiently preserves the code space's information \cite{Blume-Kohout08}.

\begin{figure}[t]
\includegraphics[width=1.0\linewidth]{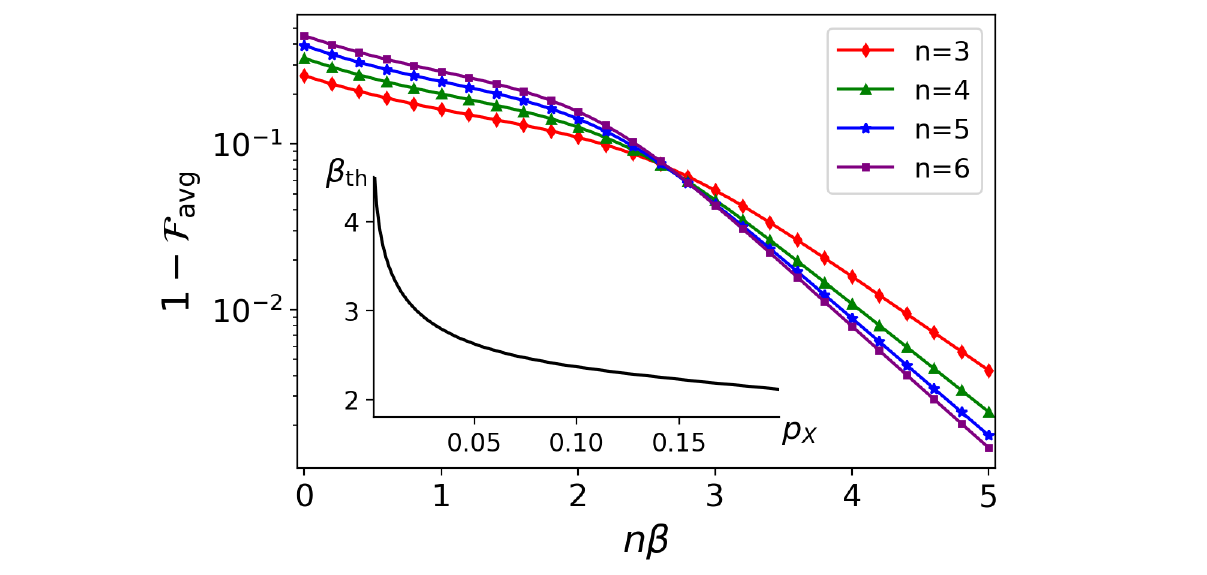}
\caption{The average infidelity $1- {\cal F}_{\rm avg}$ (main plot) and threshold value $\beta_{\rm th}$ (inset) of a single logical qubit after applying the recovery. For the main plot, the bit-flipping probability of each physical qubit is taken to be $p_X = 0.05$.}
\label{fig:spin_chain}
\end{figure}
As an example, we consider a fully-connected $n$-spin chain with $Q = -  \sum_{i > j} \sigma_z^{(i)} \sigma_z^{(j)}$, where $\sigma^{(i)}_{x,y,z}$ acts on the $i$th spin. A logical qubit is then spanned by the degenerate ground states of $Q$ as $\ket\psi = \alpha_0 \ket{0}^{\otimes n} + \alpha_1 \ket{1}^{\otimes n} \in {\cal C}$. We show that the logical qubit can be efficiently protected against the noise dynamics ${\cal L}_X = \Gamma_X \sum_{i=1}^n {\cal D}[\sigma_x^{(i)}]$, which is equivalent to independent bit-flipping errors on each qubit with probability $p_X = (1 - e^{-2\Gamma_X \tau})/2$ after time $\tau$. The recovery dynamics of Eq.~\eqref{eq:rev_static} is constructed by noting that $\gamma^{\frac{1}{2}}  \sigma_x^{(i)} \gamma^{-\frac{1}{2}} = \sigma_x^{(i)} \prod_{j \neq i} \left[ (\cosh \beta) \mathbb{1} - (\sinh \beta) \sigma_z^{(i)} \sigma_z^{(j)} \right]$ and $H_C(\gamma, \sigma_x^{(i)}) = 0$. Figure~\ref{fig:spin_chain} shows that the average fidelity ${\cal F}_{\rm avg} = \int_{\ket\psi \in {\cal C}} d\psi \bra{\psi} e^{{\cal L}_S \tau} (\ket\psi \bra\psi) \ket\psi $ increases as $\beta$ becomes larger. To obtain a larger value of $\beta$, stronger recovery is required, which can be captured by $\cosh \beta $ and $\sinh \beta$ terms in the jump operators. We also note that increasing the number of physical qubits $n$ provides a higher recovery rate when $n \beta$ exceeds a threshold ${\beta}_{\rm th}$. We observe that $\beta_{\rm th}$ becomes smaller when the bit-flipping rate $p_X$ increases (see Fig.~\ref{fig:spin_chain}), implying that a weak recovery dynamics can be effective for an intermediate noise level.

{\it Continuous recovery protocol for QEC codes.---}
The code space of the fully-connected spin can be understood using the stabilizer formalism in QEC. We further provide a general expression of the recovery protocol for any $[\![ n,k,d ]\!]$ stabilizer code, which encodes $k$-logical qubits into $n$-physical qubits with code distance $d$. Such a code construction is efficient for the Pauli-type dissipation ${\cal L} = \sum_\mu \Gamma_\mu {\cal D}[E_\mu]$ with $E_{\mu} \in \langle \sigma_x, \sigma_y, \sigma_z \rangle^n$. 
The code space ${\cal C}$ is spanned by a set of quantum states that commute with every element in a stabilizer ${\cal S}$. By noting that every state in ${\cal C}$ becomes a ground state of $Q = - \sum_{S_i \in \bar{\cal S}} S_i$ for a subset of the stabilizer $\bar{\cal S} \subset {\cal S}$, often referred to as the stabilizer Hamiltonian, the recovery dynamics becomes
\begin{equation}
{\cal L}_B = \sum_\mu \Gamma_\mu {\cal D}\bigg[E_{\mu} \prod_{S_i \in \bar{\cal S}_\mu} \left[ (\cosh \beta) \mathbb{1} - (\sinh{\beta}) S_i \right] \bigg],
\label{eq:Stab}
\end{equation}
where $\bar{\cal S}_\mu = \{ S_i \in \bar{\cal S} | \{S_i, E_\mu \} = 0 \}$. For $\beta \gg 1$ and $E_\mu$ in the correctable set, Eq.~\eqref{eq:Stab} can be interpreted as continuous syndrome measurements and corrections, which have been studied in the context of continuous QEC \cite{Paz98, Sarovar05, Pastawski11,  Lihm18} and experimentally realized in circuit QED \cite{Leghtas15, Touzard18, Gertler21}.

\begin{figure}[t]
\includegraphics[width=1.00\linewidth]{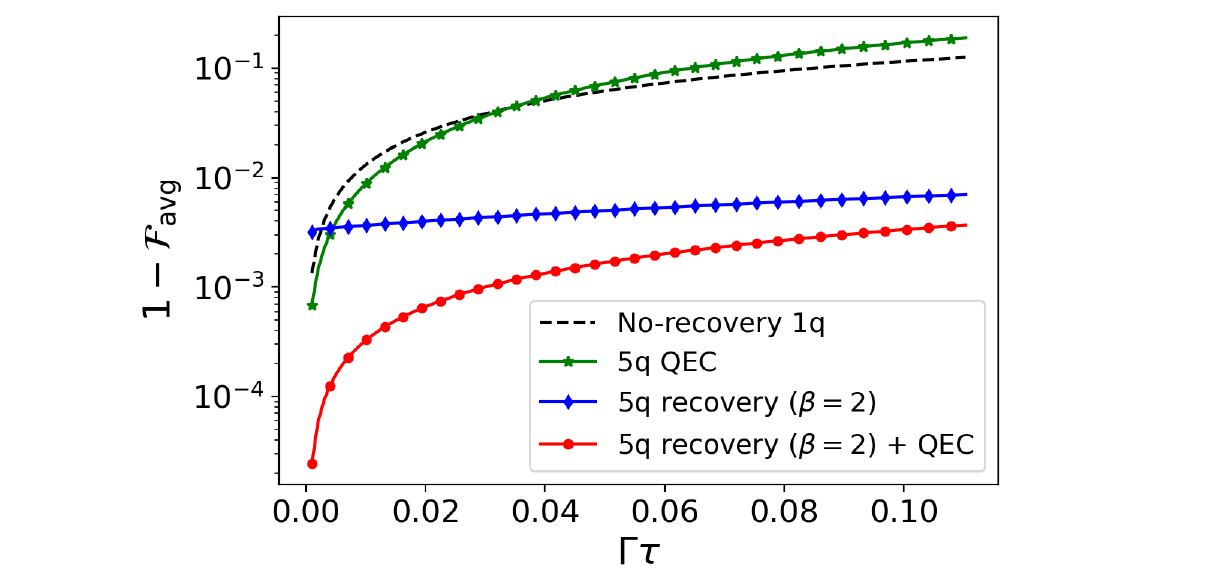}
\caption{Comparison between the average infidelities of the $[\![5,1,3]\!]$ code by applying the continuous recovery protocol and QEC. The noise is given with $\Gamma_X = \Gamma_Z = \Gamma$ and $\Gamma_{ZZ} = 0.2 \Gamma$ and the time is given in a unit of $1/\Gamma$. }
\label{fig:5q_fid}
\end{figure}

For example, we consider a noise model ${\cal L} = \sum_{i=1}^5 \left( \Gamma_X {\cal D}[\sigma_x^{(i)}] + \Gamma_Z {\cal D}[\sigma_z^{(i)}] +  \Gamma_{ZZ}{ \cal D}[\sigma_z^{(i)}\sigma_z^{(i+1)}] \right)$ and its recovery protocol applied to  the $[\![5,1,3]\!]$ code \cite{Bennett96, Laflamme96} by taking $Q = -\sum_{i=1}^5 \sigma_x^{(i)} \sigma_z^{(i+1)} \sigma_z^{(i+2)} \sigma_x^{(i+3)}$, where $\sigma_{x,z}^{(5l + j)} = \sigma_{x,z}^{(j)}$ for $l, j \in \mathbb{Z}$. Figure~\ref{fig:5q_fid} shows that the noise is suppressed by applying the recovery protocol, even for the correlated noise which cannot be directly handled with the $[\![5,1,3]\!]$ code. We also note that conventional syndrome-measurement-based QEC becomes more effective after suppressing noise via the continuous recovery protocol. For $\Gamma_X = \Gamma_Z = \Gamma$ and $\Gamma_{ZZ} = 0.2 \Gamma$, QEC is effective for all the time when the recovery is active. In contrast, without the recovery protocol, QEC is effective only for $\Gamma \tau \lesssim 0.03$ (see Fig.~\ref{fig:5q_fid}). This shows that the continuous recovery can aid QEC by reducing the noise level below the threshold, as it can be applied to more complicated stabilizer codes \cite{Kitaev06, Fowler12}.

Meanwhile, our approach is not limited to the stabilizer code or Pauli-type dissipation. From any code space ${\cal C}$ one can take $Q = -P_{\cal C}$, instead of the stabilizer Hamiltonian, to construct the recovery protocol. In the limit of $\beta \gg 1$, we obtain ${\cal L}_B \approx e^{\beta} \sum_\mu {\cal D} [ P_{\cal C} L_\mu^\dagger \left( \mathbb{1} - P_{\cal C} \right)]$, which can be understood as a continuous-time quantum jump from outside of the code space $(\mathbb{1} - P_{\cal C})$ to the code space $(P_{\cal C})$.  Such a construction is useful to achieve efficient protection of quantum states by noise-specific encoding. Other noise models, including amplitude damping, with various types of the code space construction are discussed in the Supplemental Material \cite{Suppl}.

For both recovery protocols based on stabilizer and general code spaces, the reference state $\gamma$ does not need to be prepared to implement the recovery dynamics as it can be fully determined only from $Q$ and $\beta$. The average error is governed by the factor $\beta$, which can be interpreted as the inverse temperature by noting that the most states remain at the ground state, i.e., the code space at the low temperature ($\beta \gg 1$).  In this limit, the code space ${\cal C}$ becomes a decoherence-free subspace of the reverse dynamics ${\cal L}_B$, and at the same time, ${\cal L}_B$ continuously brings quantum states outside the code space back to the code space. From the similarity appearing in the stationary state in Eq.~\eqref{eq:gamma}, the recovery map can be compared to self-correcting quantum memory \cite{Dennis02, Alicki10, Bravyi13}. 
However, we highlight that our recovery dynamics in Eq.~\eqref{eq:rev_static} meets the quantum detailed balance relation \cite{Alhambra17} even without the Davies map condition \cite{Davies74}.

The recovery protocol not only provides a longer lifetime of logical qubits, but also can be utilized to simulate noise-free dynamics of a logical Hamiltonian $H_L(t)$ acting on the code space. In the following dynamics,
\begin{equation}
\dot \rho= -i [H_L(t), \rho] + {\cal L} (\rho) + {\cal L}_B(\rho),
\label{eq:eff_U}
\end{equation}
the recovery ${\cal L}_B$ continuously cancels out the noise ${\cal L}$, thereby a quantum state $\rho_{\cal C}$ in the code space evolves effectively unitarily, $\dot\rho_{\cal C} \approx -i [H_L(t), \rho_{\cal C}]$. Figure~\ref{fig:5q_H} shows the effective unitary dynamics of the $[\![5,1,3]\!]$ code under Eq.~\eqref{eq:eff_U}, where the logical operations are defined as $X_L = \bigotimes_{i=1}^5 \sigma_x^{(i)}$ and $Z_L = \bigotimes_{i=1}^5 \sigma_z^{(i)}$. We note that the same recovery protocol is applied to any initial states in the code space.
\begin{figure}[t]
\includegraphics[width=1.0\linewidth]{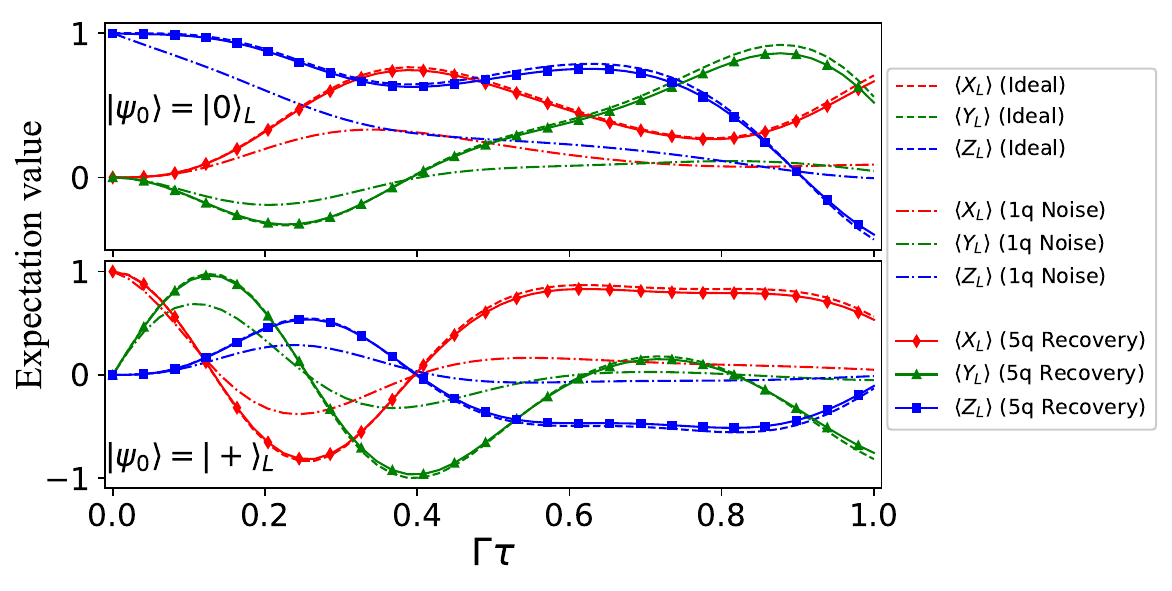}
\caption{Simulating the dynamics of the $[\![5,1,3]\!]$ code under a time-dependent Hamiltonian $H_L(t) = 3 \sin(5t) X_L + 6 \cos(2t) Z_L$. The initial states are prepared in $\ket{0}_L$ (top) and $\ket{+}_L = (\ket{0}_L + \ket{1}_L)/\sqrt{2}$ (bottom), respectively, and the same recovery operation ${\cal L}_B$ with $\beta = 2$ is applied to both cases without QEC. The noise model is the same as in Fig.~\ref{fig:5q_fid}.}
\label{fig:5q_H}
\end{figure}

{\it Implementation of the reverse dynamics and resource analysis.---}
The dissipation engineering for the continuous recovery map can be done by introducing a strongly decaying ancillary system. Assuming a two-level ancilla state, the jump operators $L_{B,\mu}$ can be implemented throughout the system-ancilla interaction,
\begin{equation}
H_{\rm int}^{(\mu)}= \frac{1}{2} \left( L_{B,\mu}^\dagger \otimes \sigma_-^{(\mu)} + L_{B,\mu} \otimes \sigma_+^{(\mu)} \right),
\end{equation}
and the dissipation of the ancilla $\Gamma_a {\cal D} [\sigma_-^{(\mu)}]$. In the limit of strong dissipation, one can adiabatically eliminate the excited states of the ancilla so that the system's effective dynamics can be well approximated as $ (1/\Gamma_a){\cal D}  [L_{{B},\mu}]$  \cite{Verstraete09}.  Thus, by tuning the Hamiltonian of the total system to be $H_{sa} = H_B +  \sqrt{\Gamma_a} \sum_\mu H_{\rm int}^{(\mu)}$, the dynamics
\begin{equation}
{\cal L}_{sa} (\rho_{sa}) = -i [H_{sa}, \rho_{sa}] +  \Gamma_a \sum_\mu {\cal D}[\sigma_-^{(\mu)}] (\rho_{sa}),
\end{equation}
leads to the effective system dynamics ${\cal L}_B$ after tracing out the ancilla system. The strength of the engineered dissipation can be expressed in terms of a dimensionless parameter $\| H_{sa} \|^2/(\Gamma_a \Gamma_\mu)$ \cite{Lihm18}, where $\Gamma_\mu = \| L_\mu \|^2$ is the intrinsic dissipation strength of the system. In order to engineer the dissipation successfully, we need to satisfy $\Gamma_a \gg \| H_{sa} \| \gg \sqrt{\Gamma_a \Gamma_\mu}$. This condition can be achieved in ultracold and circuit QED systems. In particular with Rydberg systems, both the strength of the system-ancilla interaction and the decay rate of an effective two-level system can be controlled either by choosing a specific principal quantum number or by applying the concept of  Rydberg dressing \cite{Mukherjee16, Zeiher16}. This flexibility can provide a whole range of values for $\| H_{sa} \|\sim$ MHz to GHz while $\Gamma_{a},\Gamma_{\mu} \sim$ kHz to tens MHz \cite{Saffman10}.

For the qubit system considered in Fig.~\ref{fig:qubit}, $H_{\rm int}^{(\mu)}$ requires two-qubit Pauli operations between the system and ancilla. One way to construct a Hamiltonian with universal two-body interactions would be to use superconducting qubits \cite{Kapit15}. Another possible route would be to use ultracold systems. Examples of engineering general two-body spin interactions involve cold atoms~\cite{Hung16}, ions \cite{Kim10}, and anisotropic interactions between Rydberg atoms \cite{Glaetzle15} or polar molecules \cite{ Wall}. On the other hand, characterizing the noise and engineering of jump operators of a multi-qubit system might be a challenging task to achieve the continuous recovery protocol.
Nevertheless, recent progress on dissipation engineering \cite{Verstraete09, Morigi15, Horn18, Damanet19} could lead to resolving technical challenges. 

The total number of qubits required to achieve the reverse dynamics for $n$-physical qubits is $n + n_a$, where $n_a$ is the number of ancilla qubits to implement the reverse jump operators. We note that $n_a$ equals the number of jump operators describing noise acting on $n$-physical qubits, no matter local or correlated. For a local noise model, $n_a$ linearly scales with $n$, which is comparable to the number of ancilla qubits required for syndrome measurements in the standard QEC. When noise at each physical qubit has pairwise correlations with at most $\ell$ other qubits, $n_a$ scales no more than ${\cal O}(n \ell)$. In addition, the time-independent recovery protocol has the advantage that it does not require system controls conditioned on the syndrome measurement outcomes, as well as additional classical computation for diagnosing the errors from the syndromes.

{\it Remarks.---}
We have shown that Lindblad dynamics can be reversed by constructing the Petz recovery map in continuous time. We have provided an explicit form of the Hamiltonian and jump operators as well as a possible route for physical realization of such dynamics throughout the adiabatic elimination technique. As an application, we have shown that the continuous recovery protocol can be designed for QEC, which provides a high recovery rate of encoded quantum information against noisy environment.

Our recovery protocol can be applied to implementing a noiseless quantum gate \cite{Sun20, Lau21} and dynamical quantum noise canceling \cite{Tsaang10}, which might be feasible for small scale noisy quantum devices \cite{Preskill18}. This will open a new possibility to utilize the near-optimal recovery property of the Petz recovery map not only in approximate QEC \cite{Barnum02,Ng10, Cafaro14} and quantum communication \cite{Beigi16}, but also revealing fundamental physics in quantum thermodynamics \cite{Faist18,Aberg18, Kwon19} and the AdS/CFT correspondence \cite{Cotler19, Jia20}. Our formalism is limited to Markovian noise but can be further generalized to a noise model with time-dependent jump operators. An interesting future research would be exploring whether this formalism can be extended to non-Markovian dynamics, such as $1/f$ noise in superconducting qubits \cite{Simmonds04, Koch07}.

\begin{acknowledgments}
This work is supported by the KIST Open Research Program, the QuantERA ERA-NET within the EU Horizon 2020 Programme, and the UK Hub in Quantum Computing and Simulation, part of the UK National Quantum Technologies Programme with funding from UKRI EPSRC Grants No. EP/T001062/1 and No. EP/R044082/ 1. H.K. is supported by the KIAS Individual Grant No. CG085301 at Korea Institute for Advanced Study (KIAS). M. S. K. acknowledges the KIAS visiting professorship for support.
\end{acknowledgments}

\newpage
\widetext

\section{Supplemental Material}

\section{I. Constructing Continuous Petz recovery map}
\subsection{A. Obtaining the Lindblad equation for reverse dynamics}
We briefly sketch how the continuous Petz recovery map can be expressed in a Lindblad equation,
$$
\frac{d \rho}{d\tilde t} = {\cal L}_B ( \rho) = -i [H_B(\tilde t), \rho] + \sum_\mu {\cal D}[L_{B,\mu}(\tilde t)]( \rho),
$$
by following Ref.~\cite{Kwon19}. Here, the Hamiltonian $H_B$ and the jump operators $L_{B,\mu}$ are given as
\begin{equation}
\begin{aligned}
\label{eq:supp_RevL}
H_B(\tilde t) &= \left. -\frac{1}{2}  \gamma_t^{-\frac{1}{2}} \left( K - i \partial_t \right) \gamma_t^{\frac{1}{2}}  + {\rm h.c.} \right |_{t = \tau - \tilde t}\\
L_{B,\mu}(\tilde t) &= \left. \gamma_t^{\frac{1}{2}} L_{\mu}^\dagger \gamma_t^{-\frac{1}{2}} \right |_{t = \tau - \tilde t},
\end{aligned}
\end{equation}
where the backward time is defined as $\tilde t = \tau - t$ and $K = H - (i/2) \sum_{\mu} L_{\mu}^\dagger L_{\mu}$.  Let us consider a quantum channel that reconstructs $\gamma_0$ from $\gamma_\tau$. For a given quantum channel ${\cal N}$ that maps a quantum state $\rho$ to ${\cal N}(\rho)$, the Petz recovery map is a quantum channel defined as
\begin{equation}
{\cal R}_{\rho,{\cal N}} = \JJ{\rho}{\frac{1}{2}} \circ {\cal N}^\dagger \circ \JJ{ {\cal N}(\rho)}{-\frac{1}{2}},
\end{equation}
where $\JJ{A}{\alpha} (B) = A^{\alpha} B \left(A^{\dagger}\right)^{ \alpha^*}$ is so-called the rescaling operation. The Petz map fully recovers the reference state $\rho$ from ${\cal N}(\rho)$, i.e., ${\cal R}_{\rho, {\cal N}} ({\cal N}(\rho)) = \rho$. By taking ${\cal N} = {\cal T}\left[ e^{ \int_0^\tau {\cal L} dt} \right]$ to be a quantum channel describing the quantum Liouville dynamics after time $\tau$ and $\gamma_0$ to be the reference state, the Petz recovery map ${\cal R}_{\gamma_0, {\cal N}}$ leads to ${\cal R}_{\gamma_0, {\cal N}} (\gamma_\tau) = \gamma_0$. We then construct a recovery map at each time $t$ by dividing the Petz recovery map into multiple steps: 
$$
{\cal R}_{\gamma_0, {\cal N}} =\JJ{\gamma_{0}}{\frac{1}{2}} \circ {\cal T} \left[ e^{\int_0^\tau {\cal L}^\dagger dt } \right] \circ \JJ{\gamma_{\tau}}{-\frac{1}{2}} = \left( \JJ{\gamma_{0}}{\frac{1}{2}} \circ {\cal T} \left[ e^{ \int_0^{\Delta t}{\cal L}^\dagger dt } \right] \circ \JJ{\gamma_{\Delta t}}{-\frac{1}{2}} \right) \circ \cdots \circ \left( \JJ{\gamma_{\tau - \Delta t}}{\frac{1}{2}} \circ \left[ e^{ \int_{\tau - \Delta t}^\tau{\cal L}^\dagger dt } \right] \circ \JJ{\gamma_{\tau}}{-\frac{1}{2}} \right).
$$
It is important to note that $\left( \JJ{\gamma_{n \Delta t}}{\frac{1}{2}} \circ {\cal T} \left[ e^{ \int_{n\Delta t}^{(n+1)\Delta t} {\cal L}^\dagger dt } \right] \circ \JJ{\gamma_{(n+1) \Delta t}}{-\frac{1}{2}} \right) $ becomes the Petz recovery map for each time interval $t \in [n\Delta t, (n+1) \Delta t]$. By taking the limit $\Delta t \rightarrow 0$ at time $t = n\Delta t$,
$$
\left( \JJ{\gamma_{n \Delta t}}{\frac{1}{2}} \circ {\cal T} \left[ e^{ \int_{n\Delta t}^{(n+1)\Delta t} {\cal L}^\dagger dt } \right] \circ \JJ{\gamma_{(n+1) \Delta t}}{-\frac{1}{2}} \right) \xrightarrow{\Delta t \rightarrow 0} \left( \JJ{\gamma_{t}}{\frac{1}{2}} \circ e^{ {\cal L}^\dagger dt }\circ \JJ{\gamma_{t+ dt}}{-\frac{1}{2}} \right) = e^{ {\cal L}_B dt + {\cal O}(dt^2)},
$$
where 
$$
\begin{aligned}
{\cal L}_B (\bullet) dt &= -i \left( \gamma_t^\frac{1}{2} H  \gamma_{t+dt}^{-\frac{1}{2}} \bullet  \gamma_{t+dt}^{-\frac{1}{2}}  \gamma_t^\frac{1}{2}  - \gamma_t^\frac{1}{2} \gamma_{t+dt}^{-\frac{1}{2}} \bullet  \gamma_{t+dt}^{-\frac{1}{2}}  H \gamma_t^\frac{1}{2} \right) dt  \\
&\quad +\sum_\mu \left[ \gamma_t^\frac{1}{2} L_\mu^\dagger \gamma_{t+dt}^{-\frac{1}{2}} \bullet \gamma_{t+dt}^{-\frac{1}{2}} L_\mu \gamma_t^\frac{1}{2}- \frac{1}{2} \gamma_t^\frac{1}{2} L_\mu^\dagger L_\mu \gamma_{t+dt}^{-\frac{1}{2}} \bullet \gamma_{t+dt}^{-\frac{1}{2}}  \gamma_t^\frac{1}{2} - \frac{1}{2} \gamma_t^\frac{1}{2}  \gamma_{t+dt}^{-\frac{1}{2}} \bullet \gamma_{t+dt}^{-\frac{1}{2}} L_\mu^\dagger L_\mu \gamma_t^\frac{1}{2} \right] dt.
\end{aligned}
$$
By defining $\tau - n \Delta t = \tau- t = \tilde t$ and expressing the formula to the first order of $dt$ as
$$
{\cal L}_B (\bullet) dt = \left( -i [H_B(\tilde t), \bullet ] + \sum_\mu {\cal D}[L_{B,\mu}(\tilde t)]( \bullet) \right) dt+ {\cal O}(dt^2),
$$
we obtain the reverse Lindblad equation in Eq.~\eqref{eq:supp_RevL}. The reverse dynamics fully recovers the forward trajectory $\gamma_t$ as $\tilde \gamma_{\tilde t = \tau - t} = \gamma_t$ and ${d \tilde\gamma_{\tilde t}} / {d \tilde t}|_{\tilde t = \tau - t} =  {\cal L}_B (\tilde\gamma_{\tilde t})|_{\tilde t = \tau - t} = -{\cal L}(\gamma_{t}) = - d \gamma_t / dt$ for all $t \in [0, \tau]$.  A detailed derivation of Eq.~\eqref{eq:supp_RevL} can be found in \cite{Kwon19}.

\subsection{B. Alternative form of the recovery Hamiltonian}
In this section, we show the one of our main results that the recovery Hamiltonian can be expressed in an alternative form,
\begin{equation}
\begin{aligned}
H_B(\tilde t) &= -H + \sum_\mu H_C(\gamma_{\tau - \tilde t}, L_\mu)\\
&= - H - \left. \frac{i}{2} \sum_\mu \sum_{\lambda_t, \lambda'_t} \left( \frac{\sqrt{\lambda_t}- \sqrt{\lambda'_t}}{\sqrt{\lambda_t} + \sqrt{\lambda'_t}} \right)\bra{\lambda_t} M_\mu(\gamma_t) \ket{ \lambda'_t} \ket{\lambda_t} \bra{\lambda'_t} \right|_{t = \tau - \tilde t},
\end{aligned}
\label{eq:Supp_1}
\end{equation}
where $\gamma_t = \sum_{\lambda_t} \lambda_t \ket{\lambda_t}\bra{\lambda_t}$ and $M_\mu(\gamma) = L_{\mu}^\dagger L_{\mu} + \gamma^{-\frac{1}{2}} L_{\mu} \gamma L_{\mu}^\dagger \gamma^{-\frac{1}{2}}$. We derive this from Eq.~\eqref{eq:supp_RevL} as follows:
\begin{proof} In the proof, we shall work on the forward time $t = \tau - \tilde t$, which leads to a simpler expression of $\gamma_{\tau - \tilde t} = \gamma_t$ in Eq.~\eqref{eq:Supp_1}. Let us start with the following identity,
$$
{\cal L} (\gamma_t) = \partial_t \left( \gamma_t^{\frac{1}{2}} \gamma_t^{\frac{1}{2}} \right)= \gamma_t^{\frac{1}{2}} \left( \partial_t  \gamma_t^{\frac{1}{2}} \right) +   \left( \partial_t  \gamma_t^{\frac{1}{2}} \right) \gamma_t^{\frac{1}{2}}.
$$
We then express the time-derivative term as
$$
\partial_t  \gamma_t^{\frac{1}{2}} = \sum_{\lambda_t,\lambda'_t} \left( \frac{\bra{\lambda_t} {\cal L} (\gamma_t) \ket{\lambda'_t}}{\sqrt{\lambda_t} + \sqrt{\lambda'_t}}\right) \ket{\lambda_t} \bra{\lambda'_t}.
$$
Here, for mathematical simplicity, we assume that $\gamma_t$ is a full rank matrix. However, even if $\gamma_t$ contains zero-eigenvalues, we can still restrict the recovery dynamics to be acting on the subspace spanned by the eigenstates of $\gamma_t$ with non-zero eigenvalues $\lambda_t$. In this case, the recovery Hamiltonian and jump operators can be redefined to be $H_B \rightarrow \Pi_{\gamma_t} H_B \Pi_{\gamma_t}$ and $L_{B,\mu} \rightarrow \Pi_{\gamma_t} L_{B,\mu} \Pi_{\gamma_t}$, where $\Pi_{\gamma_t} = \sum_{\lambda_t \neq 0} \ket{\lambda_t} \bra{\lambda_t}$ is the projector onto the subspace spanned by the eigenstates of $\gamma_t$ having non-zero eigenvalues. We then decompose $\bra{\lambda_t} {\cal L} (\gamma_t) \ket{\lambda'_t}$ using the master equation as
\begin{equation}
\begin{aligned}
\bra{\lambda_t} {\cal L} (\gamma_t) \ket{\lambda'_t} &= \bra{\lambda_t} \left( -i [H ,\gamma_t] + \sum_\mu L_\mu\gamma_t L_\mu^\dagger - \frac{1}{2}  \sum_\mu L_\mu^\dagger L_\mu \gamma_t -  \frac{1}{2} \gamma_t \sum_\mu L_\mu^\dagger L_\mu \right) \ket{\lambda'_t}\\
&= i \left( \lambda_t - \lambda'_t \right) \bra{\lambda_t} H  \ket{\lambda_t'} +  \bra{\lambda_t} \left( \sum_\mu L_\mu\gamma_t L_\mu^\dagger \right) \ket{\lambda_t'} - \frac{1}{2} \left( \lambda_t + \lambda'_t \right) \bra{\lambda_t} \left( \sum_\mu L_\mu^\dagger L_\mu \right) \ket{\lambda_t'} .
\end{aligned}
\label{eq:Supp_Lind}
\end{equation}
We then split the reverse Hamiltonian in Eq.~\eqref{eq:supp_RevL} into three parts as
$$
\begin{aligned}
H_B(\tilde t)
&= -\frac{1}{2}  \gamma_t^{-\frac{1}{2}} \left( H - \frac{i}{2} \sum_\mu L_\mu^\dagger L_\mu  - i \partial_t \right) \gamma_t^{\frac{1}{2}}  + {\rm h.c.}\\
&= \left( -\frac{1}{2} \gamma_t^{-\frac{1}{2}} H \gamma_t^{\frac{1}{2}} + {\rm h.c.} \right) + \left( \frac{i}{4} \gamma_t^{-\frac{1}{2}}  \sum_\mu L_\mu^\dagger L_\mu \gamma_t^{\frac{1}{2}} + {\rm h.c} \right) +  \left( \frac{i}{2} \gamma_t^{-\frac{1}{2}} \partial_t \gamma_t^{\frac{1}{2}} + {\rm h.c.} \right),
\end{aligned}
$$
where the first two terms can be expressed in terms of the eigenstate $\{ \ket{\lambda_t} \}$ as
\begin{equation}
\begin{aligned}
-\frac{1}{2} \gamma_t^{-\frac{1}{2}} H \gamma_t^{\frac{1}{2}} + {\rm h.c.} 
 &= -\frac{1}{2} \sum_{\lambda_t, \lambda_t'} \left( \sqrt{\frac{\lambda'_t}{\lambda_t}} +  \sqrt{\frac{\lambda_t}{\lambda'_t}} \right) \bra{\lambda_t} H  \ket{\lambda_t'} \ket{\lambda_t} \bra{\lambda'_t}\\
\frac{i}{4} \gamma_t^{-\frac{1}{2}} \sum_\mu  L_\mu^\dagger L_\mu \gamma_t^{\frac{1}{2}} + {\rm h.c}
&= \frac{i}{4} \sum_\mu \sum_{\lambda_t, \lambda_{t'}} \left( \sqrt{\frac{\lambda'_t}{\lambda_t}} -  \sqrt{\frac{\lambda_t}{\lambda'_t}} \right)  \bra{\lambda_t} L_\mu^\dagger L_\mu  \ket{\lambda_t'} \ket{\lambda_t} \bra{\lambda'_t}.
\end{aligned}
\label{eq:Suppl_HH}
\end{equation}
The last term can  be written as
\begin{equation}
\begin{aligned}
\frac{i}{2} \gamma_t^{-\frac{1}{2}} \partial_t \gamma_t^{\frac{1}{2}} + {\rm h.c.} 
&= \frac{i}{2} \sum_{\lambda_t, \lambda_{t'}} \left( \frac{1}{\sqrt{\lambda_t}} -  \frac{1}{\sqrt{\lambda'_t}}  \right) \left( \frac{\bra{\lambda_t} {\cal L} (\gamma_t) \ket{\lambda'_t}}{\sqrt{\lambda_t} + \sqrt{\lambda'_t}}\right) \ket{\lambda_t} \bra{\lambda'_t}\\
&= \frac{i}{2} \sum_{\lambda_t, \lambda_{t'}} \left( \frac{\sqrt{\lambda'_t} - \sqrt{\lambda_t}}{\sqrt{\lambda_t \lambda'_t} (\sqrt{\lambda_t} + \sqrt{\lambda'_t}) } \right) \bra{\lambda_t} {\cal L} (\gamma_t) \ket{\lambda'_t} \ket{\lambda_t} \bra{\lambda'_t}\\
&= -\frac{1}{2}  \sum_{\lambda_t, \lambda_{t'}} (\lambda_t - \lambda'_t) \left( \frac{\sqrt{\lambda'_t} - \sqrt{\lambda_t}}{\sqrt{\lambda_t \lambda'_t} (\sqrt{\lambda_t} + \sqrt{\lambda'_t}) } \right) \bra{\lambda_t} H  \ket{\lambda_t'} \ket{\lambda_t} \bra{\lambda'_t} \\
&\quad +\frac{i}{2}  \sum_\mu \sum_{\lambda_t, \lambda_{t'}} \left( \frac{\sqrt{\lambda'_t} - \sqrt{\lambda_t}}{\sqrt{\lambda_t \lambda'_t} (\sqrt{\lambda_t} + \sqrt{\lambda'_t}) } \right) \bra{\lambda_t} L_\mu \gamma_t L_\mu^\dagger  \ket{\lambda_t'} \ket{\lambda_t} \bra{\lambda'_t}\\
&\quad - \frac{i}{4} \sum_\mu \sum_{\lambda_t, \lambda_{t'}} (\lambda_t + \lambda'_t) \left( \frac{\sqrt{\lambda'_t} - \sqrt{\lambda_t}}{\sqrt{\lambda_t \lambda'_t} (\sqrt{\lambda_t} + \sqrt{\lambda'_t}) } \right) \bra{\lambda_t} L_\mu^\dagger L_\mu  \ket{\lambda_t'} \ket{\lambda_t} \bra{\lambda'_t} ,
\end{aligned}
\label{eq:Suppl_gamma_dt}
\end{equation}
where the last expression can be obtained by substituting Eq.~\eqref{eq:Supp_Lind}. By collecting the terms containing $H$ in Eq.~\eqref{eq:Suppl_HH} and Eq.~\eqref{eq:Suppl_gamma_dt}, we obtain
\begin{equation}
\begin{aligned}
&- \frac{1}{2} \sum_{\lambda_t, \lambda'_t} \left [ \left( \sqrt{\frac{\lambda'_t}{\lambda_t}} +  \sqrt{\frac{\lambda_t}{\lambda'_t}} \right) +
(\lambda_t - \lambda'_t) \left( \frac{\sqrt{\lambda'_t} - \sqrt{\lambda_t}}{\sqrt{\lambda_t \lambda'_t} (\sqrt{\lambda_t} + \sqrt{\lambda'_t}) } \right)
 \right]\bra{\lambda_t} H  \ket{\lambda_t'} \ket{\lambda_t} \bra{\lambda'_t} \\
&\qquad = - \sum_{\lambda_t, \lambda'_t} \bra{\lambda_t} H  \ket{\lambda_t'} \ket{\lambda_t} \bra{\lambda'_t}\\
&\qquad = -H .
\end{aligned}
\label{eq:Supp_HRev1}
\end{equation}
Similarly, we the second term involving $L_\mu \gamma_t L_\mu^\dagger$ in Eq.~\eqref{eq:Suppl_gamma_dt} can be written as
\begin{equation}
\begin{aligned}
&\frac{i}{2} \sum_\mu  \sum_{\lambda_t, \lambda'_t} \left( \frac{\sqrt{\lambda'_t} - \sqrt{\lambda_t}}{\sqrt{\lambda_t}\sqrt{\lambda'_t} (\sqrt{\lambda_t} +\sqrt{\lambda'_t})}  \right) \bra{\lambda_t} L_\mu\gamma_t L_\mu^\dagger \ket{\lambda_t'} \ket{\lambda_t} \bra{\lambda'_t}\\
&\qquad = -\frac{i}{2} \sum_\mu  \sum_{\lambda_t, \lambda'_t} \left( \frac{\sqrt{\lambda_t} - \sqrt{\lambda'_t}}{\sqrt{\lambda_t} +\sqrt{\lambda'_t}}  \right) \bra{\lambda_t} \gamma_t^{-\frac{1}{2}} L_\mu\gamma_t L_\mu^\dagger \gamma_t^{-\frac{1}{2}} \ket{\lambda_t'} \ket{\lambda_t} \bra{\lambda'_t}.
\end{aligned}
\label{eq:Supp_HRev3}
\end{equation}
Finally, we can combine the terms involving $\sum_\mu L_\mu^\dagger L_\mu$ in Eq.~\eqref{eq:Suppl_HH} and Eq.~\eqref{eq:Suppl_gamma_dt} to get
\begin{equation}
\begin{aligned}
& \frac{i}{4} \sum_\mu  \sum_{\lambda_t, \lambda'_t} \left [ \left( \sqrt{\frac{\lambda'_t}{\lambda_t}} -  \sqrt{\frac{\lambda_t}{\lambda'_t}} \right) -  (\lambda_t + \lambda'_t) \left( \frac{\sqrt{\lambda'_t} - \sqrt{\lambda_t}}{\sqrt{\lambda_t \lambda'_t} (\sqrt{\lambda_t} + \sqrt{\lambda'_t}) } \right) \right]\bra{\lambda_t} L_\mu^\dagger L_\mu \ket{\lambda_t'} \ket{\lambda_t} \bra{\lambda'_t} \\
&\qquad = - \frac{i}{2} \sum_\mu \sum_{\lambda_t, \lambda'_t}  \left( \frac{\sqrt{\lambda_t} - \sqrt{\lambda'_t}}{\sqrt{\lambda_t} +\sqrt{\lambda'_t}}\right) \bra{\lambda_t} L_\mu^\dagger L_\mu \ket{\lambda_t'} \ket{\lambda_t} \bra{\lambda'_t}.
\end{aligned}
\label{eq:Supp_HRev2}
\end{equation}

Adding all the contributions from Eqs.~\eqref{eq:Supp_HRev1}, \eqref{eq:Supp_HRev3}, and \eqref{eq:Supp_HRev2} leads to the desired formula in Eq.~\eqref{eq:Supp_1}. We additionally note that $M_\mu$ can also be expressed as 
$$
M_\mu(\gamma_{\tau - \tilde t}) = L_{\mu}^\dagger L_{\mu} + L_{B,\mu}^\dagger(\tilde t) L_{B,\mu}( \tilde t).
$$
\end{proof}

\subsection{C. Effective Hamiltonian dynamics via time-dependent dissipation control}
In the previous section, we have shown that the recovery dynamics,
$$
\begin{aligned}
{\cal L}_B(\rho) &=  -i [H_B(\tilde t), \rho] + \sum_\mu {\cal D}[L_{B,\mu}(\tilde t)]( \rho)\\
&=-i [H, \rho] - i \sum_\mu [  H_C(\gamma_{\tau - \tilde t}, L_\mu), \rho] + \sum_\mu {\cal D}[L_{B,\mu} (\tilde t)](\rho),
\end{aligned}
$$
leads to a full reverse trajectory of $\gamma_t$. Now we show that one can modify the recovery dynamics to cancel only the dissipation part ${\cal D}[L_\mu]$ to obtain the noise-free Hamiltonian dynamics of $H$. To this end, we construct the following recovery dynamics,
$$
{\cal L}'_B (\rho) = -i [H'_B(\tilde t), \rho ] + \sum_\mu {\cal D}[L'_{B,\mu}(\tilde t)] (\rho),
$$
in terms of the (modified) recovery Hamiltonian and jump operators
\begin{equation}
\begin{aligned}
H'_B(\tilde t) &= \sum_\mu U_{\tau- \tilde t} H_C(\gamma_{\tau - \tilde t}, L_\mu) U_{\tau - \tilde t}^\dagger\\
L'_{B,\mu}(\tilde t) &= U_{\tau - \tilde t} L_{B,\mu} (\tilde t) U_{\tau - \tilde t}^\dagger,
\end{aligned}
\label{eq:Suppl_noRevH}
\end{equation}
where $U_t = {\cal T} \left(e^{-i \int_0^t H  dt'} \right)$. By defining ${\cal U}_t (\rho) = U_t \rho U^\dagger_t$ and ${\cal U}^\dagger_t (\rho) = U^\dagger_t \rho U_t$, we express the evolution of the quantum state after following the reverse dynamics for time $\tau$ as
$$
\begin{aligned}
\tilde \gamma'_{\tilde t = \tau} &= {\cal T} \left( e^{\int_0^\tau {\cal L}'_B d\tilde t'}\right) (\gamma_\tau) \\
&={\cal U}_\tau \circ \left[ {\cal U}^\dagger_\tau \circ  {\cal T} \left( e^{\int_{\tau - \Delta t}^\tau {\cal L}_B' d\tilde t'}\right) \circ {\cal U}_{\tau - \Delta t} \right] \circ \left[ {\cal U}^\dagger_{\tau - \Delta t} \circ {\cal T} \left( e^{\int_{\tau - 2\Delta t}^{\tau-\Delta t} {\cal L}_B' d\tilde t'}\right) \circ {\cal U}_{\tau - 2\Delta t} \right] \circ \cdots \circ \left[ {\cal U}^\dagger_{\Delta t} \circ {\cal T} \left( e^{\int_0^{\Delta t} {\cal L}_B' d\tilde t'}\right) \right] (\gamma_\tau)\\
&={\cal U}_\tau \circ {\cal T}\left( e^{\int_0^\tau {\cal L}_B d \tilde t'} \right)  (\gamma_\tau)\\
&={\cal U}_\tau (\gamma_0).
\end{aligned}
$$
The third line in the equation above can be obtained as follows. We first note that each block of the evolution can be expressed as 
\begin{equation}
\begin{aligned}
{\cal U}^\dagger_{t_n} \circ {\cal T} \left( e^{\int_{t_n - \Delta t}^{t_n} {\cal L}'_B d\tilde t'}\right) \circ {\cal U}_{t_n - \Delta t}  
&= {\cal U}^\dagger_{t_n} \circ {\cal T} \left( e^{\int_{t_n - \Delta t}^{t_n} {\cal L}'_B d\tilde t'}\right) \circ {\cal U}_{t_n} \circ {\cal U}^\dagger_{\Delta t},
\end{aligned}
\label{eq:Suppl_no_rev_H2}
\end{equation}
where $t_n = n \Delta t$. By taking $\Delta t \rightarrow 0$, we have ${\cal U}^\dagger_{\Delta t}(\rho) \rightarrow \rho + i [H , \rho] \Delta t$ and $t_n$ becomes a continuous time $\tilde t$ of the reverse dynamics. Finally, by noting that $( {\cal U}^\dagger_{\tau - \tilde t} \circ {\cal L}'_B \circ {\cal U}_{\tau - \tilde t} ) (\rho) = -i [H_C(\tilde t), \rho] + \sum_\mu {\cal D}[L_{B,\mu}(\tilde t)] (\rho)$, Eq.~\eqref{eq:Suppl_no_rev_H2} becomes exactly same as the reverse dynamics ${\cal L}_B$.

\section{II. Reverse dynamics of a two-level system}
\subsection{A. Explicit form of the reverse dynamics}
Suppose that a two-level system's dynamics ${\cal L}(\rho)= - i [\boldsymbol{h} \cdot \boldsymbol{\sigma}, \rho] + \sum_\mu {\cal D}[\boldsymbol{l}_\mu \cdot \boldsymbol{\sigma}](\rho)$ is given by the forward Hamiltonian $H = \boldsymbol{h} \cdot \boldsymbol{\sigma}$ and jump operators $L_\mu = \boldsymbol{l}_\mu \cdot \boldsymbol{\sigma}$ with a real vector $\boldsymbol{h}$, complex vectors $\boldsymbol{l}_\mu$, and the vector of Pauli matrices $\boldsymbol{\sigma} = (\sigma_x, \sigma_y, \sigma_z)$. We derive an explicit form of the reverse dynamics ${\cal L}_B (\rho)= - i [\boldsymbol{h}_B \cdot \boldsymbol{\sigma}, \rho] + \sum_\mu {\cal D}[\boldsymbol{l}_{B,\mu} \cdot \boldsymbol{\sigma}](\rho)$ directly from Eq.~\eqref{eq:supp_RevL}:
$$
\begin{aligned}
\boldsymbol{h}_B &= -\boldsymbol{h}  + \left( \frac{\cosh^2(x)}{ 2 \cosh^2(x/2) }\right)  \sum_{\mu} \left[ {\rm Re} [(\boldsymbol{r}(t) \cdot \boldsymbol{l}_{\mu}) (\boldsymbol{r}(t)  \times \boldsymbol{l}_{\mu}^*)] - \left( \frac{ \sinh^2(x/2)}{\cosh(x)}\right) [\boldsymbol{r}(t) \times  (i \boldsymbol{l}_{\mu}^* \times \boldsymbol{l}_{\mu})  ] \right]\\
\boldsymbol{l}_{B,\mu} &= \boldsymbol{l}_{\mu}^* - \left( \frac{\cosh^2(x)}{ 2 \cosh^2(x/2) }\right) \big[ \boldsymbol{r}(t) \times (\boldsymbol{r}(t) \times \boldsymbol{l}_{\mu}^*) \big] + i \cosh(x) [ \boldsymbol{r}(t) \times \boldsymbol{l}_{\mu}^* ],
\end{aligned}
$$
where $x(t) = \tanh^{-1}|\boldsymbol r(t) |$ for the forward trajectory of quantum state $\gamma_t = \frac{1}{2} \left[ \mathbb{1} + \boldsymbol{r}(t) \cdot \boldsymbol{\sigma} \right]$.

\begin{proof}
We start with expressing a two-level quantum state as 
$$
\gamma_t = \frac{1}{2} \left[ \mathbb{1} + \boldsymbol{r}(t) \cdot \boldsymbol{\sigma} \right]
= \frac{1}{2} \left[ \mathbb{1} - \tanh(x) (\boldsymbol{n}(t) \cdot \boldsymbol{\sigma}) \right] 
 = \frac{1}{Z(t)} e^{-x (\boldsymbol{n}(t) \cdot \boldsymbol{\sigma})},
$$
where $\boldsymbol{n}(t) = - \boldsymbol{r}(t) / |\boldsymbol{r}(t)|$ and $Z(t) = \Tr[e^{- x (\boldsymbol{n}(t) \cdot \boldsymbol{\sigma})}] =\cosh(x)$. Note that $x(t)$, $\boldsymbol{n}(t)$, and $Z(t)$ are time-dependent. For simplicity, we shall use the expressions $x$, $\boldsymbol{r}$, $\boldsymbol{n}$ and $Z$, but note that they are still time-dependent quantities. Let us start with the following Lemma:
\begin{lemma} For any real number $x$, unit vector $\boldsymbol{n}$, and complex vector $\boldsymbol{v}$, 
\begin{equation} 
e^{- x (\boldsymbol{n} \cdot \boldsymbol\sigma)/2} \left( \boldsymbol{v} \cdot \boldsymbol{\sigma} \right) e^{ x (\boldsymbol{n} \cdot \boldsymbol\sigma)/2} 
= \left[ \boldsymbol{v} - 2 \sinh^2( x/2)  ( \boldsymbol{n} \times (\boldsymbol{n} \times \boldsymbol{v} )) - i \sinh(x) (\boldsymbol{n}\times \boldsymbol{v}) \right] \cdot \boldsymbol\sigma.
\end{equation}
\label{Lemma:Suppl1}
\end{lemma}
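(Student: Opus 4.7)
The plan is to reduce everything to elementary Pauli-matrix algebra, using the fact that $(\boldsymbol{n}\cdot\boldsymbol\sigma)^2 = \mathbb{1}$ for any unit vector $\boldsymbol{n}$. This lets me expand the exponentials in closed form,
\begin{equation}
e^{\pm x (\boldsymbol{n}\cdot\boldsymbol\sigma)/2} = \cosh(x/2)\,\mathbb{1} \pm \sinh(x/2)\,(\boldsymbol{n}\cdot\boldsymbol\sigma),
\end{equation}
so that the conjugation on the left-hand side becomes a finite polynomial in $\boldsymbol{n}\cdot\boldsymbol\sigma$ and $\boldsymbol{v}\cdot\boldsymbol\sigma$. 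Since all the identities I will use (Pauli multiplication, double cross product) are bilinear, complex-valued $\boldsymbol{v}$ causes no difficulty.

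After multiplying out, the expression splits into three pieces: a $\cosh^2(x/2)$ term proportional to $\boldsymbol{v}\cdot\boldsymbol\sigma$, a mixed $\cosh(x/2)\sinh(x/2)$ term of the form $(\boldsymbol{v}\cdot\boldsymbol\sigma)(\boldsymbol{n}\cdot\boldsymbol\sigma) - (\boldsymbol{n}\cdot\boldsymbol\sigma)(\boldsymbol{v}\cdot\boldsymbol\sigma)$, and a $-\sinh^2(x/2)$ sandwich $(\boldsymbol{n}\cdot\boldsymbol\sigma)(\boldsymbol{v}\cdot\boldsymbol\sigma)(\boldsymbol{n}\cdot\boldsymbol\sigma)$. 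I would simplify the commutator piece using $[\boldsymbol{v}\cdot\boldsymbol\sigma,\boldsymbol{n}\cdot\boldsymbol\sigma] = 2i(\boldsymbol{v}\times\boldsymbol{n})\cdot\boldsymbol\sigma = -2i(\boldsymbol{n}\times\boldsymbol{v})\cdot\boldsymbol\sigma$ together with the double-angle formula $2\cosh(x/2)\sinh(x/2) = \sinh(x)$, which immediately produces the $-i\sinh(x)(\boldsymbol{n}\times\boldsymbol{v})\cdot\boldsymbol\sigma$ term in the target.

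For the sandwich term, I would apply $(\boldsymbol{a}\cdot\boldsymbol\sigma)(\boldsymbol{b}\cdot\boldsymbol\sigma) = (\boldsymbol{a}\cdot\boldsymbol{b})\mathbb{1} + i(\boldsymbol{a}\times\boldsymbol{b})\cdot\boldsymbol\sigma$ twice and invoke the BAC-CAB identity $\boldsymbol{n}\times(\boldsymbol{n}\times\boldsymbol{v}) = (\boldsymbol{n}\cdot\boldsymbol{v})\boldsymbol{n} - \boldsymbol{v}$ together with $\boldsymbol{n}\cdot(\boldsymbol{n}\times\boldsymbol{v}) = 0$, to obtain
\begin{equation}
(\boldsymbol{n}\cdot\boldsymbol\sigma)(\boldsymbol{v}\cdot\boldsymbol\sigma)(\boldsymbol{n}\cdot\boldsymbol\sigma) = \bigl[2(\boldsymbol{n}\cdot\boldsymbol{v})\boldsymbol{n} - \boldsymbol{v}\bigr]\cdot\boldsymbol\sigma.
\end{equation}
Collecting everything and using $\cosh^2(x/2) + \sinh^2(x/2) = \cosh(x) = 1 + 2\sinh^2(x/2)$, the coefficient of $\boldsymbol{v}\cdot\boldsymbol\sigma$ becomes $1$ after absorbing $-2\sinh^2(x/2)\boldsymbol{v}$ into the cross-product term via BAC-CAB in reverse, producing precisely $-2\sinh^2(x/2)\,\boldsymbol{n}\times(\boldsymbol{n}\times\boldsymbol{v})\cdot\boldsymbol\sigma$, which is the remaining piece of the claim.

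The whole argument is just careful bookkeeping, so there is no real conceptual obstacle; the only minor pitfall is matching signs when converting between $(\boldsymbol{n}\times\boldsymbol{v})\times\boldsymbol{n}$ and $\boldsymbol{n}\times(\boldsymbol{n}\times\boldsymbol{v})$, and taking advantage of the identity $1 + 2\sinh^2(x/2) = \cosh(x)$ to rewrite the vector part in the compact double-cross form that appears on the right-hand side.
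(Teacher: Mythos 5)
Your proof is correct, and it takes a genuinely different route from the paper's. The paper computes the conjugation via the Baker--Campbell--Hausdorff expansion: it encodes the adjoint action as nested commutators with $(\boldsymbol{n}\cdot\boldsymbol{\sigma})$, represents them by the cross-product matrix $\boldsymbol{C}_{\boldsymbol{n}}$ satisfying $\boldsymbol{C}_{\boldsymbol{n}}^3 = -\boldsymbol{C}_{\boldsymbol{n}}$, and resums the resulting infinite series into $\boldsymbol{v} - i\sinh(x)\,\boldsymbol{C}_{\boldsymbol{n}}\boldsymbol{v} - (\cosh x - 1)\,\boldsymbol{C}_{\boldsymbol{n}}^2\boldsymbol{v}$. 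You instead exploit $(\boldsymbol{n}\cdot\boldsymbol{\sigma})^2 = \mathbb{1}$ to write each exponential in the finite closed form $\cosh(x/2)\,\mathbb{1} \pm \sinh(x/2)\,(\boldsymbol{n}\cdot\boldsymbol{\sigma})$, reducing the conjugation to a three-term polynomial handled by $(\boldsymbol{a}\cdot\boldsymbol{\sigma})(\boldsymbol{b}\cdot\boldsymbol{\sigma}) = (\boldsymbol{a}\cdot\boldsymbol{b})\mathbb{1} + i(\boldsymbol{a}\times\boldsymbol{b})\cdot\boldsymbol{\sigma}$. I checked your three pieces: the commutator term correctly yields $-i\sinh(x)(\boldsymbol{n}\times\boldsymbol{v})\cdot\boldsymbol{\sigma}$ via $2\cosh(x/2)\sinh(x/2)=\sinh(x)$; the sandwich indeed equals $\bigl[2(\boldsymbol{n}\cdot\boldsymbol{v})\boldsymbol{n} - \boldsymbol{v}\bigr]\cdot\boldsymbol{\sigma}$; and the recombination with $\cosh^2(x/2)+\sinh^2(x/2) = 1 + 2\sinh^2(x/2)$ reproduces the double-cross term, so the identity follows, with bilinearity correctly covering complex $\boldsymbol{v}$. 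Your method buys manifestly finite, elementary bookkeeping with no series resummation; its one limitation is that it hinges on the involutory property of the two-level generator, whereas the paper's adjoint/BCH computation acts at the level of the rotation algebra and would carry over unchanged to conjugation by $e^{-x(\boldsymbol{n}\cdot\boldsymbol{J})/2}$ in higher-spin representations. One cosmetic sign remark: the leftover piece from $\cosh x = 1 + 2\sinh^2(x/2)$ is $+2\sinh^2(x/2)\,\boldsymbol{v}$, which combines with $-2\sinh^2(x/2)(\boldsymbol{n}\cdot\boldsymbol{v})\boldsymbol{n}$ to give $-2\sinh^2(x/2)\,\boldsymbol{n}\times(\boldsymbol{n}\times\boldsymbol{v})$; your phrase ``absorbing $-2\sinh^2(x/2)\,\boldsymbol{v}$'' has the sign backwards, though the final expression you state is the right one.
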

\begin{proof}
We note that $[(\boldsymbol{n} \cdot \boldsymbol{\sigma}),  \left( \boldsymbol{v} \cdot \boldsymbol{\sigma} \right)] = 2 i (\boldsymbol{n} \times \boldsymbol{v}) \cdot \boldsymbol{\sigma} = 2i (\boldsymbol{C}_{\boldsymbol{n}} \boldsymbol{v}) \cdot \boldsymbol{\sigma}$, where $\boldsymbol{C}_{\boldsymbol{n}}  = \left( \begin{matrix} 0 & -n_z & n_y \\ n_z & 0 & -n_x \\ -n_y & n_x & 0\end{matrix} \right)$ satisfies $\boldsymbol{C}_{\boldsymbol{n}}^3 = -\boldsymbol{C}_{\boldsymbol{n}}$ and $\boldsymbol{C}_{\boldsymbol{n}} \boldsymbol{v} = \boldsymbol{n} \times \boldsymbol{v}$. From the Baker-Campbell-Hausdorff formula, we then obtain,
\begin{equation}
\begin{aligned}
e^{- x (\boldsymbol{n} \cdot \boldsymbol\sigma)/2} \left( \boldsymbol{v} \cdot \boldsymbol{\sigma} \right) e^{ x (\boldsymbol{n} \cdot \boldsymbol\sigma)/2}  
&= (\boldsymbol{v} \cdot \boldsymbol{\sigma}) + \left(- \frac{x}{2}\right) [(\boldsymbol{n} \cdot \boldsymbol{\sigma}), (\boldsymbol{v} \cdot \boldsymbol{\sigma})] + \frac{1}{2!} \left(-\frac{x}{2}\right)^2 [(\boldsymbol{n} \cdot \boldsymbol{\sigma}), [(\boldsymbol{n} \cdot \boldsymbol{\sigma}), (\boldsymbol{v} \cdot \boldsymbol{\sigma})]] + \cdots  \\
&= \left[ \left( \mathbb{1} + \left( - i x \boldsymbol{C}_{\boldsymbol{n}} \right)  + \frac{1}{2!} \left( - i x \boldsymbol{C}_{\boldsymbol{n}} \right)^2 + \cdots \right)  \boldsymbol{v}  \right] \cdot \boldsymbol{\sigma}\\
&= \left[ \left( \mathbb{1} -i \sum_{k=0}^\infty \frac{x^{2k+1}}{(2k+1)!} \boldsymbol{C}_{\boldsymbol{n}}  - \sum_{k=1}^\infty \frac{x^{2k}}{(2k)!}\boldsymbol{C}_{\boldsymbol{n}}^2 \right)  \boldsymbol{v}  \right] \cdot \boldsymbol{\sigma}\\
&= \left[ \boldsymbol{v} - i \sinh{x} \boldsymbol{C}_{\boldsymbol{n}} - (\cosh{x}-1) \boldsymbol{C}_{\boldsymbol n}^2\right] \cdot \boldsymbol{\sigma}\\
&= \left[ \boldsymbol{v} - 2 \sinh^2( x/2)  ( \boldsymbol{n} \times (\boldsymbol{n} \times \boldsymbol{v} )) - i \sinh(x) (\boldsymbol{n}\times \boldsymbol{v}) \right] \cdot \boldsymbol\sigma,
\end{aligned}
\end{equation}
which completes the proof.
\end{proof}

From Lemma~\ref{Lemma:Suppl1}, we directly obtain the jump operators of the reverse dynamics as
\begin{equation}
\begin{aligned}
L_{B,\mu}
&= e^{- x (\boldsymbol{n} \cdot \boldsymbol{\sigma}) /2}  \left( \boldsymbol{l}_{\mu}^* \cdot \boldsymbol{\sigma} \right)e^{ x (\boldsymbol{n} \cdot \boldsymbol{\sigma}) /2}  = \left[ \boldsymbol{l}_{\mu}^* - 2 \sinh^2( x/2)  ( \boldsymbol{n} \times (\boldsymbol{n} \times \boldsymbol{l}_{\mu}^* )) - i \sinh(x) (\boldsymbol{n} \times \boldsymbol{l}_{\mu}^* ) \right] \cdot \boldsymbol\sigma.
\end{aligned}
\end{equation}
By using the fact that $\boldsymbol{n} = -\coth(x) \boldsymbol{r}$, we obtain the desired formula of the jump operators,
\begin{equation}
\begin{aligned}
\boldsymbol{l}_{B,\mu} &= \left[ \boldsymbol{l}_{\mu}^* - 2 \sinh^2( x/2)  \coth^2(x)( \boldsymbol{r} \times (\boldsymbol{r} \times \boldsymbol{l}_{\mu}^* )) - i \sinh(x) (-\coth(x)) (\boldsymbol{r} \times \boldsymbol{l}_{\mu}^* ) \right] \cdot \boldsymbol\sigma\\
&= \boldsymbol{l}_{\mu}^* - \left( \frac{\cosh^2(x)}{ 2 \cosh^2(x/2) }\right) \big[ \boldsymbol{r} \times (\boldsymbol{r} \times \boldsymbol{l}_{\mu}^*) \big] + i \cosh(x) [ \boldsymbol{r} \times \boldsymbol{l}_{\mu}^* ].
\end{aligned}
\end{equation}

Next, we derive the Hamiltonian for the reverse dynamics. Similarly to the jump operators, each component of the reverse Hamiltonian can be calculated as
\begin{equation}
H_B(\tilde t)|_{\tilde t = \tau - t} = -\frac{1}{2}  \gamma_t^{-\frac{1}{2}} \left( H  - (i/2) \sum_{\mu} L_\mu^\dagger L_\mu - i \partial_t \right) \gamma_t^{\frac{1}{2}}  + {\rm h.c.}.
\label{ep:Suppl_HB}
\end{equation}
Here, we use Eq.~\eqref{eq:supp_RevL} instead of Eq.~\eqref{eq:Supp_1} as the derivative of $\gamma_t^{1/2}$ can be directly calculated for a two-level system. The first term of Eq.~\eqref{ep:Suppl_HB} can be directly obtained from Lemma~\ref{Lemma:Suppl1} as
\begin{equation}
\begin{aligned}
- \frac{1}{2}\gamma_t^{\frac{1}{2}} H  \gamma_t^{-\frac{1}{2}} + {\rm h.c.} 
= - \left[ \boldsymbol{h}  - 2 \sinh^2(x/2)  ( \boldsymbol{n} \times (\boldsymbol{n} \times \boldsymbol{h}  )) \right] \cdot \boldsymbol{\sigma}.
\end{aligned}
\label{eq:Suppl_h1}
\end{equation}
The second term of Eq.~\eqref{ep:Suppl_HB}, $\displaystyle \left( -\frac{i }{4} \right) \gamma_t^{\frac{1}{2}} \left(  \sum_{\mu} L_\mu^\dagger L_\mu \right) \gamma_t^{-\frac{1}{2}} + {\rm h.c.}$, can also be calculated as follows. We note that $ \sum_{\mu} L_\mu^\dagger L_\mu =  \sum_{\mu} (\boldsymbol{l}^*_{\mu} \cdot \boldsymbol{\sigma}) (\boldsymbol{l}_{\mu} \cdot \boldsymbol{\sigma}) = \sum_{\mu} \left[ (\boldsymbol{l}_{\mu}^* \cdot \boldsymbol{l}_{\mu}) \mathbb{1} + i (\boldsymbol{l}_{\mu}^* \times \boldsymbol{l}_{\mu}) \cdot \boldsymbol{\sigma} \right]$, where the term proportional to $(i \mathbb{1})$ is cancelled out when taking Hermitian conjugate and $i (\boldsymbol{l}_{\mu}^* \times \boldsymbol{l}_{\mu})$ is a real vector. We then obtain
\begin{equation}
\begin{aligned}
\left( -\frac{i }{4} \right) \gamma_t^{\frac{1}{2}} \left(  \sum_{\mu} L_\mu^\dagger L_\mu \right) \gamma_t^{-\frac{1}{2}} + {\rm h.c.} 
= \left(-\frac{1}{2} \right) \sinh(x)  \left[ \boldsymbol{n} \times i \sum_{\mu} (\boldsymbol{l}_{\mu}^* \times \boldsymbol{l}_{\mu}) \right] \cdot \boldsymbol{\sigma}.
\end{aligned}
\label{eq:Suppl_h2}
\end{equation}
Finally, we evaluate the time derivative term of Eq.~\eqref{ep:Suppl_HB}, $\displaystyle - \frac{i }{2} \left( \frac{d \gamma_t^{\frac{1}{2}}}{dt}\right) \gamma_t^{-\frac{1}{2}} + {\rm h.c.}$. We start with the following expression:
$$
\begin{aligned}
\left( \frac{d \gamma_t^{\frac{1}{2}}}{dt}\right) \gamma_t^{-\frac{1}{2}} 
&= \frac{d}{dt} \left[ \frac{1}{\sqrt{Z}} e^{- x(\boldsymbol{n} \cdot \boldsymbol{\sigma})/2} \right] \sqrt{Z}  e^{ x(\boldsymbol{n} \cdot \boldsymbol{\sigma})/2} \\
&= - \frac{ \dot{Z}}{ 2 Z } +  \frac{d}{dt} \left[ e^{- x(\boldsymbol{n} \cdot \boldsymbol{\sigma})/2} \right] e^{- x(\boldsymbol{n} \cdot \boldsymbol{\sigma})/2}\\
&= -\frac{\dot{Z}}{2 Z} +\left[ \int_0^1 d\alpha  e^{-\alpha x (\boldsymbol{n} \cdot \boldsymbol{\sigma})/2} \frac{d}{dt} \left[ \left( -\frac{ x }{2} \right) (\boldsymbol{n} \cdot \boldsymbol{\sigma}) \right] e^{-(1-\alpha) x (\boldsymbol{n} \cdot \boldsymbol{\sigma})/2} \right]  e^{ x (\boldsymbol{n} \cdot \boldsymbol{\sigma})/2} \\
&= -\frac{\dot{Z}}{2 Z} + \left( -  \frac{\dot x}{2} \right)  (\boldsymbol{n} \cdot \boldsymbol{\sigma}) +  \left( -  \frac{ x}{2} \right) \int_0^1 d\alpha  e^{-\alpha x (\boldsymbol{n} \cdot \boldsymbol{\sigma})/2}  (\dot{\boldsymbol{n}} \cdot \boldsymbol{\sigma}) e^{ \alpha x (\boldsymbol{n} \cdot \boldsymbol{\sigma})/2}.
\end{aligned}
$$
As the first two hermitian terms are cancelled out after taking the Hermitian conjugate of $\left( -\frac{i}{2} \right) \frac{d \gamma_t^{\frac{1}{2}}}{dt}$, we have
\begin{equation}
\begin{aligned}
- \frac{i}{2} \left( \frac{d \gamma_t^{\frac{1}{2}}}{dt}\right) \gamma_t^{-\frac{1}{2}} + {\rm h.c.} = \left( \frac{x}{2} \right) \left[ \int_0^1 d\alpha \sinh(\alpha x)  (\boldsymbol{n} \times \dot{\boldsymbol{n}}) \right] \cdot \boldsymbol\sigma = \sinh^2 ( x /2 )(\boldsymbol{n} \times \dot{\boldsymbol{n}}) \cdot \boldsymbol{\sigma}.
\end{aligned}
\label{eq:Suppl_dt}
\end{equation}
From the master equation, $\displaystyle \dot{\gamma_t} = {\cal L}  (\gamma_t) = -i [H , \gamma_t] + \sum_\mu L_\mu \gamma_t L_\mu^\dagger - \frac{1}{2} \{ L_\mu^\dagger L_\mu , \gamma_t\}$, we note that
\begin{equation}
\begin{aligned}
\dot{\boldsymbol{r}} = 2 (\boldsymbol{h}  \times \boldsymbol{r}) + \sum_{\mu} \left[  (\boldsymbol{l}_{\mu} \cdot \boldsymbol{r}) \boldsymbol{l}_{\mu}^* + (\boldsymbol{l}_{\mu}^* \cdot \boldsymbol{r}) \boldsymbol{l}_{\mu} - 2 (\boldsymbol{l}_{\mu}^* \boldsymbol{l}_{\mu}) \boldsymbol{r} - 2 i (\boldsymbol{l}_{\mu}^* \times \boldsymbol{l}_{\mu}) \right].
\end{aligned}
\end{equation}
We additionally note that
$ \boldsymbol{n} \times \dot{\boldsymbol{n}} = \boldsymbol{n} \times \left( \frac{\dot{\boldsymbol{r}}}{|\boldsymbol{r}|} \right)$, since $\boldsymbol n \times \boldsymbol n = 0$.
Equation~\eqref{eq:Suppl_dt} then becomes
\begin{equation}
\begin{aligned}
- \frac{i}{2} \left( \frac{d \gamma_t^{\frac{1}{2}}}{dt}\right) \gamma_t^{-\frac{1}{2}} + {\rm h.c.}
= 2 \sinh^2 ( x /2 ) \left[ \boldsymbol{n} \times (\boldsymbol{h} \times \boldsymbol{n} ) + \sum_\mu {\rm Re} [(\boldsymbol{l}_{\mu} \cdot \boldsymbol{n}) (\boldsymbol{n} \times \boldsymbol{l}_{\mu}) ]  - \boldsymbol{n} \times \left( \frac{ i \boldsymbol{l}_{\mu}^* \times \boldsymbol{l}_{\mu}}{|\boldsymbol{r}|} \right) \right] \cdot \boldsymbol{\sigma} .
\end{aligned}
\label{eq:Suppl_h3}
\end{equation}
Finally, substituting $\boldsymbol{n} = -\coth(x) \boldsymbol{r}$ and combining Eqs.~\eqref{eq:Suppl_h1}, \eqref{eq:Suppl_h2}, and \eqref{eq:Suppl_h3}, we obtain the reverse Hamiltonian $\boldsymbol{h}_B$. Note that the first term of Eq.~\eqref{eq:Suppl_h3} cancels out the second term of Eq.~\eqref{eq:Suppl_h1} so that the term depending on the forward Hamiltonian $-\boldsymbol{h} $ becomes fully decoupled from the quantum state's trajectory $\boldsymbol{r}$ as predicted in Eq.~\eqref{eq:Supp_1}.
\end{proof}

From Eq.~\eqref{eq:Suppl_noRevH}, one can also reverse noisy quantum dynamics of the two-level system to obtain the effective unitary dynamics. Figure~\ref{fig:Suppl_no_rev_H} shows the reversed trajectory and control parameters of the Hamiltonian and jump operators. Here, the forward dynamics with $H  = 0.3 \sigma_x + \sigma_z$ and $L = 0.4 \sigma_-$ and the initial state of $\gamma_0 = \ket{0}\bra{0}$ are chosen to be the same as in the main text.

\begin{figure}[ht]
\includegraphics[width=0.25\linewidth]{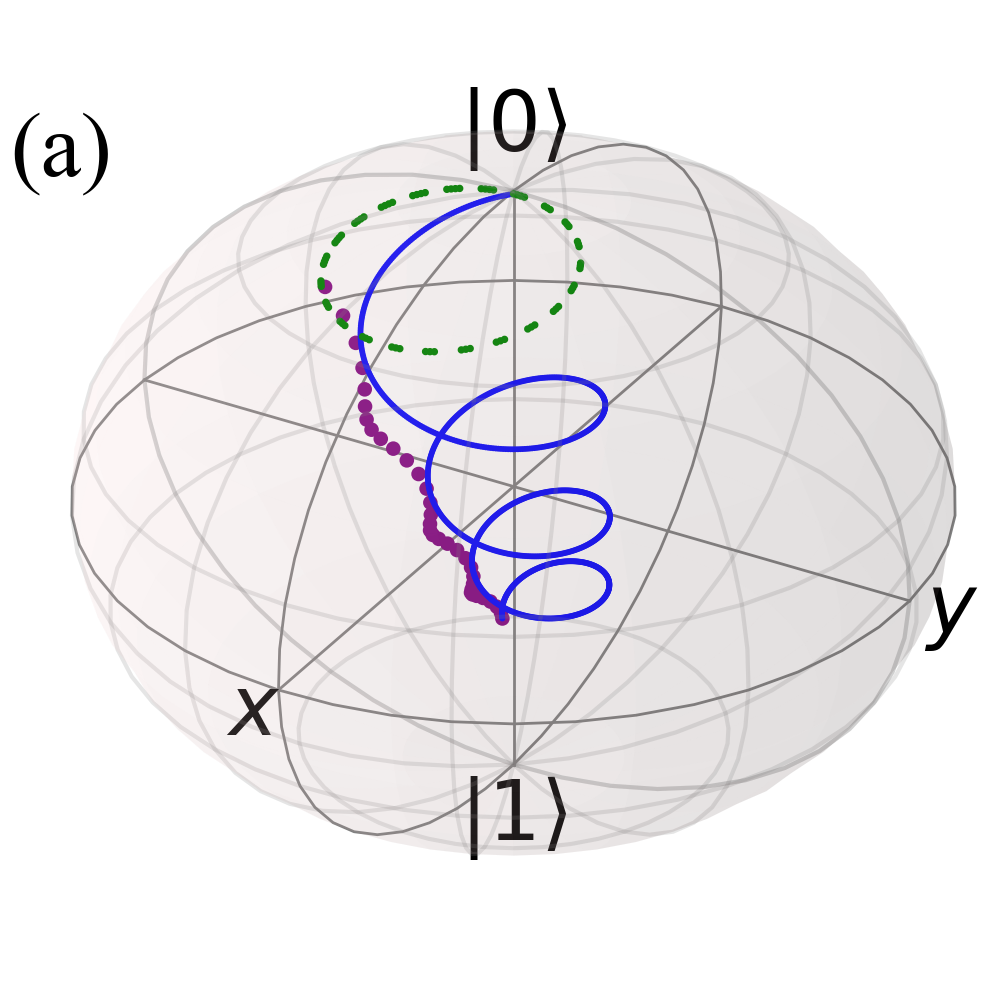}\qquad\qquad
\includegraphics[width=0.35\linewidth]{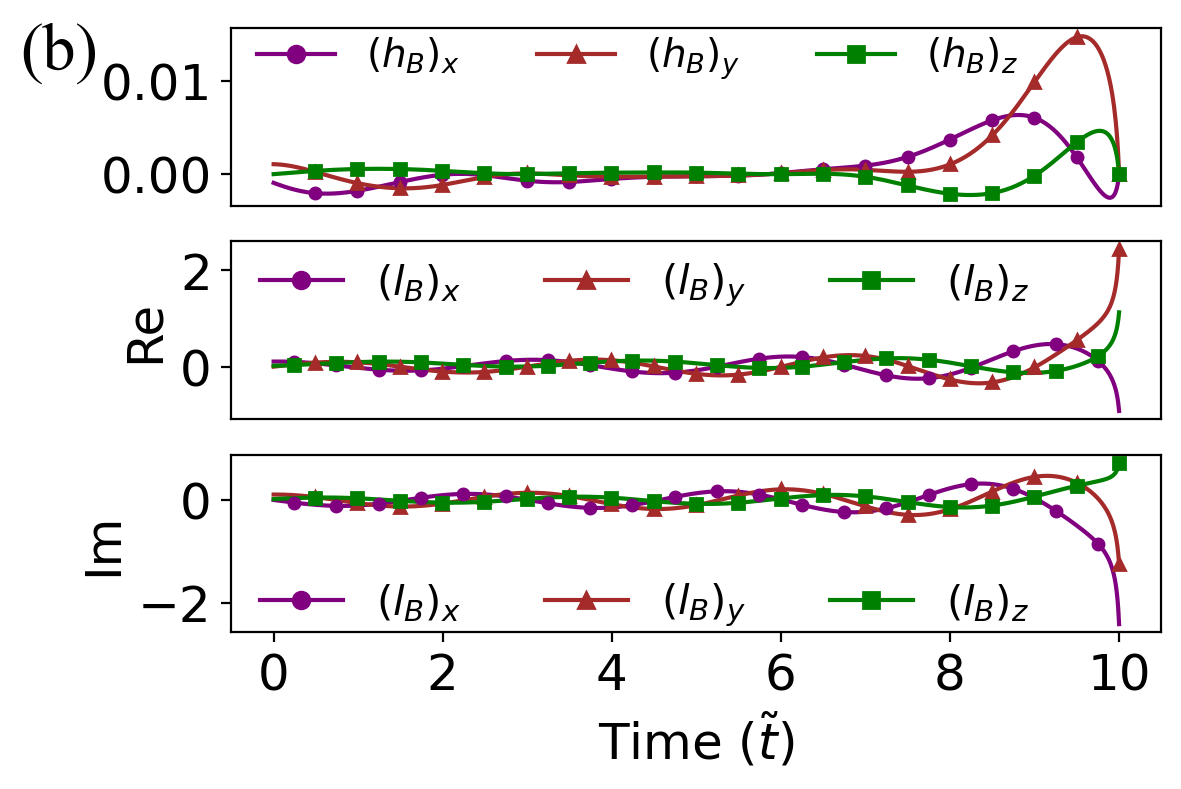}
\caption{Reversing noisy quantum dynamics while keeping the Hamiltonian dynamics. (a) The blue solid line and purple dotted line describe the quantum state's trajectory under the forward dynamics ${\cal L}$ and the reverse dynamics ${\cal L}'_B(\rho) = -i [H_B' , \rho] + {\cal D}[L_B'](\rho)$, respectively. After applying the reverse dynamics for time $\tau = 10$, the quantum state ends up to the unitary evolved state $U_\tau \ket{0}\bra{0} U_\tau^\dagger$, instead of being recovered to the initial state $\ket{0}\bra{0}$. The green dashed line shows the orbit given by the noise-free unitary operation $U_t \ket{0}\bra{0} U^\dagger_t$ for $t \in [0, \tau]$. (b) The recovery Hamiltonian $H_B' = \boldsymbol{h_B'} \cdot \boldsymbol{\sigma}$ and jump operator $L'_{B} = \boldsymbol{l_B'} \cdot \boldsymbol{\sigma}$ for the reverse dynamics.}
\label{fig:Suppl_no_rev_H}
\end{figure}

\subsection{B. Dissipation engineering of the two-level system}
In this session, we provide a more detailed discussion on engineering the jump operators of the two-level system. As explained in the main text, this can be done by introducing an ancillary system that interacts with the system in the following form of interaction Hamiltonian:
\begin{equation}
H_{\rm int}^{(\mu)}(\tilde t) = \frac{1}{2} \left( L_{B,\mu}^\dagger(\tilde t) \otimes \sigma_-^{(\mu)} + L_{B,\mu}(\tilde t) \otimes \sigma_+^{(\mu)} \right).
\end{equation}
If the dissipation of the ancillary system $\Gamma_a {\cal D} [\sigma_-^{(\mu)}]$ is strong the system's effective dynamics can be well approximated as the jump operator $\left( L_{B,\mu} /\sqrt{\Gamma_a}\right)$. Hence, the Hamiltonian of the total system becomes $H_{sa}(\tilde t) = H_B(\tilde t) +  \sqrt{\Gamma_a} \sum_\mu H_{\rm int}^{(\mu)}(\tilde t)$.
For the qubit system, $H_B$ can be fully described by $\sigma_{x,y,z}$, while $H_{\rm int}^{(\mu)}$ requires two-qubit Pauli operations $\sigma_{x,y,z} \otimes \sigma_{x,y}^{(\mu)}$ between the system and ancilla.

Another physically relevant situation is when the reverse dynamics contain uncontrollable parts. This commonly happens when the open quantum dynamics contains an uncontrollable dissipation $\Ddiss$. In this case, the physically achievable backward process would be ${\cal L}_B^{\rm phys.} =  \Ddiss + \Lengr$, where only $\Lengr$ can be engineered in a time-dependent manner. Nevertheless, the contribution from $\Ddiss$ can be suppressed by applying a strong controlled dynamics for short times. By rescaling $\tilde t \rightarrow \tilde t/\xi$, $H_B \rightarrow \xi H_B$, and $H_{\rm int}^{(\mu)} \rightarrow \sqrt{\xi} H_{\rm int}^{(\mu)}$, we effectively increase the strength of the jump operator $L_{B,\mu} \rightarrow \sqrt{\xi} L_{B,\mu}$, providing an effective dynamics as follows:
\begin{equation}
{\cal L}_B^\xi = \Ddiss + \xi {\cal L}_B.
\end{equation}

We provide an example by taking the forward dynamics $H  = 0.3 \sigma_x + \sigma_z$ and $L = 0.4 \sigma_-$, as studied in the main text. When assuming full controllability of the reverse dynamics ($\Ddiss = 0$), the recovery fidelity ${\cal F}(\gamma_{\tau - \tilde t}, \tilde \gamma_{\tilde t}) = {\rm Tr} \left( \tilde \gamma_{\tilde t} ^{1/2} \gamma_{\tau - \tilde t} \tilde \gamma_{\tilde t} ^{1/2} \right)^{1/2}$, where $\tilde \gamma_{\tilde t} = {\cal T} \left[ e^{\int_0^{{\tilde t}/\xi} {\cal L}_B^\xi d \tilde t'}  \right] (\gamma_\tau)$, becomes nearly $1$ for a sufficiently large $\Gamma$ as shown in Fig.~\ref{fig:Suppl_qubit_phys}. On the other hand, uncontrollable dissipation (${\cal D}_{\rm diss} \neq 0$) prevents the reverse dynamics to provide the perfect recovery. Nevertheless, by taking $\xi \geq 1$ one can suppress the effect of the uncontrollable dissipation to retrieve a high recovery fidelity as far as the adiabatic elimination condition $ \|\xi H_B\| , \| \sqrt{\xi} H_{\rm int} \| \ll \Gamma $ is valid (see Fig.~\ref{fig:Suppl_qubit_phys}).

\begin{figure}[ht]
\includegraphics[width=0.45\linewidth]{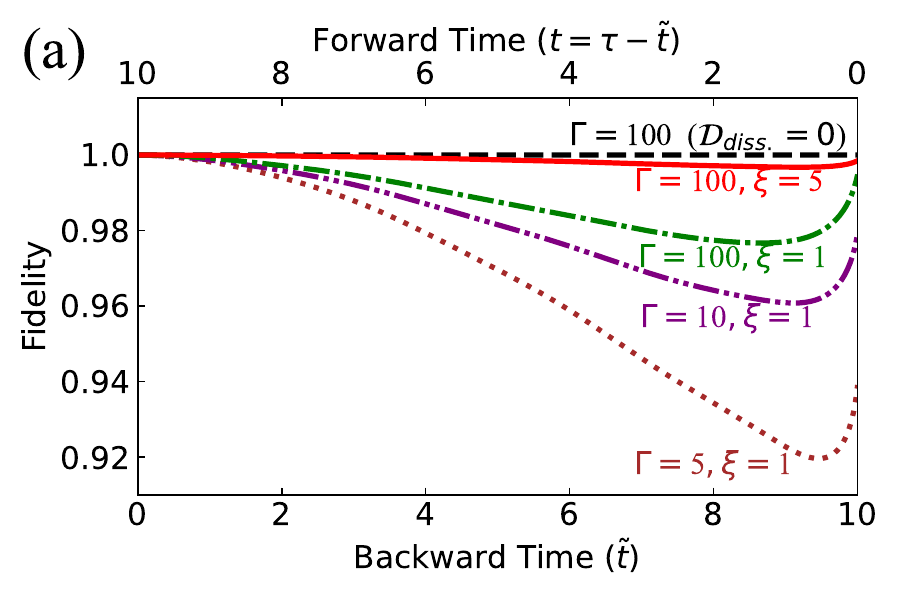}
\includegraphics[width=0.45\linewidth]{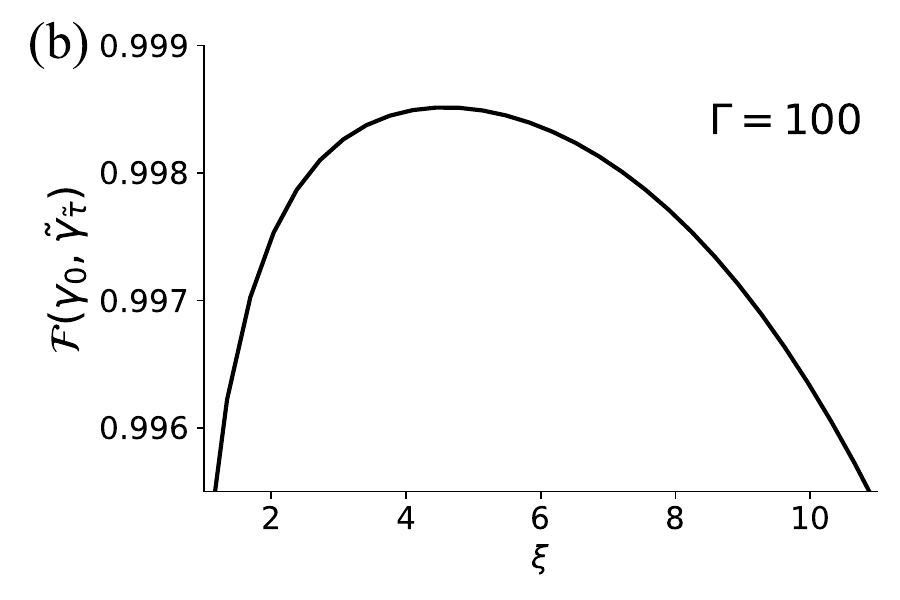}
\caption{(a) The recovery fidelity ${\cal F}(\gamma_{\tau - \tilde t}, \tilde \gamma_{\tilde t})$ at each time $\tilde t \in [0, \tau]$ evaluated for various $\Gamma$ and $\xi$. (b) Fidelity ${\cal F}(\gamma_0, \tilde \gamma_{\tilde \tau})$ between the initial state $\gamma_0$ and the recovered state $\tilde \gamma_{\tilde \tau}$ by changing $\xi$ for $\Gamma = 100$. The highest recovery fidelity is achieved for $\xi \approx 5$.}
\label{fig:Suppl_qubit_phys}
\end{figure}

\newpage
\section{III. Continuous recovery map with time-independent dissipation}
\subsection{A. Time-independent reverse dynamics for a fixed reference state}
In this section, we show that time-independent reverse dynamics can be constructed by considering a fixed state $\gamma$, instead of the trajectory $\gamma_t$. For a forward dynamics ${\cal L}$ described by Hamiltonian $H$ and jump operators $L_\mu$, the stationary dynamics of any full-rank state $\gamma$ can be constructed as
\begin{equation}
{\cal L}_S(\rho) = \left( {\cal L} + {\cal L}_B \right) (\rho),
\label{eq:Suppl_rev_static}
\end{equation}
where the time-independent recovery dynamics is defined as
\begin{equation}
{\cal L}_B(\rho)=  -i [H_B, \rho] + \sum_\mu {\cal D}[\gamma^{\frac{1}{2}} L_{\mu}^\dagger \gamma^{-\frac{1}{2}}] (\rho),
\label{eq:Supp_rererev}
\end{equation}
with $H_B = -H + \sum_\mu H_C(\gamma,L_\mu)$. Under the dynamics ${\cal L}_S$, $\gamma$ becomes a steady state satisfying
$$
\dot \gamma = {\cal L}_S (\gamma) = 0.
$$
\begin{proof}
This can be seen as a direct consequence of the property of the reverse dynamics ${\cal L}(\gamma) = -{\cal L}_B(\gamma)$. Nevertheless, we explicitly show that ${\cal L}_S(\gamma) = 0$ by noting that
$$
\begin{aligned}
(-i) \sum_\mu [H_C(\gamma,L_\mu),\gamma] 
&=  i \sum_\mu \sum_{\lambda, \lambda'} \left[(\lambda - \lambda') \left(\frac{-i}{2}\right) \left( \frac{\sqrt{\lambda}- \sqrt{\lambda'}}{\sqrt{\lambda} + \sqrt{\lambda'}} \right)\bra{\lambda} M_\mu(\gamma) \ket{ \lambda'}  \right] \ket{\lambda} \bra{\lambda'} \\
&= \frac{1}{2} \sum_\mu \sum_{\lambda, \lambda'} \left[ (\sqrt{\lambda} - \sqrt{\lambda'} )^2 \bra{\lambda} M_\mu(\gamma) \ket{ \lambda'}  \right] \ket{\lambda} \bra{\lambda'} \\
&= \frac{1}{2} \sum_\mu \left[ \gamma M_\mu(\gamma) + M_\mu(\gamma) - 2 \gamma^{\frac{1}{2}} M_\mu(\gamma) \gamma^{\frac{1}{2}} \right]\\
&= \frac{1}{2} \{ M_\mu (\gamma), \gamma \} - \gamma^{\frac{1}{2}} M_\mu(\gamma) \gamma^{\frac{1}{2}} \\
&= \frac{1}{2} \left \{ L_\mu^\dagger L_\mu, \gamma \right \} + \frac{1}{2} \left \{ \left( \gamma^{\frac{1}{2}} L_{\mu}^\dagger \gamma^{-\frac{1}{2}} \right)^\dagger \left( \gamma^{\frac{1}{2}} L_{\mu}^\dagger \gamma^{-\frac{1}{2}} \right), \gamma \right \}  - \gamma^{\frac{1}{2}} L_\mu^\dagger L_\mu \gamma^{\frac{1}{2}} - L_\mu \gamma L_\mu^\dagger \\
&= - {\cal D}[L_\mu](\gamma) - {\cal D}[\gamma^{\frac{1}{2}} L_\mu^\dagger \gamma^{-\frac{1}{2}}] (\gamma).
\end{aligned}
$$
Here, we used $H_C(\gamma, L_\mu) = - \frac{i}{2}\sum_{\lambda, \lambda'} \left( \frac{\sqrt{\lambda}- \sqrt{\lambda'}}{\sqrt{\lambda} + \sqrt{\lambda'}} \right)\bra{\lambda} M_\mu(\gamma) \ket{ \lambda'} \ket{\lambda} \bra{\lambda'}$ and $M_\mu(\gamma) = L_{\mu}^\dagger L_{\mu} + \gamma^{-\frac{1}{2}} L_{\mu} \gamma L_{\mu}^\dagger \gamma^{-\frac{1}{2}} =  L_{\mu}^\dagger L_{\mu} + \left( \gamma^{\frac{1}{2}} L_{\mu}^\dagger \gamma^{-\frac{1}{2}} \right)^\dagger \left( \gamma^{\frac{1}{2}} L_{\mu}^\dagger \gamma^{-\frac{1}{2}} \right)$.
Hence, we verify that
$$
{\cal L}_S(\gamma) = \left( {\cal L} + {\cal L}_B \right) (\gamma) = -i \sum_\mu [H_C(\gamma,L_\mu),\gamma] + {\cal D}[L_\mu](\gamma)  + {\cal D}[\gamma^{\frac{1}{2} }L_\mu^\dagger \gamma^{-\frac{1}{2}}](\gamma) = 0.
$$
\end{proof}

\subsection{B. Self-recovery property of the continuous recovery dynamics}
The self-recovery property of the recovery dynamics can be proven by showing that ${\cal L}_S$ satisfies the quantum following detailed balance relation \cite{Alhambra17} with respect to the fixed state $\gamma$:
\begin{equation}
\langle A, {\cal L}_S^\dagger  (B) \rangle_\gamma = \langle {\cal L}_S^\dagger(A),  B \rangle_\gamma,
\label{eq:Suppl_DB}
\end{equation}
for all operators $A$ and $B$. Here, the inner product is defined as 
$$
\langle A, B \rangle_\gamma := {\rm Tr} \left[ \gamma^{\frac{1}{2}} A^\dagger \gamma^{\frac{1}{2}} B\right].
$$
\begin{proof} 
By rearranging Eq.~\eqref{eq:Suppl_DB}, we obtain
$$
\begin{aligned}
\langle A, {\cal L}_S^\dagger  (B) \rangle_\gamma &= {\rm Tr} [\gamma^{\frac{1}{2}} A^\dagger \gamma^{\frac{1}{2}} {\cal L}_S^\dagger(B)] =  {\rm Tr} \left [ {\cal L}_S \left( \gamma^{\frac{1}{2}} A^\dagger \gamma^{\frac{1}{2}} \right) B \right ]\\
\langle {\cal L}_S^\dagger(A),  B \rangle_\gamma &= {\rm Tr} \left[ \gamma^{\frac{1}{2}} \left( {\cal L}_S^\dagger(A)\right)^\dagger \gamma^{\frac{1}{2}} B \right] = {\rm Tr} \left[ \gamma^{\frac{1}{2}} {\cal L}_S^\dagger(A^\dagger) \gamma^{\frac{1}{2}} B \right].
\end{aligned}
$$
Hence, we note that the detailed balance condition is equivalent to 
\begin{equation}
{\cal L}_S(\gamma^{\frac{1}{2}} \bullet \gamma^{\frac{1}{2}}) = \gamma^{\frac{1}{2}}  {\cal L}_S^\dagger (\bullet) \gamma^{\frac{1}{2}} \Longleftrightarrow \gamma^{-\frac{1}{2}} {\cal L}_S(\gamma^{\frac{1}{2}} \bullet \gamma^{\frac{1}{2}}) \gamma^{-\frac{1}{2}} = {\cal L}_S^\dagger (\bullet).
\label{eq:Suppl_db_equiv}
\end{equation}
The above equation can be shown as
$$
\begin{aligned}
&\gamma^{-\frac{1}{2}} {\cal L}_S(\gamma^{\frac{1}{2}} \bullet \gamma^{\frac{1}{2}}) \gamma^{-\frac{1}{2}} \\
&\quad= (-i) \sum_\mu  \gamma^{-\frac{1}{2}} H_C(\gamma,L_\mu)\gamma^{\frac{1}{2}} (\bullet)+ i  (\bullet)  \sum_\mu  \gamma^{\frac{1}{2}} H_C(\gamma,L_\mu)\gamma^{-\frac{1}{2}} \\
&\qquad + \sum_\mu \left( \gamma^{-\frac{1}{2}} L_\mu \gamma^{\frac{1}{2}} \bullet \gamma^{\frac{1}{2}} L_\mu^\dagger \gamma^{-\frac{1}{2}} -\frac{1}{2} \{ \gamma^{-\frac{1}{2}} L_\mu^\dagger L_\mu \gamma^{\frac{1}{2}}, \bullet \} \right) + \sum_\mu \left(\gamma^{-\frac{1}{2}} L_\mu \gamma^{\frac{1}{2}} \bullet \gamma^{\frac{1}{2}} L_\mu^\dagger \gamma^{-\frac{1}{2}} -\frac{1}{2} \{ \gamma^{-1} L_\mu \gamma L_\mu^\dagger , \bullet \} \right) \\
&\quad= \sum_\mu \left( {\cal D}^\dagger [L_\mu] + {\cal D}^\dagger[\gamma^{\frac{1}{2}} L_\mu^\dagger \gamma^{-\frac{1}{2}}] \right) (\bullet)\\
& \qquad + \sum_\mu \left( (-i) \gamma^{-\frac{1}{2}} H_C(\gamma, L_\mu) \gamma^{\frac{1}{2}} -\frac{1}{2} \gamma^{-\frac{1}{2}} L_\mu^\dagger L_\mu \gamma^{\frac{1}{2}} - \frac{1}{2} \gamma^{-1} L_\mu \gamma L_\mu^\dagger  + \frac{1}{2} L_\mu^\dagger L_\mu + \frac{1}{2} \gamma^{-\frac{1}{2}}L_\mu \gamma L_\mu^\dagger \gamma^{-\frac{1}{2}} \right) (\bullet) \\
& \qquad +  (\bullet) \sum_\mu  \left( i \gamma^{\frac{1}{2}} H_C(\gamma, L_\mu) \gamma^{-\frac{1}{2}} -\frac{1}{2} \gamma^{-\frac{1}{2}} L_\mu^\dagger L_\mu \gamma^{\frac{1}{2}} - \frac{1}{2} \gamma^{-1} L_\mu \gamma L_\mu^\dagger  + \frac{1}{2} L_\mu^\dagger L_\mu + \frac{1}{2} \gamma^{-\frac{1}{2}}L_\mu \gamma L_\mu^\dagger \gamma^{-\frac{1}{2}} \right)\\
&\quad =  \sum_\mu \left( {\cal D}^\dagger [L_\mu] + {\cal D}^\dagger[\gamma^{\frac{1}{2}} L_\mu^\dagger \gamma^{-\frac{1}{2}}] \right) (\bullet) + i \sum_\mu [H_C(\gamma,L_\mu), \bullet]\\
&\quad = {\cal L}_S^\dagger (\bullet),
\end{aligned}
$$
by noting that
$$
(-i) \gamma^{-\frac{1}{2}} H_C(\gamma, L_\mu) \gamma^{\frac{1}{2}} -\frac{1}{2} \gamma^{-\frac{1}{2}} L_\mu^\dagger L_\mu \gamma^{\frac{1}{2}} - \frac{1}{2} \gamma^{-1} L_\mu \gamma L_\mu^\dagger = \gamma^{-\frac{1}{2}} \left[ -i H_C(\gamma,L_\mu) - \frac{1}{2} M_\mu(\gamma)  \right] \gamma^{\frac{1}{2}} = i H_C(\gamma, L_\mu) - \frac{1}{2} M_\mu(\gamma).
$$

The detailed balance relation in Eq.~\eqref{eq:Suppl_DB} implies that the Petz recovery channel of ${\cal N}_{\tau} = {\cal T}\left[ \exp \left( \int_0^{\tau} {\cal L}_S dt \right) \right]$ is nothing but itself. This can be shown as
$$
{\cal R}_{\gamma, {\cal N}_\tau} = \JJ{\gamma}{\frac{1}{2}} \circ {\cal N}^\dagger \circ \JJ{ \gamma}{-\frac{1}{2}}
= {\cal T}\left[ \exp \left( \int_0^{\tau} \JJ{\gamma}{\frac{1}{2}} \circ {\cal L}_S^\dagger \circ \JJ{ \gamma}{-\frac{1}{2}} dt \right) \right]= {\cal T}\left[ \exp \left( \int_0^{\tau}  {\cal L}_S dt \right) \right],
$$
for any time $\tau$. This can be shown by the fact that Eq.~\eqref{eq:Suppl_db_equiv} is equivalent to 
$$
\gamma^{-\frac{1}{2}} {\cal L}_S(\gamma^{\frac{1}{2}} \bullet \gamma^{\frac{1}{2}}) \gamma^{-\frac{1}{2}} = {\cal L}_S^\dagger (\bullet) \Longleftrightarrow \left( \JJ{\gamma}{\frac{1}{2}} \circ {\cal L}_S^\dagger \circ \JJ{ \gamma}{-\frac{1}{2}} \right) (\blacksquare) = \gamma^{\frac{1}{2}} {\cal L}_S^\dagger (\gamma^{-\frac{1}{2}} \blacksquare \gamma^{-\frac{1}{2}}) \gamma^{\frac{1}{2}} = {\cal L}_S (\blacksquare),
$$
by taking $ \bullet = \gamma^{-\frac{1}{2}} \blacksquare \gamma^{-\frac{1}{2}}$ and multiplying the both sides with $\gamma^{\frac{1}{2}}$.
\end{proof}
It is worth noting that ${\cal L}_S$ is not necessarily a Davies map \cite{Davies74}, but enjoys the self-recovery property.

\subsection{C. Continuous recovery dynamics for a stabilizer code}
We provide a canonical form of the continuous recovery map of a $n$-qubit stabilizer code, whose dissipation is given by
$$
{\cal L} = \sum_\mu \Gamma_\mu {\cal D}[E_\mu],
$$
where $E_{\mu} \in \langle \sigma_x, \sigma_y, \sigma_z \rangle^n$ in the $n$-qubit Pauli group. For a given stabilizer ${\cal S}$, the code space ${\cal C}$ can be characterized as a set of quantum states $\ket\psi$ such that $ S_i \ket\psi =  \ket\psi$ $\forall S_i \in {\cal S}$. We then take $Q = - \sum_{S_i \in \bar{\cal S}}  S_i$ for a subset of the stabilizer $\bar{\cal S} \subset {\cal S}$ and a fixed state $\gamma = e^{-\beta Q} /{\rm Tr}[e^{-\beta Q}]$ with $\beta \geq 0$. This leads to the following recovery dynamics:
\begin{equation}
{\cal L}_B = \sum_\mu \Gamma_\mu {\cal D}\left[E_{\mu} \prod_{S_i \in \bar{\cal S}_\mu} \left[ (\cosh \beta) \mathbb{1} - (\sinh{\beta}) S_i \right] \right],
\label{eq:Suppl_Stab}
\end{equation}
where $\bar{\cal S}_\mu = \{ S_i \in \bar{\cal S} | \{S_i, E_\mu \} = 0 \}$.

\begin{proof}
We first note that every $E_\mu$ is a Hermitian operator such that $E_\mu = E_\mu^\dagger$ as being an element of the Pauli group. We also note that $$
e^{-\beta Q} = \prod_{S_i \in \bar{\cal S}} e^{\beta S_i} =  \prod_{S_i \in \bar{\cal S}}  \left[ \cosh(\beta) \mathbb{1} + \sinh(\beta) S_i \right],
$$
from the facts that $[S_i, S_j] = 0$ and $S_i^2 = \mathbb{1}$ for all $ S_i, S_j \in {\cal S}$.  As every element of the stabilizer either commutes $[S_i, E_\mu] = 0 $ or anti-commutes $\{S_i, E_\mu \} = 0$ for a given $E_\mu$ in a Pauli group, we define a set of elements of the stabilizer that anti-commuites with $E_\mu$ to be $\bar{\cal S}_\mu = \{ S_i \in \bar{\cal S} | \{S_i, E_\mu \} = 0 \}$. We then show that the jump operator of the recovery dynamics can be written as
$$
\begin{aligned}
\gamma^{\frac{1}{2}} E_\mu \gamma^{-\frac{1}{2}} 
&= e^{-\beta Q /2} E_{\mu} e^{\beta Q/2} \\
&= \left( \prod_{S_i \in \bar{\cal S}} e^{\beta S_i/2} \right) E_{\mu} \left( \prod_{S_i \in \bar{\cal S}} e^{-\beta S_i/2} \right)\\
&= E_{\mu} \left( \prod_{S_i \in \bar{\cal S}_\mu} e^{-\beta S_i/2} \right) \left( \prod_{S_i \in \bar{\cal S} - \bar{\cal S}_\mu} e^{\beta S_i/2} \right)  \left( \prod_{S_i \in \bar{\cal S} - \bar{\cal S}_\mu} e^{-\beta S_i/2} \right) \left( \prod_{S_i \in \bar{\cal S}_\mu} e^{-\beta S_i/2} \right)\\
&= E_{\mu} \left( \prod_{S_i \in \bar{\cal S}_\mu} e^{-\beta S_i} \right)\\
&= E_{\mu} \left( \prod_{S_i \in \bar{\cal S}_\mu} \left[ \cosh(\beta) \mathbb{1} - \sinh(\beta) S_i \right] \right).
\end{aligned}
$$
We then show that the recovery hamiltonian becomes zero, i.e.,
$$
H_C(\gamma,E_\mu) = 0.
$$
In order to prove this, we evaluate
$$
M_\mu(\gamma) = E_\mu^\dagger E_\mu + e^{\beta Q/2} E_\mu e^{-\beta Q} E_\mu^\dagger e^{\beta Q/2}= \mathbb{1} +  \left( \prod_{S_i \in \bar{\cal S}_\mu} e^{-\beta S_i} \right) E_{\mu} E_\mu^\dagger \left( \prod_{S_i \in \bar{\cal S}_\mu} e^{-\beta S_i} \right)= \mathbb{1} +  \left( \prod_{S_i \in \bar{\cal S}_\mu} e^{-2\beta S_i} \right),
$$
which implies that $[Q, M_\mu(\gamma)] = 0$. Hence, we conclude that the reverse Hamiltonian
$$
H_C(\gamma, E_\mu) = - \frac{i}{2}\sum_{\lambda, \lambda'} \left( \frac{\sqrt{\lambda}- \sqrt{\lambda'}}{\sqrt{\lambda} + \sqrt{\lambda'}} \right)\bra{\lambda} M_\mu(\gamma) \ket{ \lambda'} \ket{\lambda} \bra{\lambda'},
$$
becomes zero as every $\ket{\lambda}\bra{\lambda}$, an eigenstate of $\gamma \propto e^{-\beta Q}$, commutes with $M_\mu(\gamma)$ for all $E_\mu$.
\end{proof}

We note that in a large limit of $\beta$, the jump operators of the recovery dynamics becomes
$$
(\cosh \beta) \mathbb{1} - (\sinh{\beta}) S_i \approx e^{\beta} \left( \frac{\mathbb{1} - S_i }{2} \right),
$$
where $ \frac{\mathbb{1} - S_i }{2}$ is a projection onto the Hilbert space having a non-trivial syndrome detected for a stabilizer element $S_i$. Hence, in this limit, Eq.~\eqref{eq:Suppl_Stab} becomes 
$$
{\cal L}_B \approx \sum_\mu \Gamma_\mu e^{2 |\bar{\cal S}_\mu| \beta} {\cal D} \left[E_\mu \prod_{S_i \in \bar{\cal S}_\mu} \left( \frac{ \mathbb{1} -  S_i}{2}\right) \right],
$$
which can be interpreted as a continuous quantum error correction protocol, where the syndrome detection and correction are operating continuously in time.

\subsection{D. Continuous recovery dynamics for a general code space}
For any given code space ${\cal C}$, one can construct a reverse dynamics by taking
$$
\gamma = \frac{e^{-\beta Q}}{{\rm Tr} \left[ e^{-\beta Q} \right]},
$$
where $Q = -P_{\cal C}$ with $P_{\cal C} = \sum_{\ket\psi \in {\cal C}} \ket\psi \bra\psi$ being a projection onto the code space. By noting that $e^{-\beta Q} = e^{\beta P_{\cal C}} = e^\beta P_{\cal C} + (\mathbb{1} - P_{\cal C})$, we can express the jump operators $ L_{B,\mu} = \gamma^{\frac{1}{2}} L_\mu^\dagger \gamma^{-\frac{1}{2}}$ of the reverse dynamics in Eq.~\eqref{eq:Supp_rererev} by noting that
\begin{equation}
\begin{aligned}
\gamma^{\frac{1}{2}} L_\mu^\dagger \gamma^{-\frac{1}{2}} &= e^{(\beta/2) P_{\cal C}} L_\mu^\dagger e^{-(\beta/2) P_{\cal C}} \\
&= \left( e^{\beta/2} P_{\cal C} + (\mathbb{1} - P_{\cal C}) \right) L_\mu^\dagger \left( e^{-\beta/2} P_{\cal C} + (\mathbb{1} - P_{\cal C}) \right) \\
&= e^{\beta/2} \left( P_{\cal C} +  e^{-\beta/2} (\mathbb{1} - P_{\cal C}) \right) L_\mu^\dagger \left( e^{-\beta/2} P_{\cal C} + (\mathbb{1} - P_{\cal C}) \right).
\end{aligned}
\label{eq:Suppl_P_rev_L}
\end{equation}
Meanwhile, the reverse Hamiltonian can be expressed as
$$
\begin{aligned}
H_C(\gamma, L_\mu) &= \left(-\frac{i}{2} \right) \sum_{\lambda, \lambda'} \left( \frac{\sqrt{\lambda}- \sqrt{\lambda'}}{\sqrt{\lambda} + \sqrt{\lambda'}} \right)\bra{\lambda} M_\mu(\gamma) \ket{ \lambda'} \ket{\lambda} \bra{\lambda'} \\
&=  \left(-\frac{i}{2} \right) \left( \sum_{\ket\lambda \in {\cal C}, \ket{\lambda'} \notin {\cal C}} \left( \frac{e^{\beta/2} - 1}{e^{\beta/2} + 1} \right) \bra{\lambda} M_\mu(\gamma) \ket{ \lambda'} \ket{\lambda} \bra{\lambda'} + \sum_{\ket\lambda \in {\cal C}, \ket{\lambda'} \notin {\cal C}} \left( \frac{ 1- e^{\beta/2} }{1 + e^{\beta/2} } \right) \bra{\lambda} M_\mu(\gamma) \ket{ \lambda'} \ket{\lambda} \bra{\lambda'} \right) \\
&= \left(-\frac{i}{2} \right) \left( \frac{ 1- e^{-\beta/2} }{1 + e^{-\beta/2} } \right) \left( P_{\cal C} M_\mu(\gamma) (\mathbb{1} - P_{\cal C}) -  (\mathbb{1} - P_{\cal C}) M_\mu(\gamma) P_{\cal C} \right)\\
&=\left(-\frac{i}{2} \right) \left( \frac{ 1- e^{-\beta/2} }{1 + e^{-\beta/2} } \right) \left[P_{\cal C}, M_\mu(\gamma) \right ].
\end{aligned}
$$
By substituting
$$
\begin{aligned}
M_\mu(\gamma) &= L_\mu^\dagger L_\mu + \gamma^{-\frac{1}{2}} L_\mu \gamma L_\mu^\dagger \gamma^{-\frac{1}{2}} \\
&=L_\mu^\dagger L_\mu + \left( e^{-\beta/2} P_{\cal C} + (\mathbb{1} - P_{\cal C}) \right) L_\mu  \left( e^{\beta} P_{\cal C} + (\mathbb{1} - P_{\cal C}) \right) L_\mu^\dagger  \left( e^{-\beta/2} P_{\cal C} + (\mathbb{1} - P_{\cal C}) \right),
\end{aligned}
$$
to $H_C$, we have
\begin{equation}
H_C(\gamma, L_\mu) = \left(-\frac{i}{2} \right) \left( \frac{ 1- e^{-\beta/2} }{1 + e^{-\beta/2} } \right) \left[ P_{\cal C}, L_\mu^\dagger L_\mu + e^{\beta/2} L_\mu (P_{\cal C} + e^{-\beta} (\mathbb{1} - P_{\cal C})) L_\mu^\dagger \right].
\label{eq:Suppl_P_rev_H}
\end{equation}
In the limit of large $\beta$, Eqs.~\eqref{eq:Suppl_P_rev_L} and \eqref{eq:Suppl_P_rev_H} lead to
$$
{\cal D} [\gamma^{\frac{1}{2}} L_\mu^\dagger \gamma^{-\frac{1}{2}}] \approx e^\beta {\cal D} \left[ P_{\cal C} L_\mu^\dagger \left( \mathbb{1} - P_{\cal C} \right)\right]
$$
and
$$
H_C(\gamma,L_\mu) \approx \left( - \frac{i}{2} \right) e^{\beta/2} \left( P_{\cal C} L_\mu P_{\cal C} L_\mu^\dagger - L_\mu P_{\cal C} L_\mu^\dagger P_{\cal C} \right).
$$
Since the contribution of the jump operator becomes dominant for $\beta \gg 1$, the effective recovery dynamics can be written as
$$
{\cal L}_B \approx e^{\beta} \sum_\mu {\cal D}[ P_{\cal C} L_\mu^\dagger \left( \mathbb{1} - P_{\cal C} \right)].
$$


\section{IV. Other noise models and code space constructions}

\subsection{A. Bit and phase flipping without correlated noise}
We first study a noise model where bit and phase flipping errors continuously happen to each physical qubit, whose master equation is expressed as
$$
{\cal L} = \Gamma \sum_{i=1}^n {\cal D}[\sigma_x^{(i)}] + \Gamma \sum_{i=1}^n {\cal D}[\sigma_z^{(i)}].
$$
This noise model equivalent to independent noise channels acting on each physical qubits, i.e., $e^{{\cal L} t} (\rho) =  ({\cal E}^{(n)} \circ \cdots \circ {\cal E}^{(1)}) (\rho)$, where ${\cal E}^{(i)}(\rho) =  (1-p)^2 \rho + p(1-p) (\sigma_x^{(i)} \rho \sigma_x^{(i)} +\sigma_z^{(i)} \rho \sigma_z^{(i)}) + p^2 \sigma_y^{(i)}\rho \sigma_y^{(i)} $ with $p = (1 - e^{-2\Gamma t})/2$. 

We construct the recovery dynamics based on the $[\![5,1,3]\!]$ code as in the main text, by taking $\gamma = e^{-\beta Q}/{\rm Tr} [e^{-\beta Q}]$ with $Q = -\sum_{i=1}^5 \sigma_x^{(i)} \sigma_z^{(i+1)} \sigma_z^{(i+2)} \sigma_x^{(i+3)}$ and by defining $\sigma_{x,z}^{(5l + i)} = \sigma_{x,z}^{(i)}$ for $l \in \mathbb{Z}$.
From Eq.~\eqref{eq:Suppl_Stab}, we obtain
$$
\begin{aligned}
{\cal L}_B &= \Gamma \sum_{i=1}^n {\cal D}\left[ e^{-\beta Q/2} \sigma_x^{(i)} e^{\beta Q/2} \right] + \Gamma \sum_{i=1}^n {\cal D}\left[ e^{-\beta Q/2} \sigma_z^{(i)} e^{\beta Q/2} \right] \\
&= \Gamma \sum_{i=1}^n {\cal D}\left[  \sigma_x^{(i)} \left( \cosh(\beta) \mathbb{1} - \sinh(\beta) \sigma_x^{(i-1)}\sigma_z^{(i)} \sigma_z^{(i+1)}\sigma_x^{(i+2)} \right) \left( \cosh(\beta) \mathbb{1} - \sinh(\beta) \sigma_x^{(i-2)}\sigma_z^{(i-1)} \sigma_z^{(i)}\sigma_x^{(i+1)} \right)  \right] \\
&\quad+  \Gamma \sum_{i=1}^n {\cal D}\left[  \sigma_z^{(i)} \left( \cosh(\beta) \mathbb{1} - \sinh(\beta) \sigma_x^{(i)}\sigma_z^{(i+1)} \sigma_z^{(i+2)}\sigma_x^{(i+3)} \right) \left( \cosh(\beta) \mathbb{1} - \sinh(\beta) \sigma_x^{(i-3)}\sigma_z^{(i-2)} \sigma_z^{(i-1)}\sigma_x^{(i)} \right)  \right].
\end{aligned}
$$
Figure~\ref{fig:Suppl_5q_XZ} shows that the recovery dynamics becomes effective for most values of $p$ when $\beta \gtrsim 2$. We also note that the lower noise level $p$, the smaller $\beta$ is required to have an effective recovery, similar to the case in the main text, which contains additional $ZZ$ correlated noise terms. Even in the regime where QEC is not effective $(p \gtrsim 7\%)$, applying the recovery dynamics still can reduced the noise level.

\begin{figure}[ht]
\includegraphics[width=0.45\linewidth]{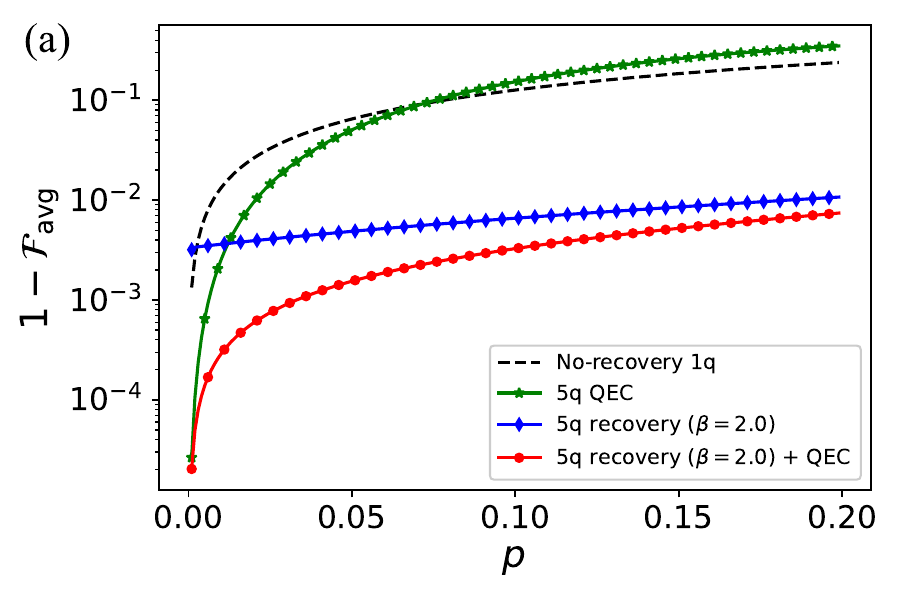}
\includegraphics[width=0.45\linewidth]{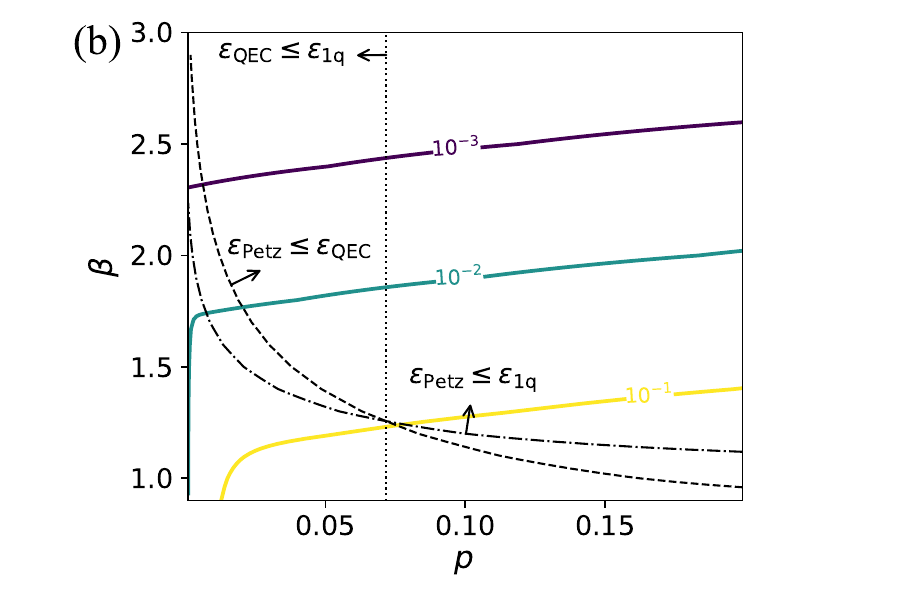}
\caption{(a) Comparison between the average infidelity $1 - {\cal F}_{\rm avg}$ of the $[\![5,1,3]\!]$ code space by applying QEC and the continuous recovery dynamics against bit and phase flipping noise.
(b) Average infidelity $1 - {\cal F}_{\rm avg}$ for applying the continuous recovery dynamics.  Dotted, dot-dashed, dashed lines describe the region where the QEC is effective, the continuous Petz recovery is effective, and the continuous Petz recovery is better than performing only QEC, respectively. Here, $\varepsilon_{\rm 1q}$, $\varepsilon_{\rm QEC}$, and $\varepsilon_{\rm Petz}$ refer to the average infidelities $1 - {\cal F}_{\rm avg}$ for a single qubit without encoding, the $[\![5,1,3]\!]$ code with QEC, and applying recovery dynamics to the $[\![5,1,3]\!]$ code without QEC, respectively.}
\label{fig:Suppl_5q_XZ}
\end{figure}

\subsection{B. Amplitude damping noise}
Next, we consider amplitude damping noise, described by the following Lindblad operator:
$$
{\cal L} = \Gamma_- \sum_{i=1}^n {\cal D}[\sigma_-^{(i)}].
$$
Similarly to the previous case, the noise model can be described as $e^{{\cal L} t} (\rho) =  ({\cal E}^{(n)} \circ \cdots \circ {\cal E}^{(1)}) (\rho)$, where
$$
{\cal E}^{(i)}(\rho) = E_0^{(i)} \rho E_0^{(i) \dagger} + E_1^{(i)} \rho E_1^{(i)\dagger}.
$$
Here, the Kraus operators of the individual amplitude damping channel for each $i$th qubit are
$$
\begin{aligned}
E_0^{(i)} &= \ket{0}_i \bra{0} + \sqrt{1-p_-} \ket{1}_i \bra{1}\\
E_1^{(i)} &= \sqrt{p_-} \ket{0}_i\bra{1},
\end{aligned}
$$
with $p_- = 1 - e^{-\Gamma_- t}$.

We adopt the $[\![4,1]\!]$ code that has been first studied in Ref.~\cite{Leung97} to construct approximate QEC against amplitude damping. The code space ${\cal C}$ is constructed by encoding a single logical qubit into four physical qubits as 
$$
\begin{aligned}
\ket{0}_L &= \frac{1}{\sqrt{2}} \left( \ket{0000} + \ket{1111} \right)\\
\ket{1}_L &= \frac{1}{\sqrt{2}} \left( \ket{0011} + \ket{1100} \right).
\end{aligned}
$$
We then take $Q = -P_{\cal C} = -(\ket{0}_L\bra{0} + \ket{1}_L\bra{1})$ to construct the recovery dynamics, by following Eqs.~\eqref{eq:Suppl_P_rev_L} and \eqref{eq:Suppl_P_rev_H}. The behaviors of the average fidelity by changing noise level $p_-$ and recovery strength $\beta$ is described in Fig.~\ref{fig:Suppl_4q_amp_damp}. One may note that $\beta$ to obtain effective recovery dynamics seems to be larger than the $[\![5,1,3]\!]$ code case, but we point out that the overall strength of the recovery dynamics for the two cases are comparable when comparing the jump operator's norms of the reverse dynamics.

\begin{figure}[ht]
\includegraphics[width=0.45\linewidth]{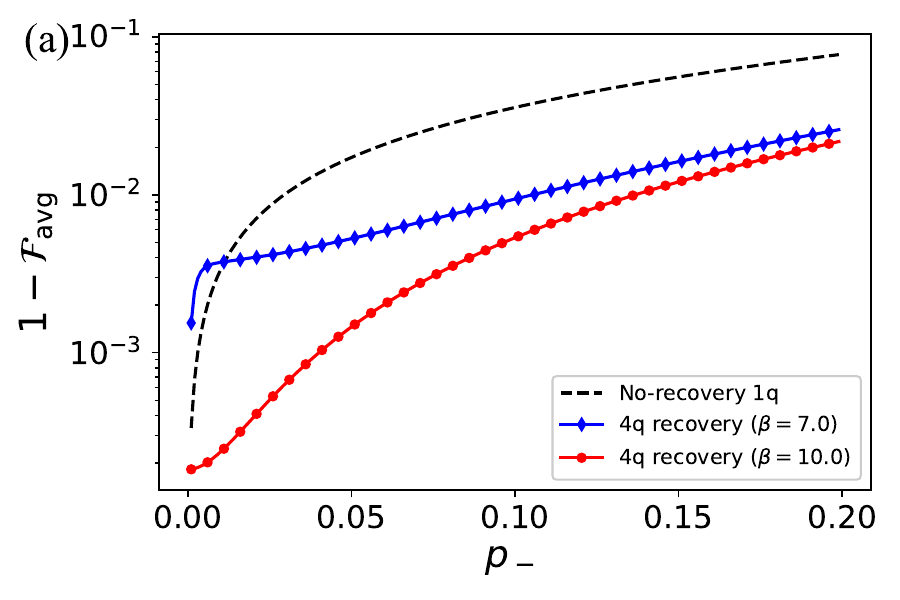}
\includegraphics[width=0.45\linewidth]{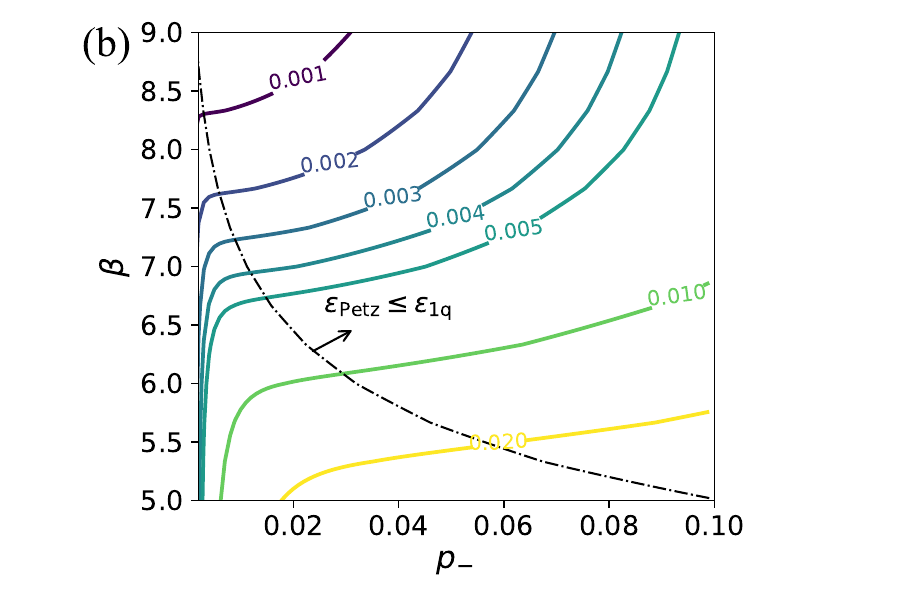}
\caption{(a) Comparison between the average infidelity $1 - {\cal F}_{\rm avg}$ of the $[\![4,1]\!]$ code by applying QEC and the continuous recovery dynamics against amplitude damping noise. (b) Average infidelity $1 - {\cal F}_{\rm avg}$ for applying the continuous recovery dynamics.  Dot-dashed line describes the region where the continuous Petz recovery is effective. Here, $\varepsilon_{\rm 1q}$ and $\varepsilon_{\rm Petz}$ refer to the average infidelities $1 - {\cal F}_{\rm avg}$ for a single qubit without encoding and applying recovery dynamics to the $[\![4,1]\!]$ code, respectively.}
\label{fig:Suppl_4q_amp_damp}
\end{figure}

\subsection{C. Optimizing the code space}
Suppose that a quantum state in a $d$-dimensional Hilbert space is encoded into an orthonormal basis ${\cal C} = \{ \ket{0}_L, \cdots, \ket{d-1}_L \}$.  The average fidelity of the code space after some time $\tau$ under the Lindblad dynamics ${\cal L}_S$ is given as
$
{\cal F}_{\rm avg} = \int_{\cal C} d\psi \bra \psi {\cal T} \left[ e^{ \int_0^\tau {\cal L}_S dt} \right] (\ket{\psi}\bra{\psi}) \ket{\psi},
$
where the integral is performed over the uniform Haar measure $d\psi$ of the code space ${\cal C}$. Instead of averaging over $d\psi$ in the code space, we use the entanglement fidelity which is closely related to the average fidelity \cite{Horodecki99, Nielsen02},
$$
{\cal F}_e = \bra{\Psi} \left( {\cal T} \left[ e^{ \int_0^{\delta t} {\cal L}_S dt} \right]  \otimes {\cal I} \right) (\ket\Psi\bra\Psi) \ket{\Psi}.
$$
One may consider optimizing the code space ${\cal C}$ to get the highest entanglement fidelity. However, such an optimization problem is a complex problem as ${\cal L}_S = {\cal L} + {\cal L}_B$ depends on the choice of the code space ${\cal C}$.

In order to detour this problem, we recall that the recovery dynamics can be considered as the Petz recovery map applied at short time $\delta t$ after the forward dynamics, i.e.,
$$
{\cal T} \left[ e^{ \int_0^{\delta t} {\cal L}_S dt} \right] \approx  e^{\delta t {\cal L}_B} e^{\delta t {\cal L}},
$$
where
$$
e^{{\cal L}_B \delta t} (\bullet) = \gamma^{\frac{1}{2}} e^{{\cal L}^\dagger \delta t} \left( \gamma_{\delta t}^{-\frac{1}{2}} (\bullet) \gamma_{\delta t}^{-\frac{1}{2}} \right) \gamma^{\frac{1}{2}},
$$
with $\gamma_{\delta t} = e^{{\cal L} \delta t}(\gamma)$. We also take $\gamma = e^{-\beta Q}/{\rm Tr}[e^{-\beta Q}]$ in the limit $\beta \rightarrow \infty$, so that $\gamma \approx P_{\cal C}/d$, where $d$ is a dimension of the code space. Under these assumptions, the fidelity of a quantum state in the code space $\ket\psi \in {\cal C}$ becomes
\begin{equation}
\begin{aligned}
\bra{\psi} {\cal T} \left[ e^{ \int_0^{\delta t} {\cal L}_S dt} \right]  (\ket\psi\bra\psi) \ket{\psi}
&\approx  \bra{\psi} \gamma^{\frac{1}{2}} e^{\delta t {\cal L}^\dagger} \left( \gamma_{\delta t}^{-\frac{1}{2}} e^{\delta t {\cal L}}  (\ket\psi\bra\psi) \gamma_{\delta t}^{-\frac{1}{2}}  \right) \gamma^{\frac{1}{2}} \ket{\psi}\\
&=\frac{1}{d} \bra{\psi} e^{\delta t {\cal L}^\dagger} \left( \gamma_{\delta t}^{-\frac{1}{2}} e^{\delta t {\cal L}}  (\ket\psi\bra\psi) \gamma_{\delta t}^{-\frac{1}{2}}  \right) \ket{\psi}\\
&=\frac{1}{d} {\rm Tr} \left[  e^{\delta t {\cal L}}( \ket{\psi} \bra{\psi}) \gamma_{\delta t}^{-\frac{1}{2}} e^{\delta t {\cal L}}  (\ket\psi\bra\psi) \gamma_{\delta t}^{-\frac{1}{2}}  \right].
\end{aligned}
\label{eq:Suppl_fid_formula}
\end{equation}

By following Eq.~\eqref{eq:Suppl_fid_formula}, we obtain an alternative form of the entanglement fidelity,
$$
{\cal F}_e =\frac{1}{d^3} \sum_{i,k=0}^{d-1} {\rm Tr} \left [ O_{ki}^{\delta t} \gamma_{\delta t}^{-\frac{1}{2}}  O_{ik}^{\delta t} \gamma_{\delta t}^{-\frac{1}{2}} \right] = \frac{1}{d^2} \sum_{i,k=0}^{d-1} {\rm Tr} \left [ O_{ki}^{\delta t} (P_{\cal C})_{\delta t}^{-\frac{1}{2}}  O_{ik}^{\delta t} (P_{\cal C})_{\delta t}^{-\frac{1}{2}} \right],
$$
where $O_{ki}^{\delta t} = e^{\delta t {\cal L}} (\ket{k}_L\bra{i})$ and $\gamma_{\delta t}^{-\frac{1}{2}} = (P_{\cal C}/d)_{\delta t}^{-\frac{1}{2}} = d^{\frac{1}{2}} (P_{\cal C})_{\delta t}^{-\frac{1}{2}} $. Note that only the forward dynamics ${\cal L}$ is needed to obtain the entanglement fidelity for a given code space. We then express the code space by a unitary operator as 
$$
{\cal C}_U = \{ U \ket{0}_L, \cdots, U\ket{d-1}_L \},
$$
where $\{ \ket{0}_L, \cdots, \ket{d-1}_L \}$ can be any initial guess of the code space. We also note that the unitary operator can be parametrised with 
$2^{2n}$ independent real parameters when considering $n$-physical qubits,. Finally, the optimization problem can be written as
\begin{equation}
\begin{aligned}
{\rm maximize:} &\qquad {\cal F}_e(U) = \frac{1}{d^2} \sum_{i,k=0}^{d-1} {\rm Tr} \left [ O_{ki}^{\delta t}(U) (P_{{\cal C}_U})_{\delta t}^{-\frac{1}{2}} O_{ik}^{\delta t}(U) (P_{{\cal C}_U})_{\delta t}^{-\frac{1}{2}} \right]\\
{\rm constraint:} &\qquad U^\dagger U = \mathbb{1},
\end{aligned}
\label{eq:Suppl_opt_prob}
\end{equation}
where $O_{ki}^{\delta t}(U) = e^{\delta t {\cal L}} (U \ket{k}\bra{i} U^\dagger)$ and $(P_{{\cal C}_U})_{\delta t}^{-\frac{1}{2}} = \left( e^{\delta t {\cal L}} (P_{\cal C})\right)^{-1/2}$.

We note that, however, ${\cal F}_e$ behaves singularly for $\delta t \rightarrow 0$ as $(P_{{\cal C}_U})_{\delta t}^{-\frac{1}{2}}$ becomes divergent. Hence, we take a finite $\delta t = 0.001$ to solve the optimization problem for this work. For the numerical optimization, we adopt ``scipy.optimize" in the SciPy library with the method ``Powell". Another challenge is that the number of free parameters scales exponentially with the number of physical qubits $n$. For example, $n=4$ requires $2^8 = 256$ independent parameters to parametrize $U$, and this number becomes $1024$ for $n=5$. Another problem of having a large $n$ is that the evaluation of the forward dynamics $e^{{\cal L} \delta t}$. Using the standard vectorisation method \cite{Zwolak04}, this will involve the integration of the Liouville super-operator matrix with size $2^{2n} \times 2^{2n}$. Due to these technical limitations, the optimisation problem can only be done practically for a small system size $N \lesssim 6$.

\newpage
\subsection{D. Encoding two-logical qubits into multiple-physical qubits}
Based on the optimization protocol, we construct the code space that encodes two-logical qubits into $n$-physical qubits. We study to different noise models, the bit-flipping noise $\Gamma_X \sum_{i=1}^n{\cal D} [\sigma_x^{(i)}]$ and the amplitude damping noise $\Gamma_- \sum_{i=1}^n{\cal D} [\sigma_-^{(i)}]$. We perform a numerical optimization of the code space by following Eq.~\eqref{eq:Suppl_opt_prob}, and construct the recovery dynamics ${\cal L}_B$ using Eqs.~\eqref{eq:Suppl_P_rev_L} and \eqref{eq:Suppl_P_rev_H} based on the optimized code space.

Figure~\ref{fig:Suppl_opt_basis} shows that after optimizing the code space, a higher entanglement fidelity can be obtained after encoding and applying the recovery protocol. We also observe that there seems to be a threshold value of $\beta$ to get a higher entanglement fidelity by increasing the number of physical qubits $n$.  For the bit-flipping noise, $n\geq4$ provides an efficient recovery of the encoded qubits, while $n\geq5$ is required for the amplitude damping noise.
\begin{figure}[ht]
\includegraphics[width=0.45\linewidth]{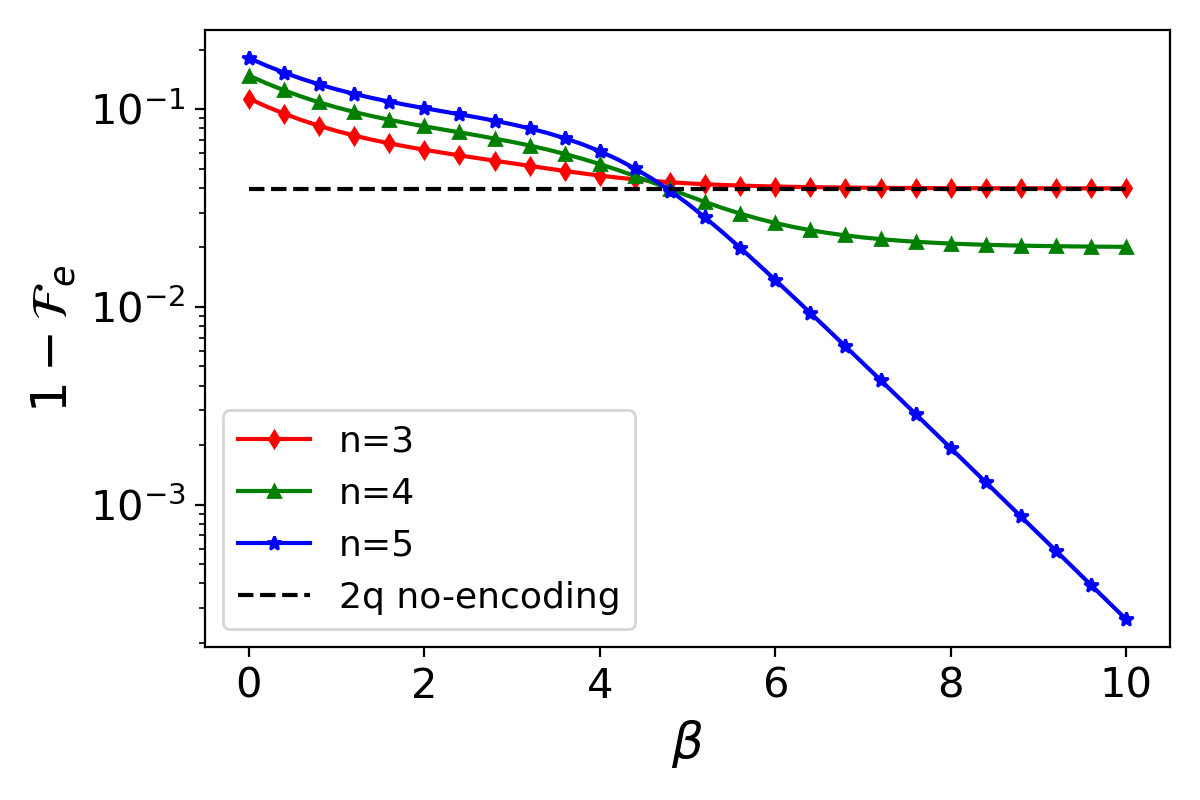}
\includegraphics[width=0.45\linewidth]{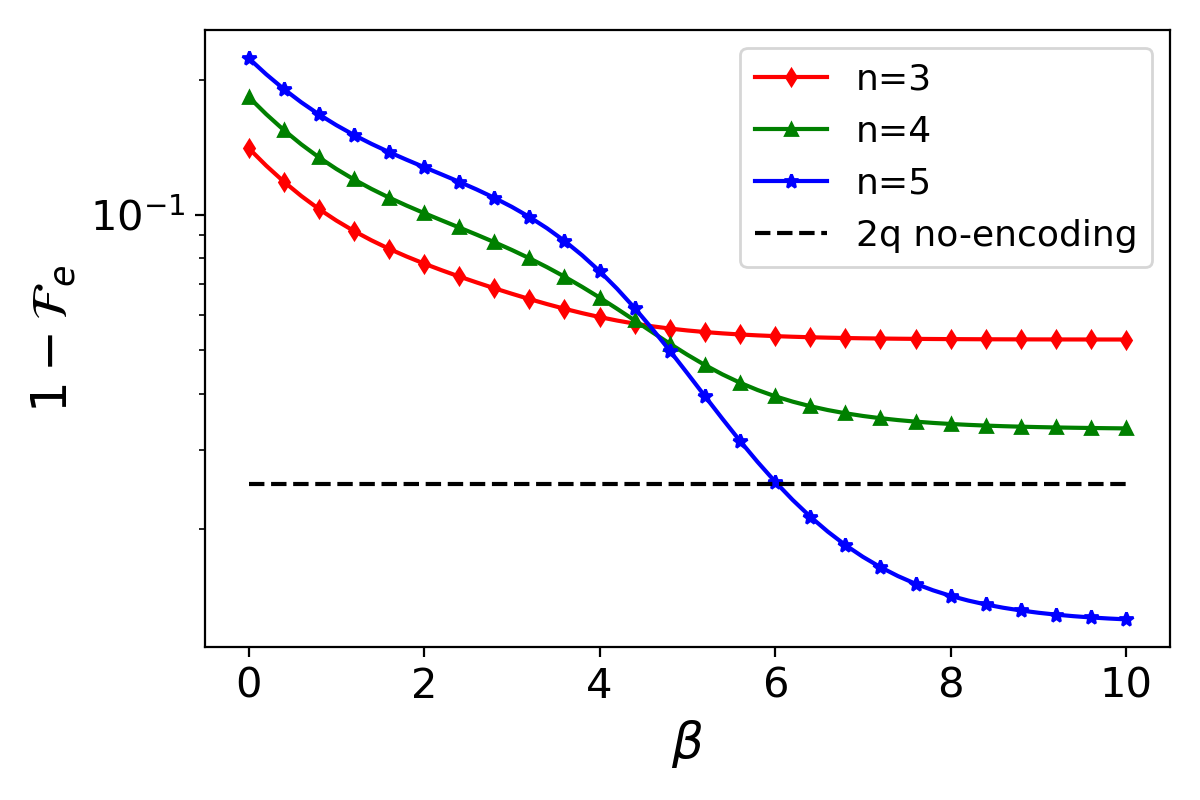}
\caption{Error rate in terms of the entanglement fidelity $1-{\cal F}_e$ after applying the recovery dynamics ${\cal L}_B$ to $n$-physical qubits which encodes two-logical qubits for (a) The bit-flipping noise $\Gamma_X \sum_{i=1}^n{\cal D} [\sigma_x^{(i)}]$ with rate $p_X = (1- e^{-2\Gamma_X t})/2 = 0.02$ and (b) the amplitude damping noise $\Gamma_- \sum_{i=1}^n{\cal D} [\sigma_-^{(i)}]$ with rate $p_- = 1-e^{-\Gamma_-t}=0.05$.}
\label{fig:Suppl_opt_basis}
\end{figure}

We also show that the time-dependent Hamiltonian dynamics of the two logical qubits can be more efficiently simulated by suppressing the noise via the recovery dynamics. To show this, we consider the optimized code space ${\cal C}$ for two-logical qubits encoded in five-physical qubits ($n=5$) against the amplitude damping noise $\Gamma_- \sum_{i=1}^5{\cal D} [\sigma_-^{(i)}]$. Let us suppose that a quantum state, initially prepared at $\ket{00}_L$ evolves under the Hamiltonian, 
$$
H(t) = 2 \sin(7 t) X_L \otimes X_L + 1.4 \sin(3 t) Z_L \otimes Z_L + 2 \cos(10 t) X_L \otimes \mathbb{1} + \mathbb{1} \otimes Z_L,
$$
with the amplitude damping noise $\Gamma_- \sum_{i=1}^5{\cal D} [\sigma_-^{(i)}]$.  The time-evolution of the two-logical qubits under the recovery dynamics can be written as
$$
\dot \rho = {\cal L}_S (\rho)= -i \left[H(t) + \Gamma_- \sum_{i=1}^5 H_C(\gamma,\sigma_-^{(i)}) ,\rho \right] + \Gamma_- \sum_{i=1}^5{\cal D} [\sigma_-^{(i)}](\rho) + \Gamma_- \sum_{i=1}^5{\cal D} [\gamma^{\frac{1}{2}} \sigma_-^{(i)} \gamma^{-\frac{1}{2}} ](\rho),
$$
by taking $\gamma = e^{-\beta Q}/{\rm Tr}[e^{-\beta Q}]$ with $Q = -P_{\cal C}$.
We then compare the state under the recovery dynamics
$$
\rho_\tau = {\cal T} \left[ e^{\int_0^\tau {\cal L}_S dt'} \right] (\ket{00}_L\bra{00}),
$$
to the noise-free evolution,
$$
\rho_\tau^{\rm ideal} = U(\tau) (\ket{00}_L\bra{00}) U^\dagger(\tau),
$$
where $U(\tau) = {\cal T} \left[ e^{-i \int_0^\tau H(t) dt' }\right]$.

As  a two-qubit state can be expressed in terms of two-qubit Pauli operators as $\rho = \frac{1}{4} \sum_{i,j}^{0,1,2,3} T_{ij} \left( \sigma_i \otimes \sigma_j \right)$ with $\sigma_0 = \mathbb{1}$, the dynamics of a logical two-qubit state can be fully characterized by evaluating the expectation values of $\langle O_1 \otimes O_2 \rangle $ with $O_{1,2} \in \{ \mathbb{1}_L, X_L, Y_L, Z_L \}$. Figure~\ref{fig:Suppl_QEC_2q_all} shows that all the expectation values of the encoded logical qubit becomes closer to the ideal (noise-free) case after applying the recovery dynamics ${\cal L}_B$, and the fidelity between the noise-free state becomes closer to $1$.
\begin{figure}[ht]
\includegraphics[width=0.23\linewidth]{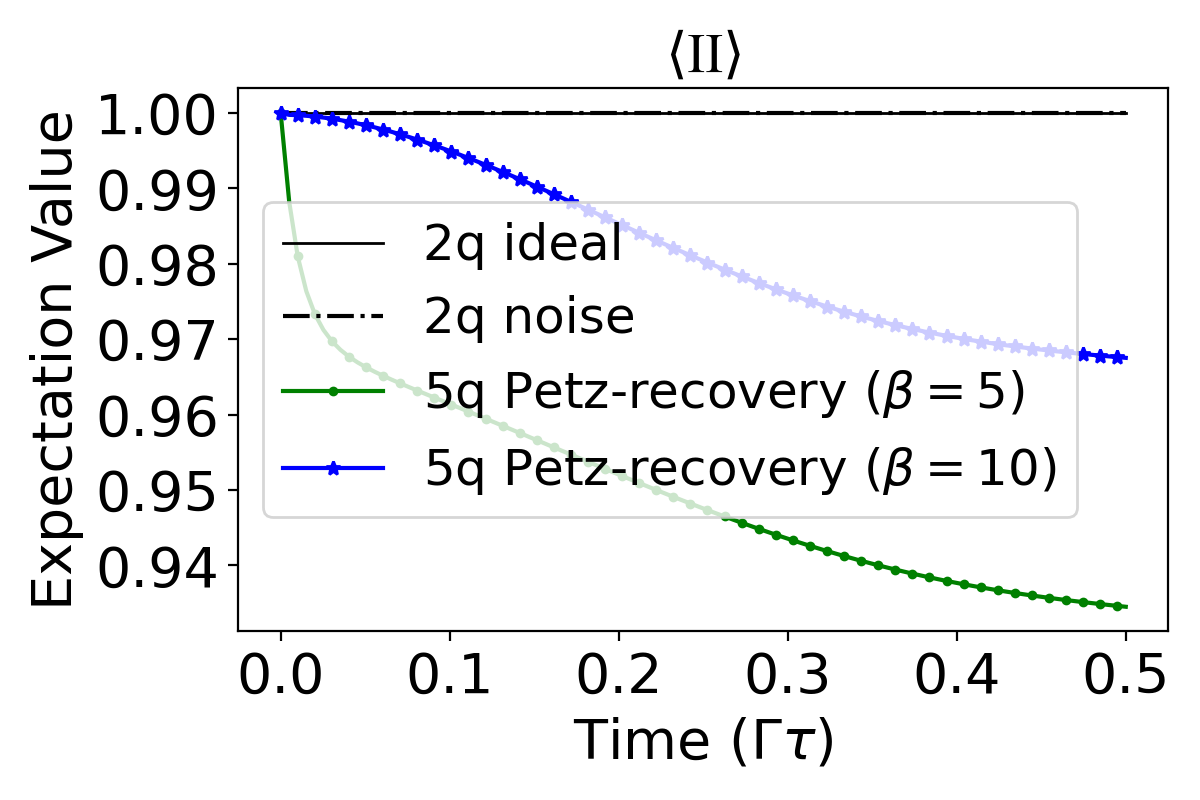}
\includegraphics[width=0.23\linewidth]{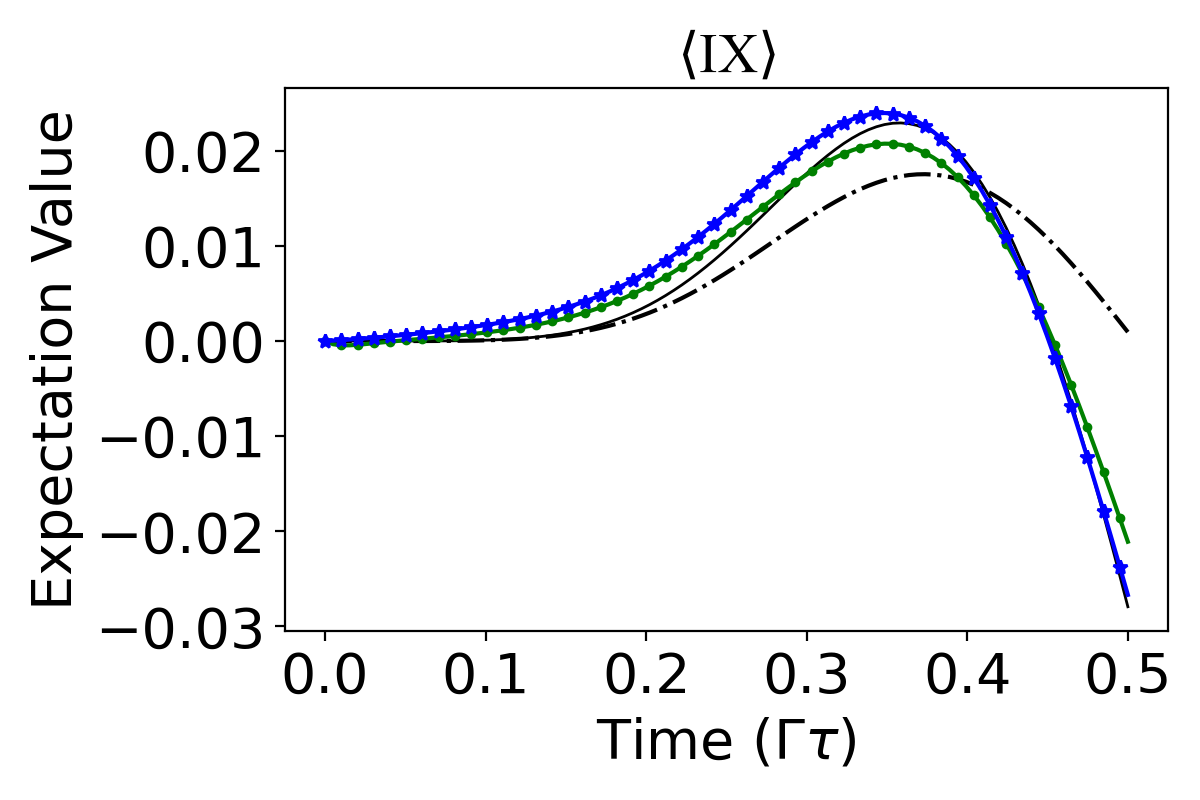}
\includegraphics[width=0.23\linewidth]{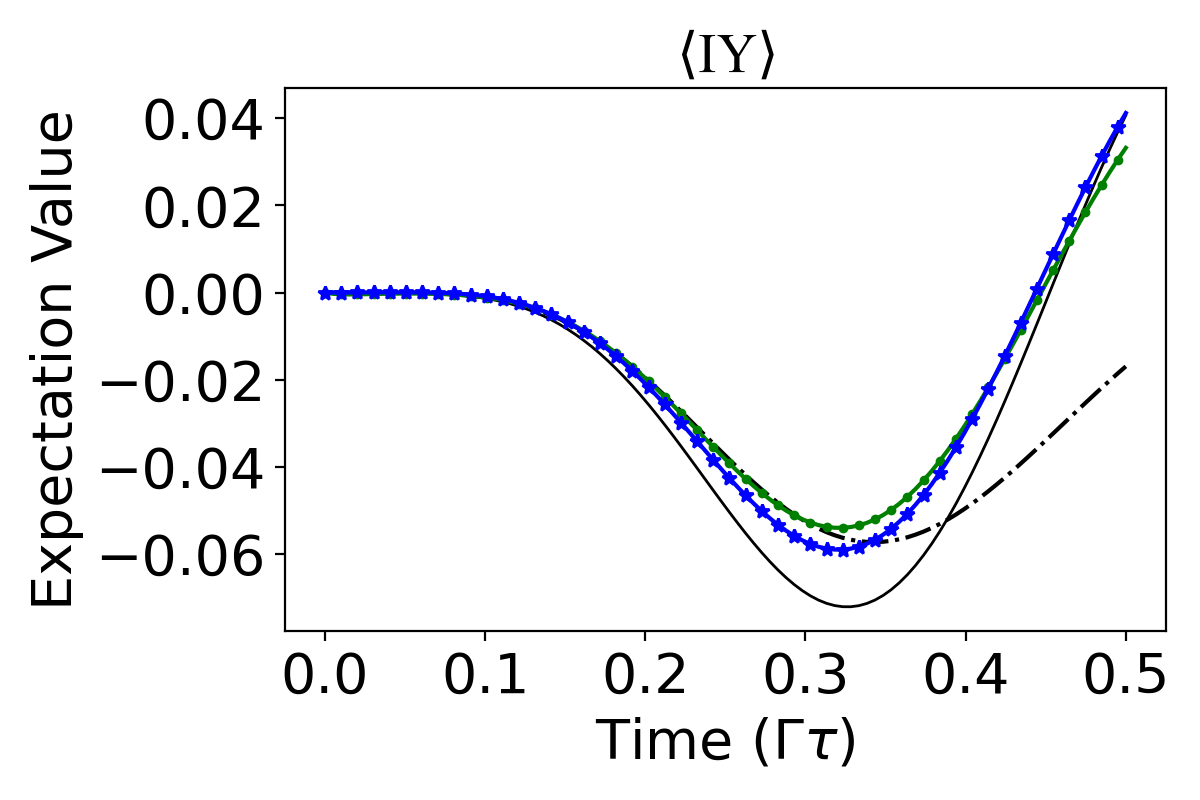}
\includegraphics[width=0.23\linewidth]{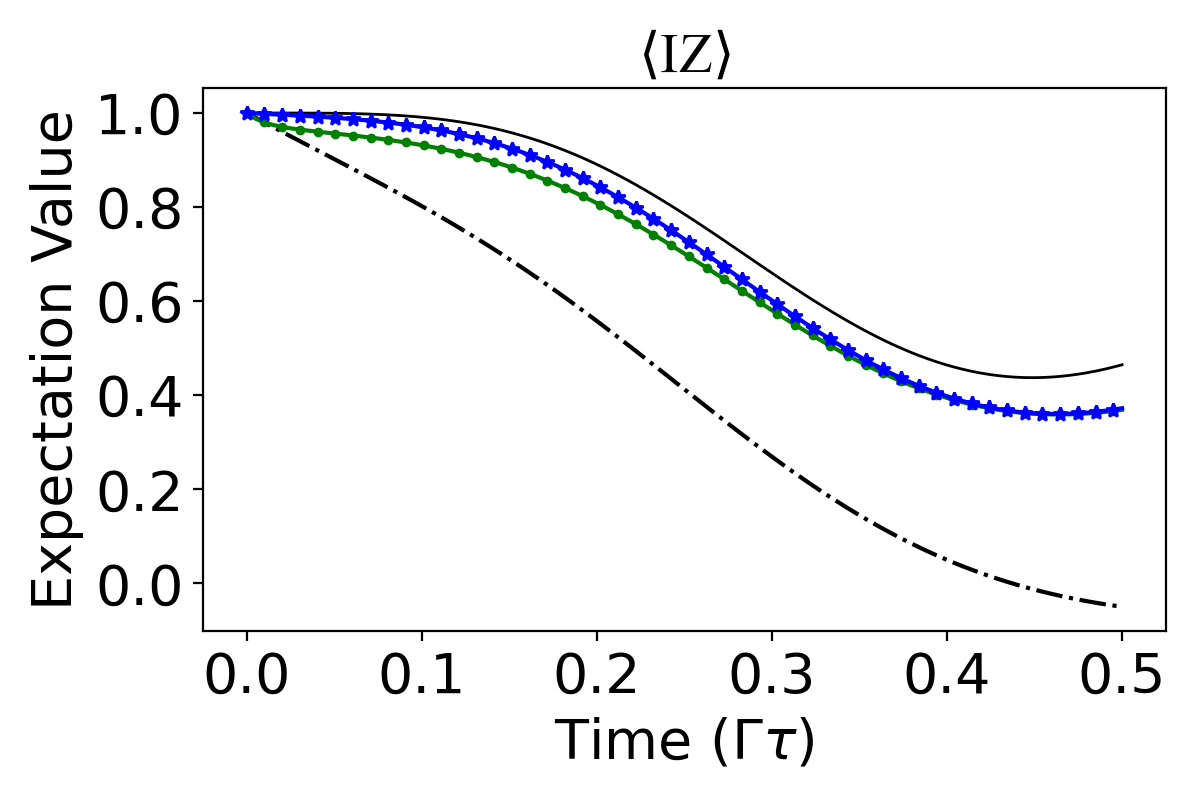}
\includegraphics[width=0.23\linewidth]{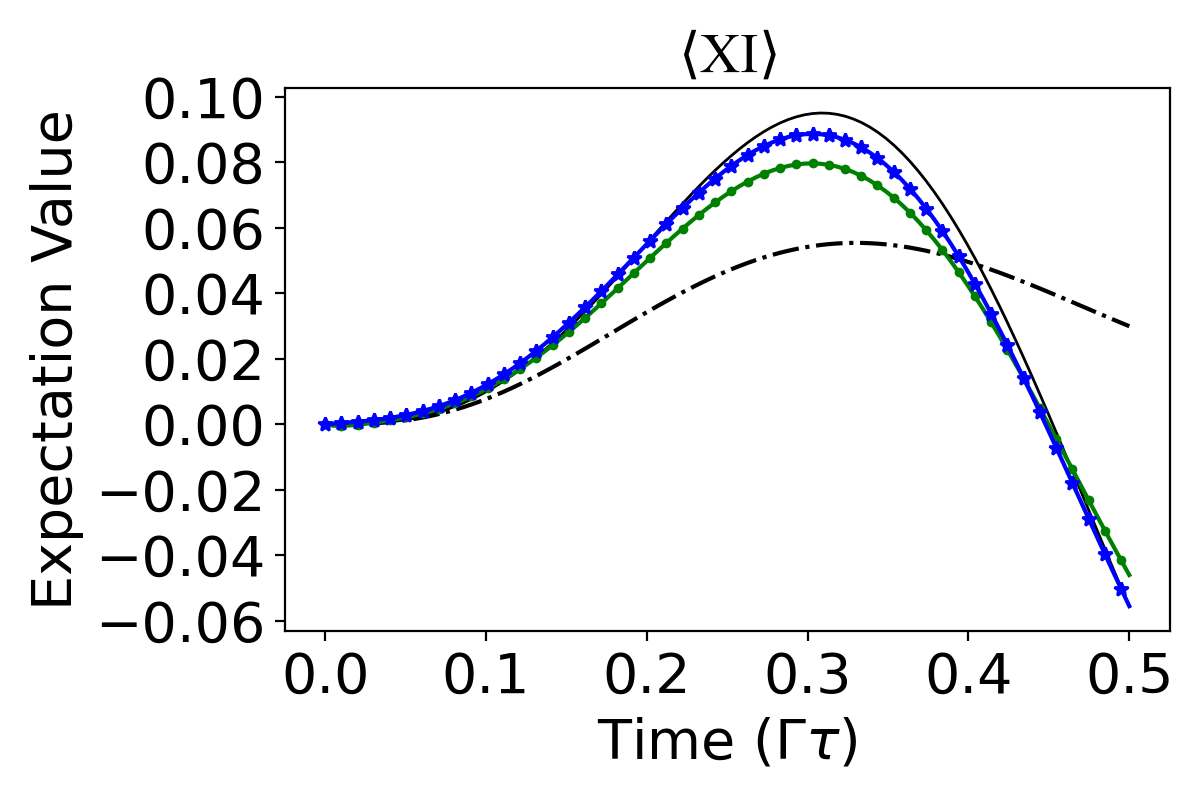}
\includegraphics[width=0.23\linewidth]{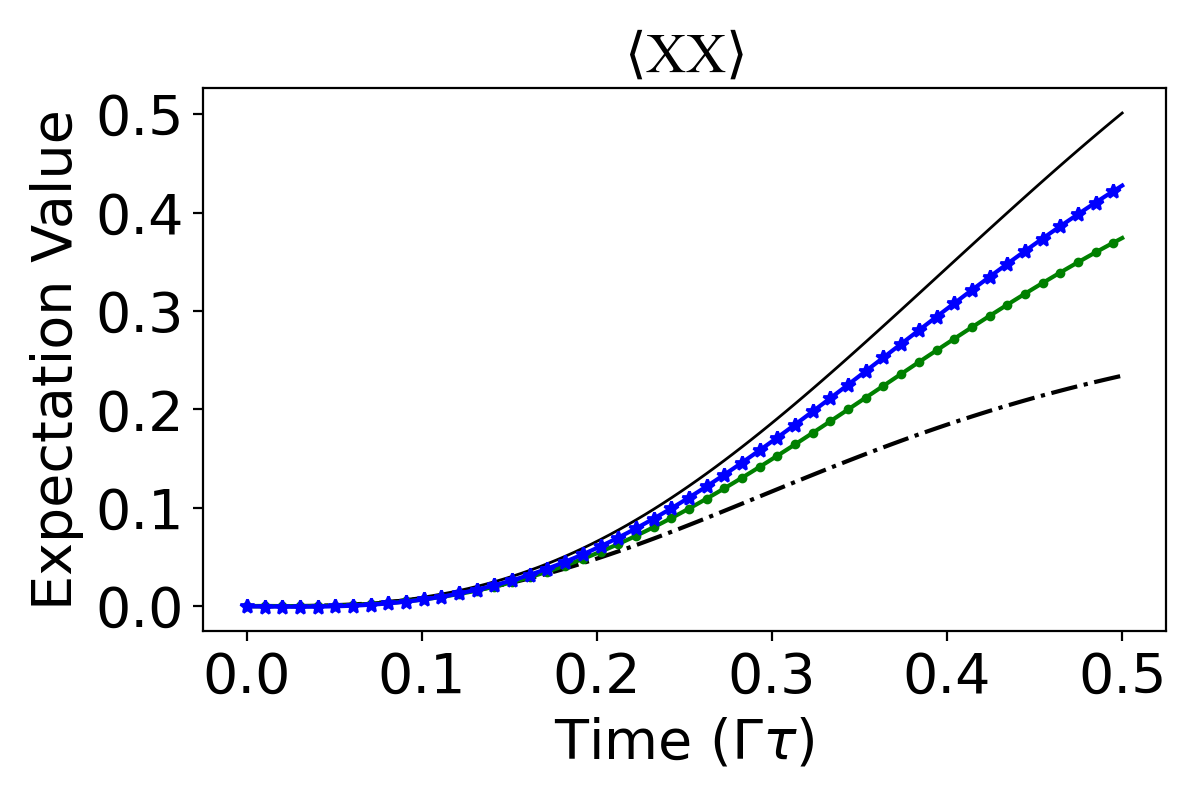}
\includegraphics[width=0.23\linewidth]{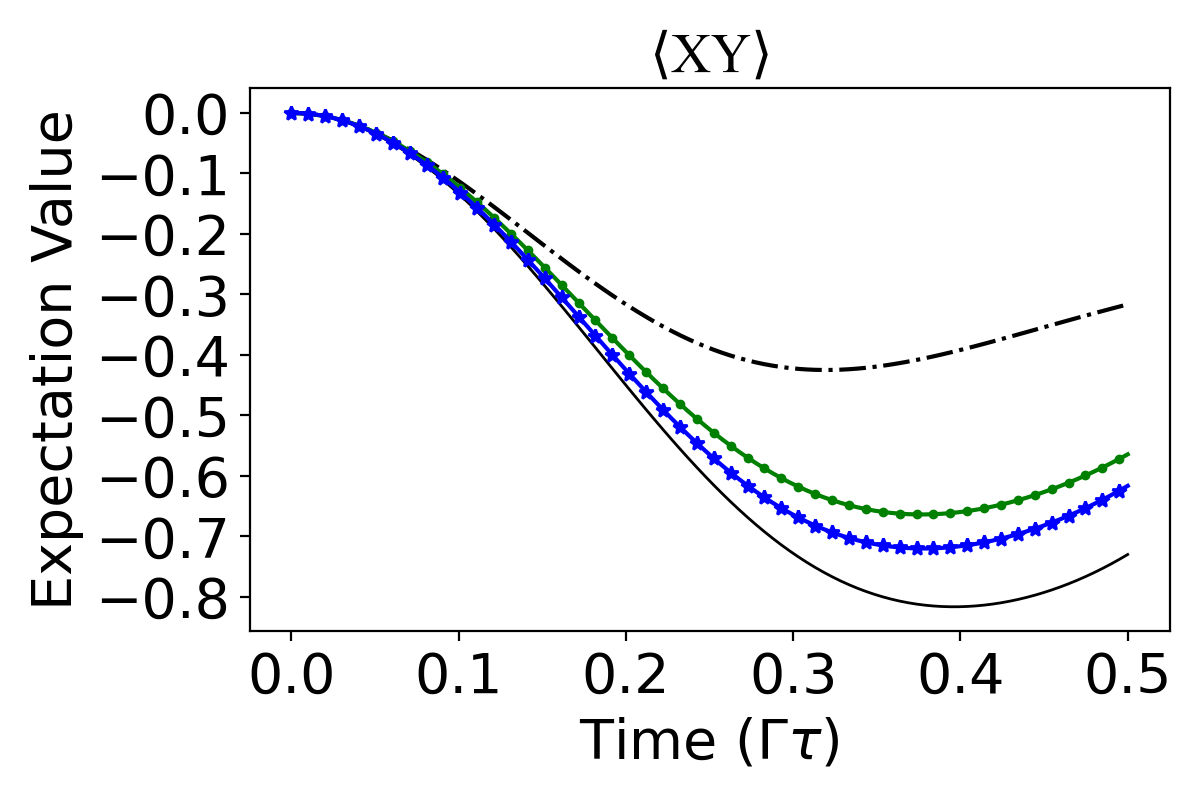}
\includegraphics[width=0.23\linewidth]{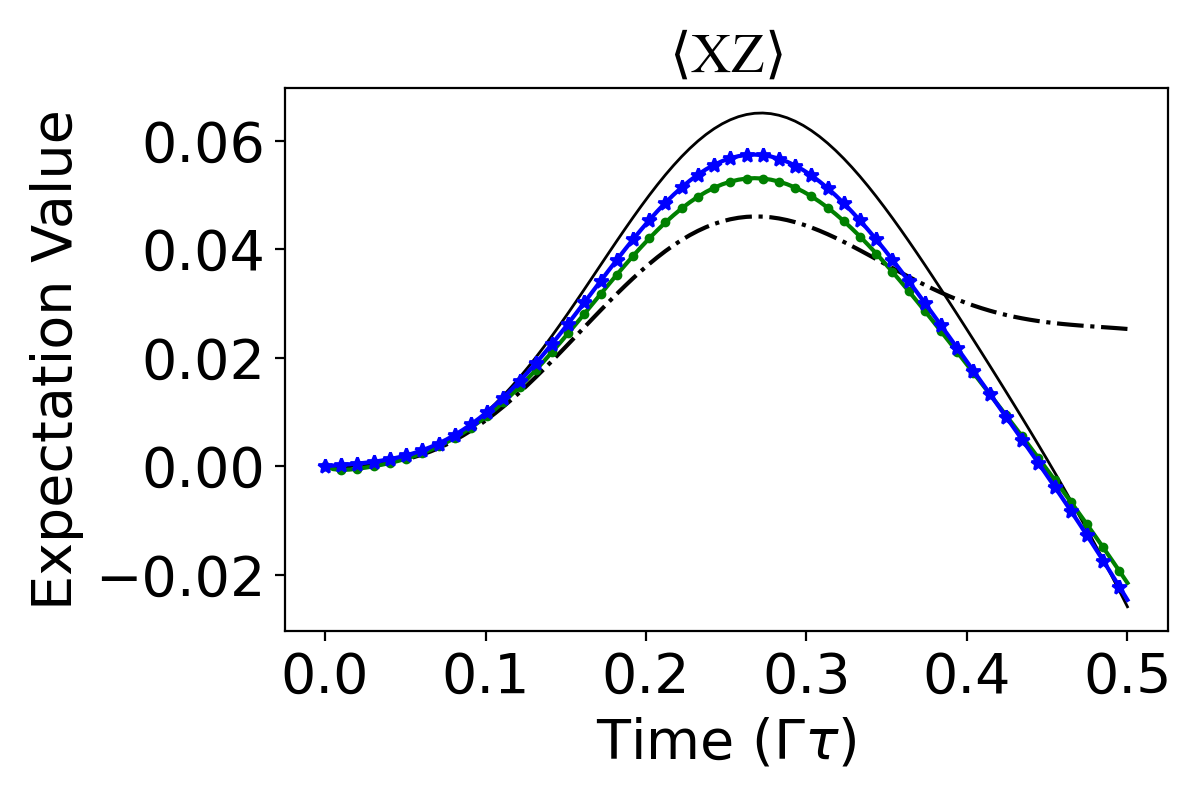}
\includegraphics[width=0.23\linewidth]{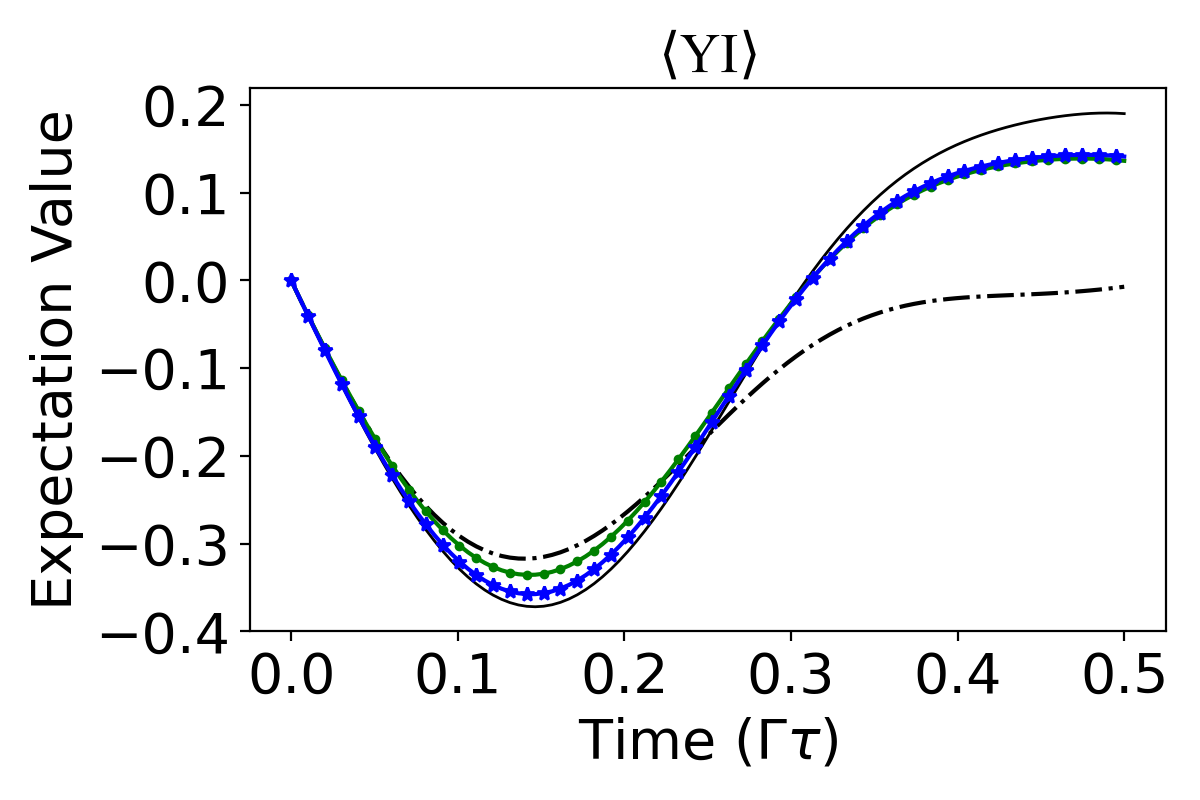}
\includegraphics[width=0.23\linewidth]{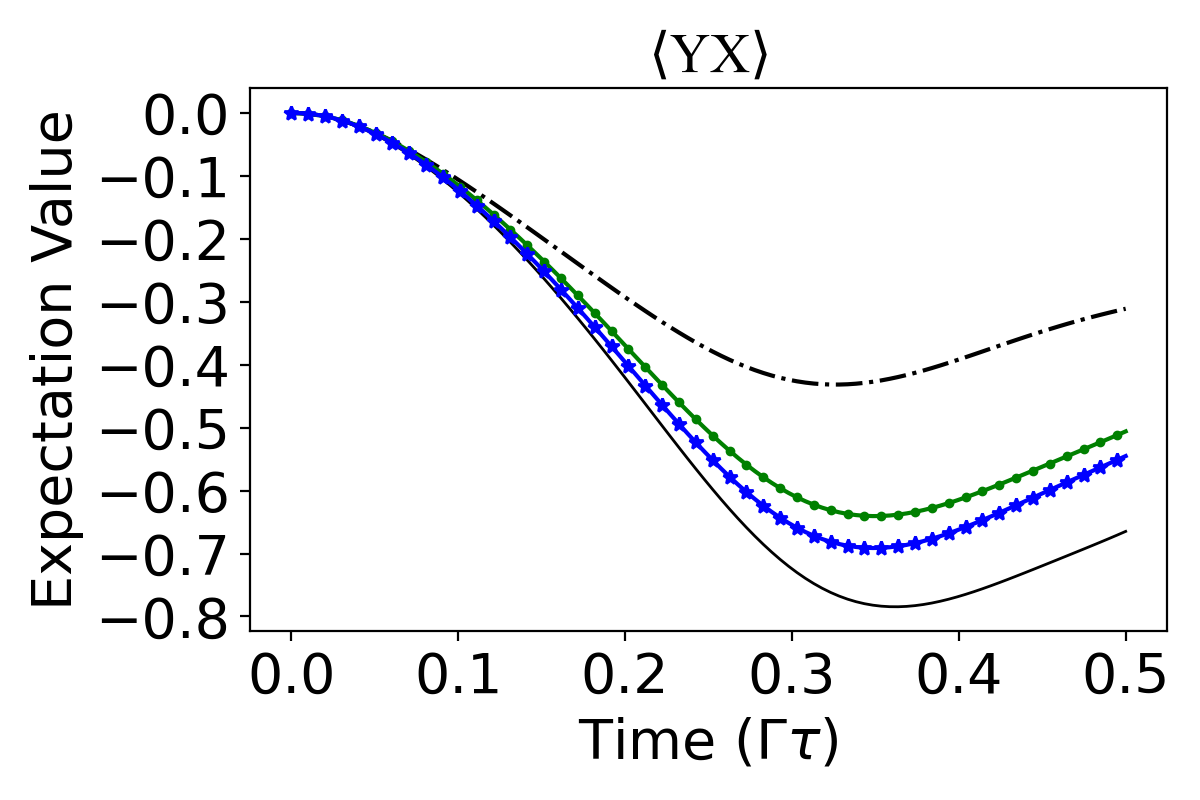}
\includegraphics[width=0.23\linewidth]{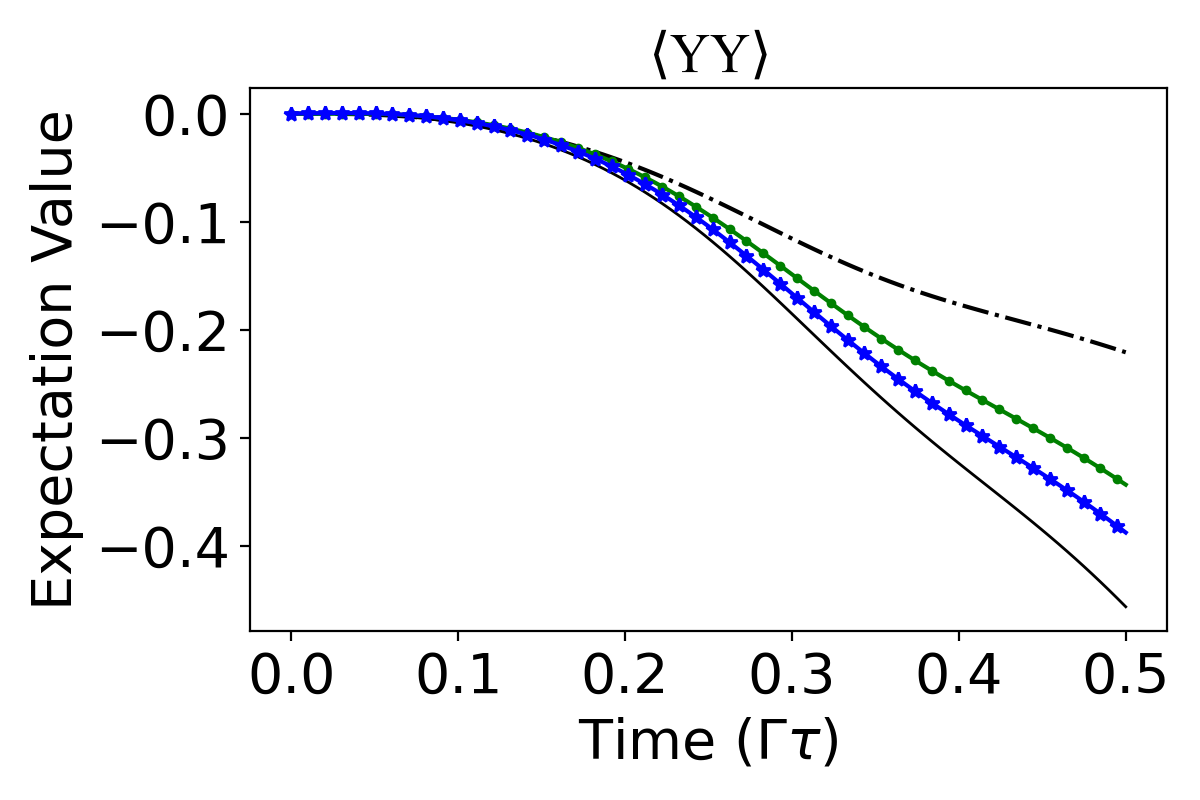}
\includegraphics[width=0.23\linewidth]{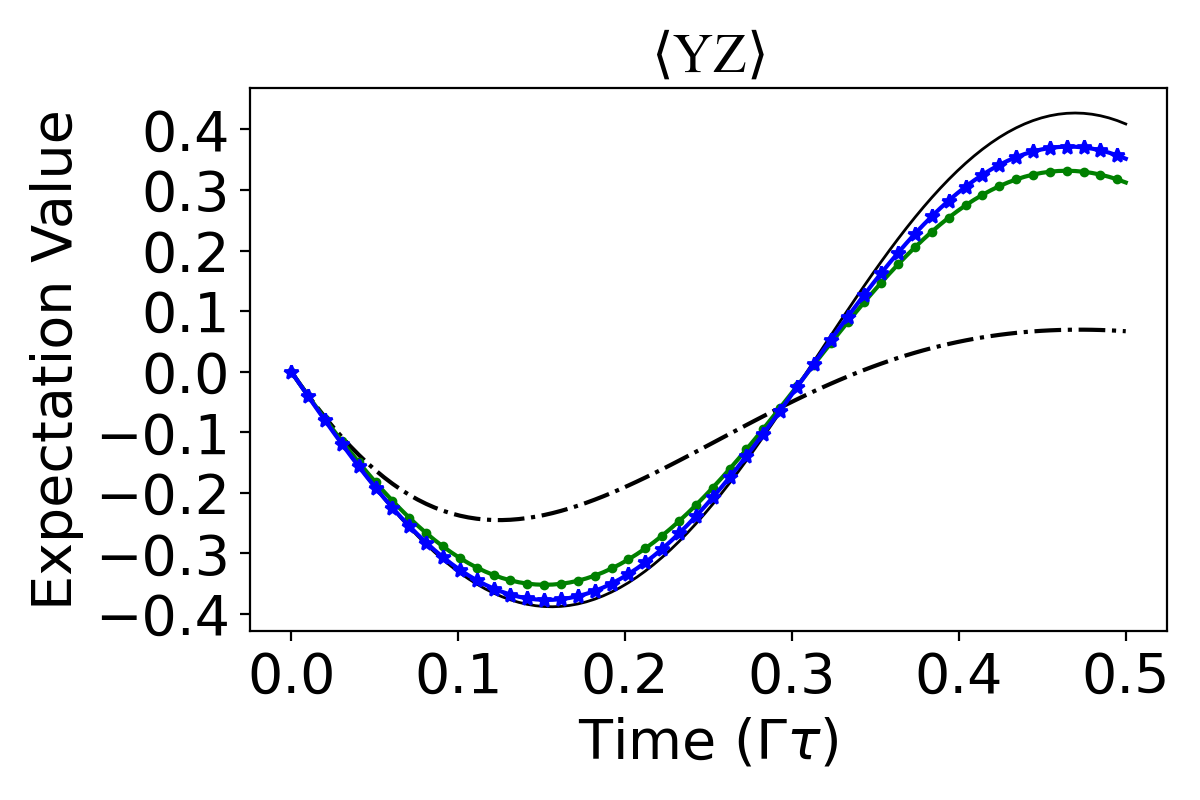}
\includegraphics[width=0.23\linewidth]{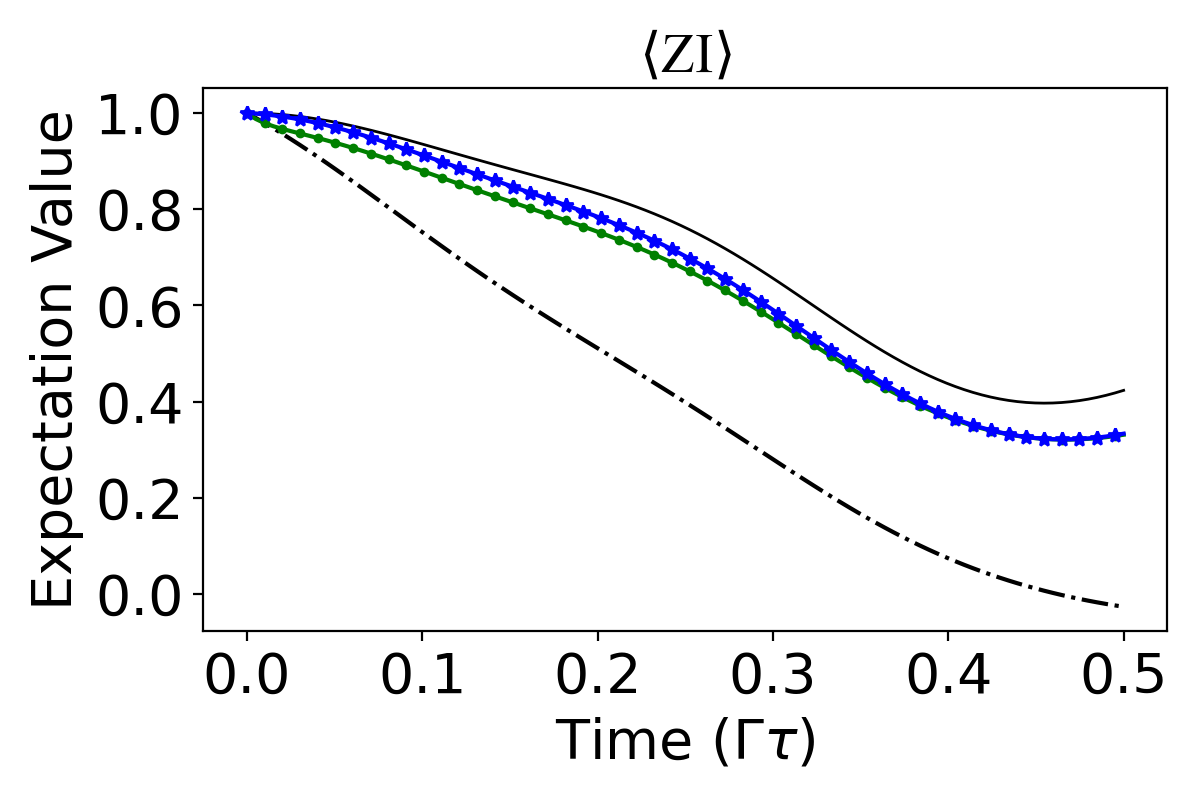}
\includegraphics[width=0.23\linewidth]{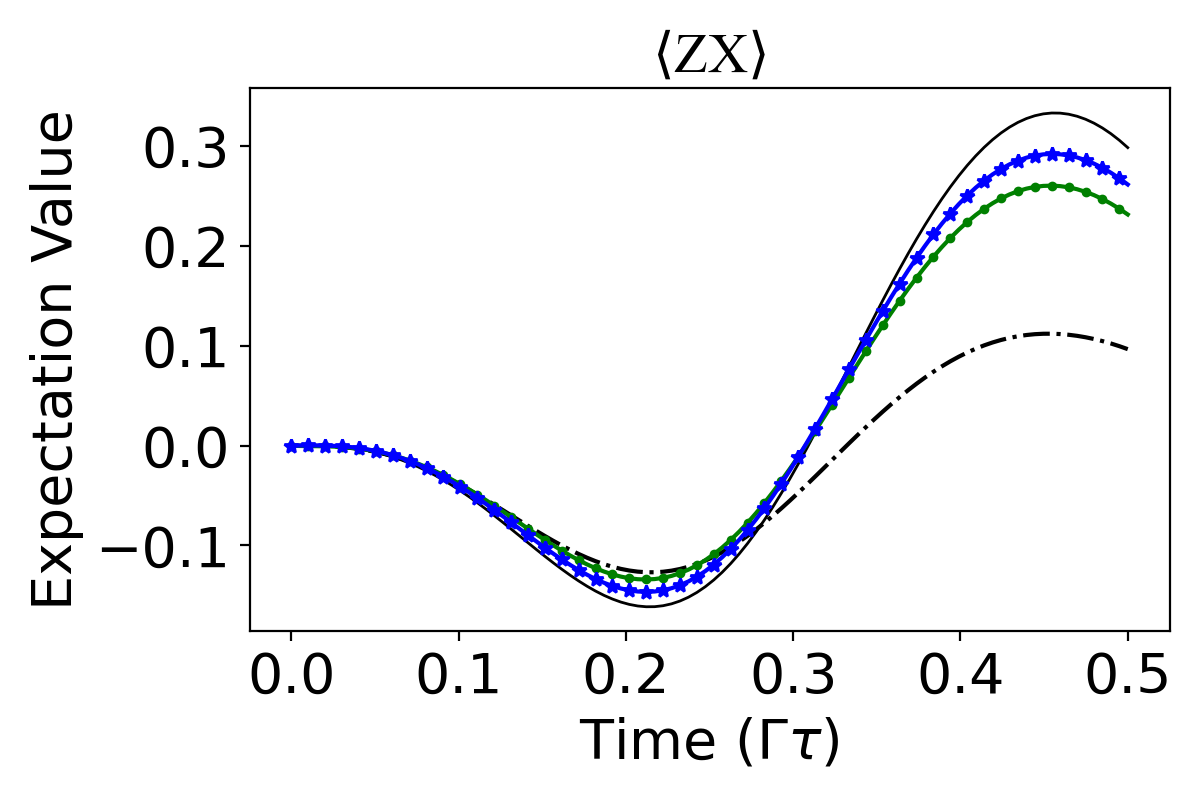}
\includegraphics[width=0.23\linewidth]{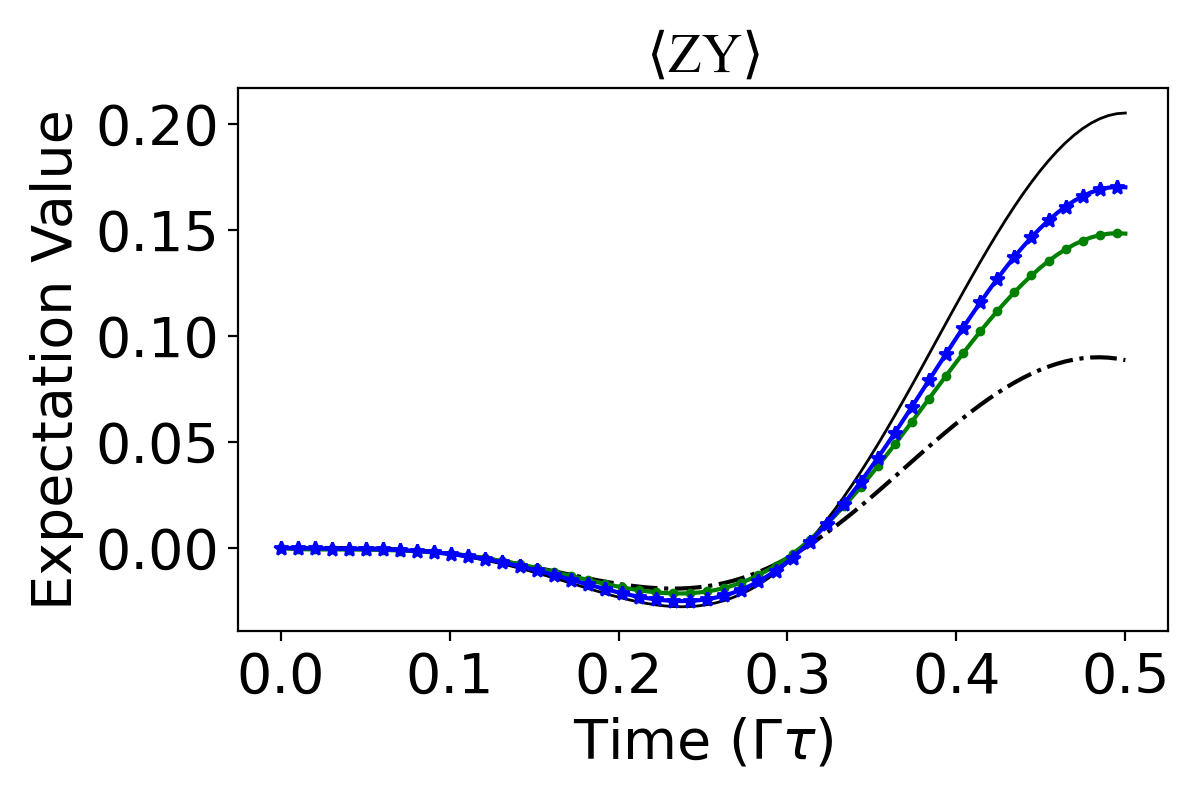}
\includegraphics[width=0.23\linewidth]{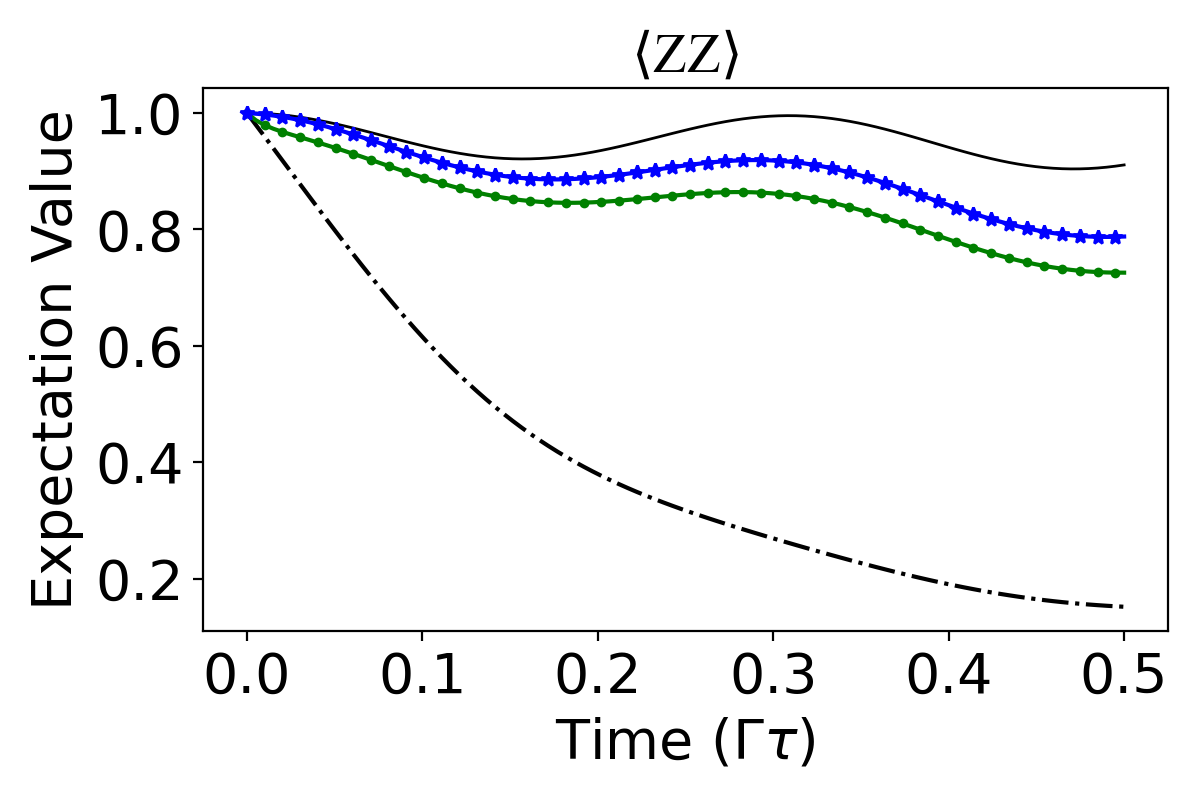}
\includegraphics[width=0.42\linewidth]{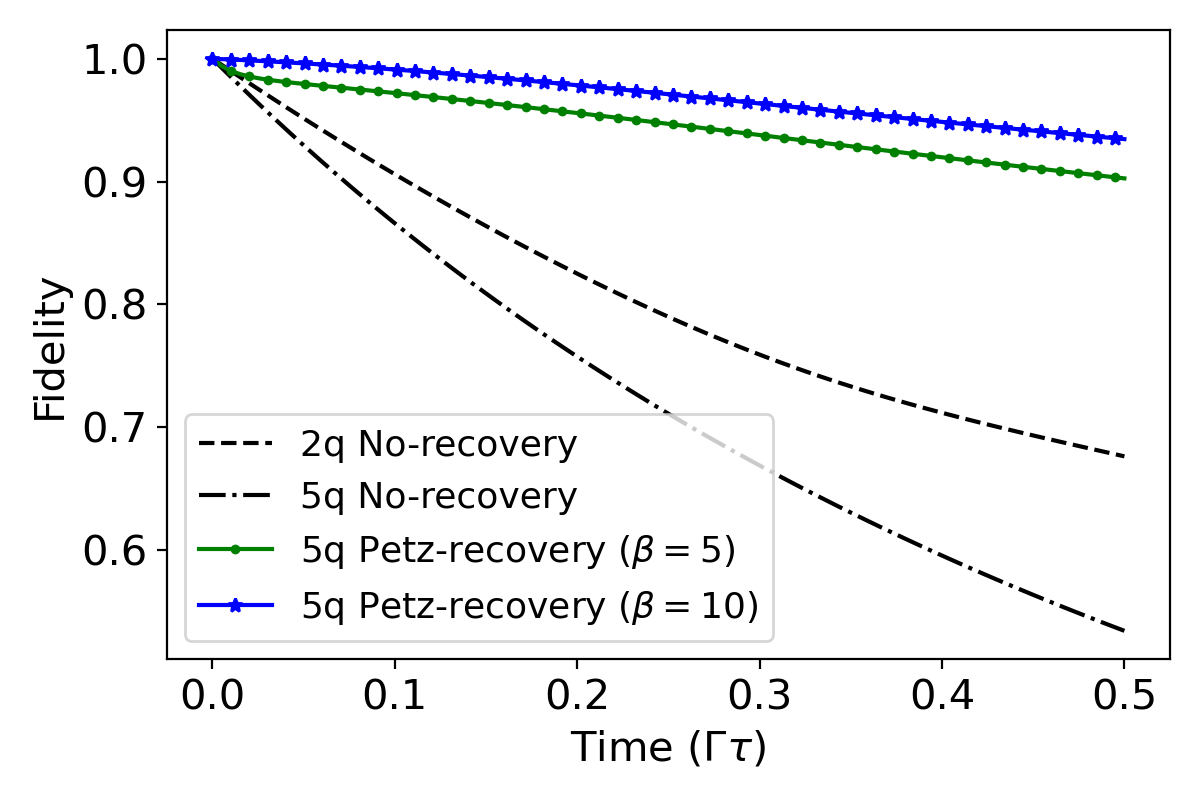}
\caption{Evolution of a two-qubit logical state encoded in five physical qubits and its fidelity between the noise-free state. The asterisk points represent the expectation values after recovery, while solid and dashed lines represent those for noise-free and noisy dynamics, respectively. Dot-dashed lines refer to a noisy two-qubit dynamics without encoding.}
\label{fig:Suppl_QEC_2q_all}
\end{figure}




\end{document}